\newenvironment{myalgo}[1]%
{
\begin{center}
\begin{boxedminipage}{0.8\linewidth}
\begin{center}
\textbf{\texttt{#1}}
\end{center}
\rm
\begin{tabbing}
....\=...\=...\=...\=...\=  \+ \kill
} %
{\end{tabbing} 
\end{boxedminipage} \end{center} 
}
\newtheorem{theorem}{Theorem}[section]
\newtheorem{definition}[theorem]{Definition}
\newtheorem{lemma}[theorem]{Lemma}
\newtheorem{corollary}[theorem]{Corollary}
\newtheorem{proposition}[theorem]{Proposition}
\newtheorem{example}{Example}
\newtheorem{claim}[theorem]{Claim}
\newcommand{\Expect}{{\rm I\kern-.3em E}}
\newcommand{\bN}{\mathbb{N}}
\newenvironment{proof}{\par\noindent{\bf Proof\ }}{\hfill\BlackBox\\[2mm]}
\newcommand{\BlackBox}{\rule{1.5ex}{1.5ex}}
\def\moverlay{\mathpalette\mov@rlay}
\def\mov@rlay#1#2{\leavevmode\vtop{%
   \baselineskip\z@skip \lineskiplimit-\maxdimen
   \ialign{\hfil$\m@th#1##$\hfil\cr#2\crcr}}}
\newcommand{\charfusion}[3][\mathord]{
    #1{\ifx#1\mathop\vphantom{#2}\fi
        \mathpalette\mov@rlay{#2\cr#3}
      }
    \ifx#1\mathop\expandafter\displaylimits\fi}
\newcommand{\namedref}[2]{\hyperref[#2]{#1~\ref*{#2}}}
\renewcommand{\eqref}[1]{Equation~(\ref{#1})}
\newcommand{\thmref}[1]{Theorem~\ref{#1}}
\newcommand{\logBOS}{\widetilde{\textup{OS}}^b_{\log}}
\newcommand{\ppoly}{\textup{P/poly}}
\newcommand{\lpoly}{\textup{L/poly}}
\newcounter{todocounter}
\newcommand{\todonum}{\stepcounter{todocounter}{(\thetodocounter)}}
\def\shownotes{1}   
\newcommand{\authnote}[2]{{ $\ll$\textsf{\footnotesize \todonum\  #1 notes:  #2}$\gg$}}
\newcommand{\authnote}[2]{}
\begin{document}

\title{Stateless Computation}
\author{Danny Dolev\\
Hebrew University of Jerusalem\\
\texttt{dolev@cs.huji.ac.il}
\and 
Michael Erdmann\\
Carnegie Mellon University\\
\texttt{me@cs.cmu.edu}
\and 
Neil Lutz\\
Rutgers University\\
\texttt{njlutz@rutgers.edu}
\and 
Michael Schapira\\ 
Hebrew University of Jerusalem\\
\texttt{schapiram@cs.huji.ac.il}
\and 
Adva Zair\\
Hebrew University of Jerusalem\\
\texttt{adva.zair@mail.huji.ac.il}
}
\date{\today}

\maketitle

\thispagestyle{empty}

\setcounter{page}{0}

\begin{abstract}
We present and explore a model of \emph{stateless} and \emph{self-stabilizing} distributed computation, inspired by real-world applications such as routing on today's Internet. Processors in our model do not have an internal state, but rather interact by repeatedly mapping incoming messages (``labels'') to outgoing messages and output values. While seemingly too restrictive to be of interest, stateless computation encompasses both classical game-theoretic notions of strategic interaction and a broad range of practical applications (e.g., Internet protocols, circuits, diffusion of technologies in social networks). We embark on a holistic exploration of stateless computation. We tackle two important questions: (1) Under what conditions is self-stabilization, i.e., guaranteed ``convergence'' to a ``legitimate'' global configuration, achievable for stateless computation? and (2) What is the computational power of stateless computation? Our results for self-stabilization include a general necessary condition for self-stabilization and hardness results for verifying that a stateless protocol is self-stabilizing. Our main results for the power of stateless computation show that labels of logarithmic length in the number of processors yield substantial computational power even on ring topologies. We present a separation between unidirectional and bidirectional rings ($\textup{L/poly}$ vs. $\textup{P/poly}$), reflecting the sequential nature of computation on a unidirectional ring, as opposed to the parallelism afforded by the bidirectional ring. We leave the reader with many exciting directions for future research.
\end{abstract} 

\newpage
\section{Introduction}

We put forth a model of \emph{stateless} distributed computation and initiate the systematic exploration of the power of such computation. We first discuss the motivation for this model and then present an exposition of the model and our results.

\subsection{Motivation}
One of our main motivating examples is routing with the Border Gateway Protocol (BGP). BGP establishes routes between the independently administered networks that make up the Internet, and can be regarded as the glue that holds the Internet together. A BGP-speaking router continuously (1) receives update messages from each of its neighbors, announcing routes to different destinations (IP address prefixes) on the Internet, (2) selects the best available route to each destination (according to its local preferences), and (3) propagates its newly-selected routes to neighboring BGP routers. An important and extensively-studied desideratum is for this decentralized route-selection process to converge to a stable routing configuration from \emph{any} initial configuration of the routing system~\cite{griffin2002stable}. 

While BGP route selection involves maintaining state, in terms of internal memory, this is for the sole purpose of recording the last route-advertisement (per IP prefix) received from each neighbor. Indeed, BGP's route-selection can be modeled as a function that maps the most recent messages received from neighbors to routing choices (see~\cite{griffin2002stable}). BGP is a prominent example for environments in which all nodes repeatedly ``best respond'' to each other's most recent choices of actions. Other examples of ``best-response dynamics''~\cite{hart2005adaptive} include additional network protocols~\cite{karp2000gpsr}, diffusion of technologies in social networks~\cite{morris2000contagion}, circuits with feedback loops, and more, as discussed in~\cite{JSW11,JLSW15, engelberg2013best}. Such environments, and others, fall within our framework of stateless computation, in which computational nodes are modeled as not having an internal state (i.e., memory), but rather interact by repeatedly mapping incoming messages (``labels'') to outgoing messages and output values. Since such applications are inherently ongoing and susceptible to transient faults, we focus on \emph{self-stabilizing} protocols.

\subsection{Modeling Stateless Computation}

Consider a distributed network of size $n$, in which every node (processor) receives an external input, $x_i$. The processors compute a global function, $f(x_1, \ldots, x_n)$, by repeatedly exchanging messages. We consider computation in which nodes have no internal \emph{state}. Instead, each node is equipped with a reaction function that maps incoming messages from neighbors to (1) outgoing messages to neighbors and (2) an output value, based on the node's input. We  abstract the communication model and refer to a message between two nodes as a \emph{label} on the edge connecting them. Edge labels are assumed to come from a finite set $\Sigma$, the \emph{label space}. A protocol is a specification of the label space and the reaction functions. 

Self-stabilizing distributed systems enjoy an important robustness property: the ability to recover from any transient fault, provided that the processor code and input remain intact. The notion of self-stabilization was introduced in a seminal paper by Dijkstra~\cite{dijkstra1982self}. A system is self-stabilizing if, regardless of its initialization, it is guaranteed to arrive at a global ``legitimate state'' in a finite number of steps. We consider two definitions of a legitimate state. From an algorithmic perspective, a legitimate state is a state in which the output value of every node has converged to $f(x_1, \ldots, x_n)$.  A stronger stabilization condition requires not only a correct convergence of the outputs, but also the convergence of the communication between the nodes, i.e., that at convergence all reaction functions be at a fixed point. We call the first condition \emph{output stabilization} and the latter \emph{label stabilization}. 
In practice, label stabilization can be translated to a reduction in communication overhead and bandwidth. Hence, such a property is clearly appealing in the context of algorithm design for distributed networks.

\subsection{Our Results}

We embark on a systematic exploration of stateless computation. Two sets of questions arise naturally from a distributed computing viewpoint: (1) Under what conditions is self-stabilization of stateless computation guaranteed? and (2) What is the computational power of our model (and its limitations)?

We model the asynchronous nature of distributed environments as a (possibly adversarial) \emph{schedule} that determines which nodes will be \emph{activated} in every time step. Upon activation, a node updates its outgoing labels according to its current incoming label profile, according to its reaction function. We consider the notion of $r$-fair schedule, which captures that each node must be activated at least once in every $r$ consecutive time steps, and we define protocols that converge for every $r$-fair schedule as $r$-stabilizing. Our main general impossibility result (Section~\ref{sec:asynch}) states that simply the existence two stable labelings implies that the protocol cannot be label $(n-1)$-stabilizing. This result imposes an upper bound for the asynchronous robustness of such protocols. We then show that this bound is tight, when we exhibit a protocol that converges for every $r$-fair schedule, for $r < n-1$. As best-response dynamics (with unique best-responses) is a specific realization of stateless computation, our impossibility result implies new nonconvergence results for the broad range of environments that fall within this category, including BGP routing, congestion control, asynchronous circuits, and diffusion of technologies, discussed and formalized in~\cite{JSW11,JLSW15}.

We present in Section~\ref{ssec:hardness} two complementary hardness results, establishing that determining whether a protocol is $r$-stabilizing is both computationally and communicationally hard. We prove that determining whether a protocol is $r$-stabilizing is PSPACE complete, and requires at least exponential number of bits of communication. As these results are proven with respect to \emph{all} possible values of $r$, our proofs are applicable even when assuming synchronized communication (i.e., for $r=1$).

We next turn our attention (in~\cref{sec:os,sec:ls}) to the question of the computational power of stateless computation. We focus on synchronous computation and consider two complexity measures: the \emph{round complexity}, defined as the maximum number of rounds (time units) it takes the protocol to converge, and the \emph{label complexity}, defined as the length of the labels in binary encoding, $\log(|\Sigma|)$. We provide straightforward general upper bounds on the label complexity and round complexity of any function $f$, showing that a linear number of rounds and linear label length are sufficient to compute any function. We show that there exist hard functions that require labels of linear length, matching the general upper bound. We thus investigate what functions can (and cannot) be computed when the label complexity is restricted. 

We first examine output-stabilizing protocols in Section~\ref{sec:os}. Our investigation reveals that even in the seemingly simplest network topologies, such as the unidirectional and bidirectional rings, stateless computation is quite powerful. We show that protocols on the unidirectional ring have the same computational power as branching programs of polynomial size, 
$\lpoly$. On the bidirectional ring, we show that protocols with polynomial round complexity essentially have the same computational power as Boolean circuits of polynomial size, i.e., they can decide the languages in $\ppoly$. Our results imply that proving super-logarithmic lower bounds in the output-stabilizing scheme is linked to resolving fundamental open
questions in complexity theory.

In Section~\ref{sec:ls}, we examine the computational power of label-stabilizing protocols. We first present a general method for proving lower bounds on the label complexity of label-stabilizing protocols on arbitrary graphs, and we  utilize this method to prove a linear lower bound and a logarithmic lower bound on the label complexity of protocols on ring topologies for specific functions (equality and majority, respectively). 

We conclude and discuss directions for future research in Section~\ref{sec:conc}.

\subsection{Related Work}

Our notion of \emph{output stabilization} is the central objective of research on self-stabilization in distributed computing. The key difference from prior research on this topic is the statelessness restriction. Our notion of \emph{label stabilization}, in contrast, corresponds to the more specific notion of \emph{silent} self-stabilizing algorithms~\cite{dolev1999memory}. There is a large body of literature on the design of silent self-stabilizing algorithms for various kinds of tasks (e.g.,~\cite{huang1992self,kosowski2005self,afek1990memory,cournier2009new}). 
Our hardness results therefore translate to results for the self-stabilization of any silent algorithms and our impossibility result translates to impossibility of self-stabilization of \emph{stateless} silent algorithms.  
A widely studied measure in the silent self-stabilization literature is the memory requirements of the nodes' public registers, known as the \emph{compactness} criterion. This measure is analogous to label complexity, so our lower-bounding method from Section~\ref{sec:ls} can be applied to silent self-stabilizing protocols in stateless settings.



Systems in which strategic nodes repeatedly \emph{best respond} to each others' actions are often stateless. Such systems, which include interdomain routing on the Internet, asynchronous circuits, and congestion control, were studied by Jaggard et al.~\cite{JSW11,JLSW15}. The immediate precursor of our work is~\cite{JSW11}, which analyzes convergence of best-response dynamics in asynchronous environments. Our results for non-self-stabilization strengthen and generalize the main results in~\cite{engelberg2013best,JSW11} by extending them to stateless computation in general and to all $r$-fair schedules, closing some of the open questions posed in~\cite{JSW11}. Also, while~\cite{JSW11,JLSW15} focus on asynchronous environments, our investigation also encompasses the computational power of stateless computation.

\section{Model and Observations}\label{sec:model}

We consider a model of distributed computation on a  strongly connected directed graph $G=([n],E)$, where each node represents a single processor. Each node $i\in[n]$ has a private input $x_i$ from an input space $X$. Informally, the nodes (processors) are \emph{stateless}, in the sense that a processor cannot store information. A node $i$ can communicate with another node $j$ if and only if the edge $(i,j)$ is in $E$, and this communication is captured by $i$ assigning a \emph{label} to that edge. We formalize the interaction between nodes below.

\subsection{Schedules and Stateless Protocols} 

Let $\Sigma$ be a nonempty finite set that shall henceforth be called the \emph{label space}. Each node $i$ is equipped with a \emph{reaction function} mapping the labels of $i$'s incoming edges and $i$'s input value $x_i$ to an assignment of labels to $i$'s outgoing edges and an output value in $\{0,1\}$. Formally, the reaction function of node $i$ is a deterministic mapping
\[\delta_i: \Sigma^{-i} \times \{0,1\} \to\Sigma^{+i} \times \{0,1\}\,,\]
where $-i$ and $+i$ are i’s sets of incoming and outgoing edges, respectively. 

A schedule is a function, $\sigma:\mathbb{N}^+ \to 2^{[n]}$ that maps time units, $t = 1,2,\ldots$ to a nonempty subset of nodes, $\sigma(t)\subseteq [n]$. $\sigma(t)$ specifies, for each time step $t$, which nodes are \emph{activated} at every time step. Upon activation, a node applies its reaction function to its incoming labels and input and updates its outgoing labels and output accordingly. 
A schedule is \emph{fair} if for every $j\in [n]$ there exists an infinite sequence of time steps in which $j\in \sigma(t)$. For any positive integer $r$, we say that a schedule is $r$-fair if every node is activated at least once in every sequence of $r$ consecutive time steps.

A \emph{stateless protocol} $A=(\Sigma,\delta)$ specifies the label space and the vector $\delta=(\delta_1,\ldots,\delta_n)$ of individual reaction functions.  Because the reaction functions are deterministic, a global input $(x_1,\ldots,x_n)\in \{0, 1\}^n$, an \emph{initial labeling} $\ell^0$ and a schedule $\sigma$ completely determine all subsequent labelings and outputs. Formally, at every time step $t\in\bN$,  and every $i\in \sigma(t)$, $i$'s outgoing labeling and output are given by:
\[
 (\ell_{+i}^{t}, y_i^{t}) = \delta_i(\ell_{-i}^{t-1}, x_i)
\] 
In aggregate, the functions $\delta_i$ together with the schedule $\sigma$ define a global transition function:
\[
\delta: \Sigma^E\times \{0, 1\}^n\times 2^{[n]}\to \Sigma^E\times \{0, 1\}^n
\]
Satisfy,
\[
(\ell^{t}, y^{t}) = \delta(\ell^{t-1}, x, \sigma(t))
\]


\subsection{Self-Stabilization and Computation}

We seek \emph{self-stablizing} protocols, which always converge to a global ``legitimate state'' regardless of initial conditions. We consider two possible formalizations of this concept: \emph{output} $r$-\emph{stabilization} and \emph{label} $r$-\emph{stabilization}. A stateless protocol is output $r$-stabilizing if, for every possible input $(x_1,\ldots,x_n)$ , every initial labeling $\ell^0$, and every $r$-fair scheduling $\sigma$, the output sequence of every node $i$'s, $y_i^1,y_i^2,\ldots$ converges. For a protocol to be label $r$-stabilizing, we additionally require that the sequence of labelings $\ell^0,\ell^1,\ell^2,\ldots$ always converges, i.e., that all reaction functions $\delta_i$ reach a fixed point. We assume that the reaction functions and inputs remain intact throughout the computation and are not subjected to transient faults.

Consider the objective of computing some function $f:X^n\to Y$ on the global input $(x_1,\ldots, x_{n})$ in a stateless manner. We say that an output $r$-stabilizing protocol \emph{computes} a function $f:X^n\to Y$ if the output sequence $y_i^1,y_i^2,\ldots$ of each node $i\in[n]$ converges to $f(x_1,\ldots,x_n)$. We say that a family of protocols $A_1,A_2,\ldots$ $r$-\emph{decides} a language $\mathcal{L}\subseteq \{0,1\}^*$ if for every $n\in\bN$, the protocol $A_{n}$ computes the \emph{indicator function} for $\mathcal{L}$ on $\{0,1\}^n$, i.e., the function $f_n:\{0,1\}^n\rightarrow \{0, 1\}$ defined by $f_n(x) = 1$ if and only if $x\in \mathcal{L}$. 

\subsection{Round Complexity and Label Complexity}\label{sec:cm}

Let $G=([n],E)$ be a directed graph and $A_n$ a self-stabilizing protocol on $G$. We define the \emph{round complexity}, $R_n$, of $A_n$ as the maximum, over all inputs $(x_1, ... , x_n) \in \{0,1\}^n$ and all initial edge labelings $\ell\in\Sigma^E$, of the number of time steps it takes for the protocol to converge. We define the \emph{label complexity}, $L_n$, as $\log(|\Sigma|)$, i.e., the length of a label in binary encoding. Our results regarding the computational power of stateless computation (as opposed to self-stabilization) focus on the scenario of synchronous interaction between nodes, i.e., that the schedule is $1$-fair and so, in every time step, all nodes are activated. 


We proceed by presenting some observations about the relationship between round complexity and label complexity and a general upper bound on the label complexity of Boolean functions. The omitted proofs appear in Appendix \ref{apdx:bounds}.

\begin{proposition}
\label{prop:radius}
Let $f:\{0, 1\}^n\rightarrow \{0, 1\}$ be a non constant Boolean function on a graph $G=([n],E)$, and let $r$ be the graph's radius. Then, for every output-stabilizing protocol, $r \leq R_n$. 
\end{proposition}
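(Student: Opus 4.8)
The plan is to show that information about any single input coordinate cannot reach a distant node too quickly, while a non-constant $f$ forces every node's output to depend on at least one such coordinate; combining these yields the lower bound.

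First I would extract a \emph{sensitive coordinate}. Since $f$ is non-constant there are inputs $a,b$ with $f(a)\neq f(b)$, and walking along a Hamming path from $a$ to $b$ (flipping one bit at a time) some single flip must change the value of $f$. This produces a coordinate $j\in[n]$ and two inputs $x,x'\in\{0,1\}^n$ that agree everywhere except on coordinate $j$, with $f(x)\neq f(x')$. For a protocol that output-stabilizes to $f$, every node's output must eventually equal $f$, hence must ``see'' the bit $x_j$.

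The core is an information-propagation (``light-cone'') lemma, which I would prove by induction on the time step $t$, for the synchronous ($1$-fair) schedule and an arbitrary but fixed, input-\emph{independent} initial labeling $\ell^0$. The claim is that both $i$'s outgoing labels $\ell_{+i}^t$ and its output $y_i^t$, viewed as functions of the global input, can depend on $x_j$ only when there is a directed path from $j$ to $i$ of length at most $t-1$, i.e. $d(j,i)\le t-1$. The base case $t=1$ holds because $(\ell_{+i}^1,y_i^1)=\delta_i(\ell_{-i}^0,x_i)$ depends only on the fixed $\ell^0$ and on $x_i$, hence on $x_j$ only if $j=i$ (distance $0$). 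For the inductive step, at time $t$ node $i$ reads $\ell_{-i}^{t-1}$, whose entries are labels written at time $t-1$ by $i$'s in-neighbors $k$; by the inductive hypothesis each such label depends on $x_j$ only if $d(j,k)\le t-2$, and prepending the edge $(k,i)$ gives $d(j,i)\le d(j,k)+1\le t-1$, while the remaining direct dependence on $x_i$ again only contributes the case $j=i$.

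Finally I would instantiate the lemma at a node $i^\ast$ achieving the eccentricity of $j$, namely $d(j,i^\ast)=\max_{u} d(j,u)\ge r$, using that the radius is the minimum eccentricity and hence cannot exceed $j$'s eccentricity. For every $t\le r\le d(j,i^\ast)$ the lemma gives that $y_{i^\ast}^t$ is independent of $x_j$, so $y_{i^\ast}^t(x)=y_{i^\ast}^t(x')$; since $f(x)\neq f(x')$, at least one of the two runs has an output at $i^\ast$ that differs from its correct limit $f$ at time $r$, so the protocol has not yet converged by round $r$ on that input/initial-labeling pair. As $R_n$ is the maximum convergence time over all such pairs, this gives $r\le R_n$ (in fact $R_n>r$). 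The main obstacle is stating and proving the light-cone lemma cleanly: tracking the off-by-one between directed distance and time, and making explicit that the initial labeling $\ell^0$ encodes no information about the input, which is exactly what anchors the induction.
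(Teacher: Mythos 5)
Your proof is correct, and it rests on the same underlying principle as the paper's: a single sensitive coordinate $j$ (obtained by the same Hamming-path argument) whose influence must traverse a directed distance of at least the eccentricity of $j$, which is at least the radius $r$. The packaging, however, is genuinely different. The paper builds one adversarial run: it first runs the protocol on $x$ until convergence, then restarts with input $x'$ but with the \emph{converged} labeling $\ell^{R_n}$ as the initial labeling; since every node's local view initially matches the converged run on $x$, stale outputs equal to $f(x)\neq f(x')$ persist at distance-$d$ nodes for $d$ steps, so that single run needs at least $r$ rounds. You never use this two-phase initialization: instead you prove a formal light-cone lemma by induction on $t$ and compare the two runs on $x$ and $x'$ launched from the \emph{same} arbitrary, input-independent labeling, concluding that the outputs at a node realizing $j$'s eccentricity agree through round $r$, so at least one of the two runs is still incorrect at round $r$. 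What the paper's route buys is brevity and one explicit hard instance; what your route buys is rigor and generality: the paper's propagation claim ("it takes at least $d$ time steps for nodes at distance $d$...") is asserted informally, and your induction is precisely the statement needed to justify it, it works from any initial labeling rather than a specially prepared one, and it yields the marginally stronger conclusion $R_n > r$. Both arguments, as written, are for the synchronous schedule, which is consistent with the paper's convention for round complexity.
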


\begin{proposition}\label{prop:bi}
Let $G=([n], E)$ be a directed graph and $A_n=(\Sigma, \delta)$ a stateless protocol. Then, 
\[
R_n \leq |\Sigma|^{|E|} = (2^{L_n})^{|E|}
\]  
\end{proposition}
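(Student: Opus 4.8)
The plan is to observe that, in the synchronous setting underlying the definition of $R_n$ (where $\sigma(t) = [n]$ for every $t$), the evolution of the global edge labeling is a \emph{deterministic} discrete dynamical system on the finite set $\Sigma^E$. Once the input $x = (x_1,\ldots,x_n)$ is fixed, the global transition function induces a map $T_x : \Sigma^E \to \Sigma^E$, where $T_x(\ell)$ is the labeling component of $\delta(\ell, x, [n])$, so that the trajectory from any initial labeling $\ell^0$ is exactly $\ell^t = T_x^{\,t}(\ell^0)$. Since $|\Sigma^E| = |\Sigma|^{|E|}$, this is the finite state space whose size will govern the bound.

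First I would invoke the standard eventual-periodicity (``rho shape'') structure of iterating a function on a finite set: for any $\ell^0$, the sequence $\ell^0, \ell^1, \ell^2, \ldots$ must repeat a state within its first $|\Sigma|^{|E|}+1$ terms, and writing $\mu$ for the length of the pre-period (the tail traversed before the cycle is entered) and $\lambda$ for the cycle length, the states $\ell^0, \ldots, \ell^{\mu+\lambda-1}$ are pairwise distinct, whence $\mu + \lambda \le |\Sigma|^{|E|}$. This is a direct pigeonhole argument requiring no computation.

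Next I would connect this structure to the two notions of convergence. For a label-stabilizing protocol the labeling sequence converges, and since the dynamics is deterministic on a finite set, convergence is possible only if the cycle is a single fixed point, i.e.\ $\lambda = 1$; the protocol has then stabilized at time $\mu \le |\Sigma|^{|E|} - 1$. For an output-stabilizing protocol the labeling may genuinely cycle, so the extra observation needed is that each output vector $y^t$ is a function of $\ell^{t-1}$ alone; hence once the labeling has entered its cycle (at time $\mu$), the outputs become periodic with period dividing $\lambda$, and output convergence forces them to be constant along the cycle. In either case convergence has occurred by the time the trajectory reaches its cycle, namely within $\mu + 1 \le |\Sigma|^{|E|}$ steps. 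Taking the maximum over all inputs $x$ and all initial labelings $\ell^0$ then yields $R_n \le |\Sigma|^{|E|} = (2^{L_n})^{|E|}$, using $|\Sigma| = 2^{L_n}$.

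The argument is essentially self-contained, and the only point demanding a little care is the output-stabilizing case: one must rule out the possibility that the outputs keep changing forever even though the underlying labelings have already entered a cycle. The main obstacle is thus simply making precise that an output is determined by the previous labeling, so that a labeling trapped in a cycle cannot produce a strictly convergent yet nonconstant output sequence; everything else reduces to the standard finite-state pigeonhole bound.
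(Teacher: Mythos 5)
Your proof is correct and follows essentially the same route as the paper, whose entire argument is the one-line observation that the time to convergence cannot exceed the number of possible system configurations $|\Sigma|^{|E|}$. Your write-up is simply a careful elaboration of that pigeonhole argument, with the welcome extra care of handling the output-stabilizing case, where the labeling may cycle nontrivially while the outputs (being functions of the labelings) must then be constant along the cycle.
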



\begin{proposition}
\label{prop:genupperbd}
Let $G=([n], E)$ be a strongly connected directed graph and $f:\{0, 1\}^n\rightarrow \{0, 1\}$ a Boolean function. There exists a label-stabilizing protocol, $A_n$, that computes $f$, with $L_n=n+1$ and $R_n=2n$.  
\end{proposition}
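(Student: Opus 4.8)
The plan is to give an explicit construction, exploiting the fact that the designer of the protocol knows $G$, $n$, and $f$, so that any structural information about $G$ may be hard-wired into the reaction functions. The idea is to \emph{flood} the inputs to a single designated node, have that node evaluate $f$, and then \emph{broadcast} the resulting one-bit answer back out. Concretely, I would fix a root $\rho\in[n]$; because $G$ is strongly connected, there is a shortest-path in-tree $T_{\mathrm{in}}$ (all edges oriented toward $\rho$) and a shortest-path out-tree $T_{\mathrm{out}}$ (all edges oriented away from $\rho$), each of depth at most $n-1$. Each label will be a pair $(v,b)\in\{0,1\}^n\times\{0,1\}$: the vector $v$ records believed input values, one slot per node, and $b$ carries the broadcast answer. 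This gives $L_n=n+1$. A node's reaction function reads only the incoming edges relevant to its roles in $T_{\mathrm{in}}$ and $T_{\mathrm{out}}$, and writes the constant $\mathbf{0}$ on every outgoing edge used by neither tree.

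In the collection phase, on its $T_{\mathrm{in}}$-parent edge each node writes the vector obtained by merging the vectors arriving from its $T_{\mathrm{in}}$-children with its own input placed in its own slot. This computation is \emph{grounded}: a leaf of $T_{\mathrm{in}}$ has no children, so the vector it emits depends only on its own fixed input and is therefore correct from its first activation onward. An easy induction on height then shows that a node at height $h$ in $T_{\mathrm{in}}$ emits the correct aggregate of its subtree from round $h+1$ on, so $\rho$ holds the full, correct input vector from round at most $n$. From that round $\rho$ sets $b=f(x_1,\dots,x_n)$ and emits it on its $T_{\mathrm{out}}$-edges. In the dissemination phase each non-root node copies the answer bit from its $T_{\mathrm{out}}$-parent, so a node at depth $d$ in $T_{\mathrm{out}}$ outputs the correct value from round $n+d\le 2n$. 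Hence every output has converged to $f(x_1,\dots,x_n)$ by round $2n$, giving $R_n=2n$.

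It remains to argue \emph{label} stabilization, which I expect to be the crux: from an \emph{arbitrary} initial labeling every edge label must reach a fixed point, not merely every output. Because each activation overwrites outgoing labels as a deterministic function of the current incoming labels, and because the recomputation is grounded at the leaves of $T_{\mathrm{in}}$ (and, for the broadcast, at $\rho$), arbitrary initial garbage cannot persist: the same induction shows that the $T_{\mathrm{in}}$-edge out of a height-$h$ node is constant from round $h+1$, the $T_{\mathrm{out}}$-edge out of a depth-$d$ node is constant from round $n+d$, and every unused edge carries $\mathbf{0}$ from round $1$, so all labels are fixed by round $2n$. The main obstacle is precisely this self-correction within the stated budgets: one must design the reaction functions so that no incoming garbage is ever \emph{accumulated} (ruling out a naive monotone OR-style merge, which could never clear spurious $1$'s), while still fitting into $n+1$ label bits and converging in $2n$ rounds. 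The grounded tree construction is what makes both correctness and stabilization follow from a single induction on tree depth.
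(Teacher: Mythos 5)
Your proposal is correct and takes essentially the same approach as the paper: the paper likewise roots two spanning trees at a designated node (an in-tree $T_2$ and an out-tree $T_1$), uses $(n+1)$-bit labels consisting of an $n$-bit input vector plus a one-bit answer, aggregates inputs up the in-tree, evaluates $f$ at the root, broadcasts the bit down the out-tree, and zeroes all unused edge components, with the same grounded induction giving $R_n=2n$. One small remark: the paper's merge is precisely the coordinate-wise OR you cautioned against, and it is perfectly safe here—since the protocol is stateless, each activation recomputes outgoing labels from scratch from the current incoming labels, so spurious $1$'s are flushed (rather than accumulated) by exactly the induction on tree height that you describe.
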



\section*{Part I: Self-Stabilization of Stateless Protocols}

\section{Impossibility Result for Self-Stabilization}\label{sec:asynch}

We explore under what conditions a stateless protocol can be self-stabilizing. We exhibit a necessary condition for self-stabilization. Before presenting this condition, we present required terminology. A \emph{stable labeling} for a protocol $(\Sigma, \delta)$ on a graph $G$ is a labeling $\ell \in\Sigma^{E}$ that is a fixed point of every reaction function $\delta_i$, meaning that $\delta_i(\ell_{-i}, x_i) = (\ell_{+i}, y_i)$ for every node $i$. 

\begin{theorem}\label{thm_impossible}
No stateless protocol with at least two distinct stable labelings is label $(n-1)$-stabilizing.
\end{theorem}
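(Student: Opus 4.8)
The plan is to prove the statement constructively: assuming two distinct stable labelings $\ell,\ell'$, I would exhibit a single initial labeling together with a single $(n-1)$-fair schedule whose execution never converges. First I would record a reduction. Since $\Sigma^E$ is finite and any $(n-1)$-fair schedule activates every node infinitely often, a convergent execution must become eventually constant at a labeling that is fixed under \emph{every} node, i.e., at a stable labeling. Hence it suffices to produce an $(n-1)$-fair schedule and an initial labeling whose orbit forever avoids stable labelings. The cleanest way to guarantee this is to make the schedule periodic and the orbit a cycle of labelings, each of which has at least one node that is not at a fixed point (an \emph{active} node); such a cycle can contain no stable labeling, since a stable labeling is absorbing under all activations.

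Next I would work out the guiding case of the directed ring $1\to 2\to\cdots\to n\to 1$, which also explains the exponent $n-1$. On a ring a stable labeling is determined by the value on one edge together with the per-node maps, and two orbits cannot merge without coinciding, so $\ell$ and $\ell'$ differ on \emph{every} edge; write $a_i=\ell(e_i)$ and $a_i'=\ell'(e_i)$ with $a_i\neq a_i'$. Take the initial labeling equal to $\ell$ except on a single edge $e_m$, where it carries the $\ell'$-value $a_m'$ -- a ``token''. Exactly two nodes are active: the head node $m+1$, which sees the $\ell'$-value $a_m'$ and wants to write $a_{m+1}'$, and the tail node $m$, which sees the $\ell$-value $a_{m-1}$ and wants to revert $e_m$ to $a_m$. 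Activating both simultaneously moves the token one edge forward, reproducing the same pattern shifted by one. Iterating rotates the token around the ring with period $n$; every labeling in the orbit carries exactly one $\ell'$-edge and has two active nodes, so none is stable and the execution never converges. For fairness, each node is skipped only while the token traverses the other $n-1$ edges, so its maximal idle gap is exactly $n-1$ and the schedule is (tightly) $(n-1)$-fair. This is precisely where the bound comes from.

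Finally I would lift this to an arbitrary strongly connected $G$. The idea is to route a single moving ``seam'' between an $\ell$-region and an $\ell'$-region along a closed walk covering all nodes, activating only seam nodes so the boundary advances one step at a time and eventually returns, yielding a periodic, perpetually-active (hence non-stable) orbit whose idle gaps stay below $n-1$ by splitting the walk appropriately.

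I expect the main obstacle to be \emph{frontier mixing} on general graphs: a node may have several incoming edges, and when these mix $\ell$- and $\ell'$-labels its reaction output is constrained by neither stable labeling and may be an uncontrolled third value, destroying the clean translation that made the ring work. The crux is therefore to choose the seam and the activation order so that every node, at the moment it is activated, sees a \emph{pure} (all-$\ell$ or all-$\ell'$) incoming neighborhood, while still keeping at least one active node in every configuration, so that no stable labeling -- not only $\ell$ or $\ell'$ -- is ever reached. Should pointwise control of the frontier prove too fragile on dense topologies, I would fall back to an indistinguishability argument: assume the protocol is label $(n-1)$-stabilizing and use the $(n-1)$ slack to interleave the approach toward $\ell$ with the approach toward $\ell'$ until the two stable labelings are forced to coincide, contradicting $\ell\neq\ell'$.
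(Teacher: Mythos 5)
Your ring argument is correct and genuinely illuminating: on the unidirectional ring two distinct stable labelings must differ on every edge, the rotating token yields a periodic non-converging run, and the idle gap of $n-2$ pinpoints exactly where the bound $n-1$ comes from. But the theorem is stated for an arbitrary strongly connected topology, and that is where your proposal stops being a proof. The ``frontier mixing'' obstacle you name in your last paragraph is not a technical nuisance to be engineered around; it is fatal to the constructive strategy. Take the paper's own tightness example on the clique $K_n$, whose two stable labelings are all-$0$ and all-$1$. In any ``seam'' configuration in which each node's outgoing edges are uniformly $0$ or uniformly $1$, a node has a pure incoming view only if \emph{all} other nodes have the same type; so once at least two nodes of each type exist, no node may be activated under your purity rule, and if exactly one node deviates, the unique pure-view node is the deviator itself, whose activation returns the system to a stable labeling in one step. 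Hence on dense topologies there is simply no pure-seam oscillation to construct: any witnessing execution must pass through mixed views, where the hypothesis (mere existence of two stable labelings) places no constraint on what the reaction functions do. Indeed, the oscillation exhibited in the paper's $K_n$ example is sustained precisely by the protocol's behavior on mixed views.

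Your fallback --- ``assume $(n-1)$-stabilization and interleave the approaches to $\ell$ and $\ell'$ until they are forced to coincide'' --- is not a fallback but the entire theorem, and it is where all the work lies. The paper's proof is exactly such a non-constructive valency argument: it builds a configuration graph whose vertices are pairs (labeling, countdown vector), the countdown recording how long each node may still remain inactive under an $(n-1)$-fair schedule; it restricts attention to a carefully chosen subgraph (those countdown vectors whose sorted form dominates $(1,2,\ldots,r-1,r,r)$); it shows some initial configuration lies in no attractor region (two initial labelings differing in a single edge label and lying in different attractor regions merge after one activation of that edge's tail, a contradiction); it proves a merging lemma for configurations whose labelings differ only on one node's out-edges and whose countdowns are pointwise ordered; and it finally derives a contradiction at a critical configuration, all of whose successors lie in attractor regions. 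None of this machinery is recoverable from your sketch, so as it stands the proposal establishes the theorem only for the unidirectional ring, not in the generality claimed.
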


We show that this result is tight in the sense that a system can be label $(n-2)$-stabilizing even if multiple stable labelings exist. Our proof of Theorem~\ref{thm_impossible} utilizes the classical notion of valency argument from distributed computing theory~\cite{fischer1985impossibility}. Applying a valency argument to our context, however, involves new challenges and the introduction of new ideas. Specifically, the proof involves defining a global state-transition space in a manner that captures all system dynamics under label $(n-1)$-fair schedules. A delicate valency argument is then applied to a carefully chosen \emph{subset} of this space so as to obtain the impossibility result. Our result shows that not only do multiple ``stable states'' induce instability, but this is true even under reasonably fair schedules (linear fairness). This both generalizes and strengthens the nonconvergence result in~\cite{JSW11} and, consequently, has immediate implications for game theory, routing and congestion control on the Internet, diffusion of technologies in social networks, and asynchronous circuits.


\begin{proof}
Let $G=([n], E)$ be a directed graph and $A_n=(\Sigma, \delta)$ a stateless protocol. Suppose $\hat{\ell_1}, \hat{\ell_2}\in \Sigma^{E}$ are two distinct stable labelings, and assume for the sake of contradiction that $A_n$ is label $(n-1)$-stabilizing. For simplicity we ignore the private input and output bits of the nodes, as our only concern is whether or not the labeling sequence stabilizes.

We build a directed \emph{states-graph} $G'=(V', E')$ as follows. Let $m=|E|$ and $r=n-1$. The vertex set is $V' = \Sigma^{m}\times [r]^{n}$. Each node $(\ell, x)\in V'$ consists of a labeling component, $\ell\in \Sigma^{E}$, and a countdown component $x\in [r]^{n}$ that counts for every node in $G$ the number of time steps it is allowed to be inactive. Denote by $V_0' = \{ (\ell, r^{n}) : \ell\in \Sigma^{m}  \}$ the set of initialization vertices. To define the edges set we utilize the following countdown mapping, $c: [r]^n \times 2^{[n]}\rightarrow [r]^n$, that satisfy for every $i\in [n]$:
      \[
      c(x, T)_i = 
      \begin{cases}
        r & \text{If } i \in T\\
        x_i - 1              & \text{Otherwise}
      \end{cases}
      \]
For every node $(\ell, x)$, and every $T \in \{ S :  \{i: x_i=1\} \subseteq S \}$, there is a directed edge from $(\ell, x)$ to $(\delta(\ell, T), c(x, T))$. Observe that every run of the protocol with a $r$-fair schedule is represented in $G'$ as a path from an initial vertex to some subset:
\[ 
T\subseteq \{ (\ell, x): \ell\in \{\hat{\ell}_1, \hat{\ell}_2\}  \text{ , } x\in [r]^n\}
\]. 

We restrict our attention to the subgraph $H = G'(U)$, where:
 \[
 U = \{ (\ell, x): \ell\in\Sigma^{m} \text{ , } s(x) \geq (1, 2, \ldots, r-1, r, r) \},
 \]
and $s(x)$ is the increasingly sorted vector of $x$.

We define the \emph{attractor region} of the stable labeling $\hat{\ell}_1$ as the set of all vertices from which every path reaches a vertex in $\{(\hat{\ell}_1, x ): x\in [r]^n\}$. Namely, from any vertex that is in the attractor region of $\hat{\ell}_1$ we are guaranteed to converge to $\hat{\ell}_1$. We define the attractor region of $\hat{\ell}_2$ as the set of all vertices from which every path reaches a vertex in $\{(\hat{\ell}_i, x ): x\in [r]^n\}$, where $\hat{\ell}_i \neq \hat{\ell}_1$ is a stable state.

To prove that the protocol is not label $r$-stabilizing, we show that existence of a cycle from $i\in V_0'$ (note that $V_0'\subseteq U$) in which every node is not in any attractor region. Assume that $H$ doesn't contain such cycle.
  
\begin{lemma}
\label{lem:VAv0exists}
There exists a node $i\in V_0'$ that does not belong to any attractor regions.
\end{lemma}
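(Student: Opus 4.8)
The plan is to run a valency argument in the spirit of Fischer--Lynch--Paterson~\cite{fischer1985impossibility}, treating ``membership in an attractor region'' as univalence. Write $A_1$ for the attractor region of $\hat{\ell}_1$ and $A_2$ for the attractor region of $\hat{\ell}_2$ (the vertices all of whose paths reach some stable labeling other than $\hat{\ell}_1$). The first thing I would record is that $A_1$ and $A_2$ are \emph{disjoint}: since a stable labeling is frozen once reached, any single path reaches at most one stable labeling, so no vertex can have all its paths reach $\hat{\ell}_1$ while also having all its paths reach a labeling distinct from $\hat{\ell}_1$. The goal is to exhibit a vertex of $V_0'$ in neither region, so I argue by contradiction and assume $V_0' \subseteq A_1 \cup A_2$. (Note this step does not use the standing ``no bad cycle'' hypothesis.)

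Next I would exploit connectivity of the label cube. The extreme initial vertices satisfy $(\hat{\ell}_1, r^n) \in A_1$ and $(\hat{\ell}_2, r^n) \in A_2$, because each is already carrying a stable labeling and is therefore frozen there. Since any two elements of $\Sigma^m$ are joined by a sequence of single-edge relabelings, I obtain a chain of initial vertices from $(\hat{\ell}_1, r^n)$ to $(\hat{\ell}_2, r^n)$ whose consecutive labelings differ on exactly one edge. Under the contradiction hypothesis every vertex of the chain lies in $A_1 \cup A_2$, and the two endpoints lie in different regions, so some consecutive pair has one vertex $(\ell, r^n) \in A_1$ and the other $(\ell', r^n) \in A_2$, with $\ell, \ell'$ differing on a single edge $e = (j,k)$ (writer $j$, reader $k$).

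The heart of the argument is to merge these two vertices into a common successor. Because $\ell$ and $\ell'$ agree off $e$, the incoming labels of every node other than $k$ coincide in the two labelings; in particular the writer $j$ produces identical outgoing labels in both, overwriting the discrepancy on $e$. I would therefore activate $T = \{j, j'\}$ for some $j' \notin \{j,k\}$ (which exists since $n \geq 3$), deliberately excluding the reader $k$ so that no activated node ever reads the differing label. This forces $\delta(\ell, T) = \delta(\ell', T) =: \ell^\ast$, so $(\ell^\ast, c(r^n, T))$ is a successor reachable from both $(\ell, r^n)$ and $(\ell', r^n)$ by a single edge. Membership now propagates backward: every path out of $(\ell^\ast, c(r^n,T))$, prefixed by the relevant edge, is a path out of $(\ell, r^n)$ and also a path out of $(\ell', r^n)$, so the common successor would lie in $A_1 \cap A_2 = \emptyset$, a contradiction. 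Hence $V_0' \not\subseteq A_1 \cup A_2$, proving the lemma.

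I expect the main obstacle to be guaranteeing that the merged successor actually stays inside $U$, so that it is a genuine vertex of $H$ and the attractor reasoning applies to it. This is exactly what the definition of $U$ is engineered for: activating the two-element set $T$ resets two countdowns to $r$ while decrementing the others, and the requirement $s(x) \geq (1, 2, \ldots, r-1, r, r)$ is precisely strong enough that the sorted countdown of $c(r^n, T)$, namely $(r-1, \ldots, r-1, r, r)$, still dominates it. Verifying this domination, checking that the two activation edges are legal in $H$ (the constraint $\{i : x_i = 1\} \subseteq T$ holds vacuously at $r^n$), and confirming the single-edge merge identity $\delta(\ell, T) = \delta(\ell', T)$ are the routine but load-bearing computations I would carry out in full.
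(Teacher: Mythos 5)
Your proof is correct (for $n\ge 3$) and its skeleton matches the paper's: assume every vertex of $V_0'$ lies in an attractor region, observe that $(\hat{\ell}_1,r^n)$ and $(\hat{\ell}_2,r^n)$ lie in different regions, walk a chain of single-edge relabelings between them to locate two initial vertices in different regions whose labelings differ on exactly one edge $(j,k)$, and merge the two into a common successor, contradicting disjointness of the regions. Where you genuinely depart from the paper is the merge step, and your version is the more careful one. The paper activates only the writer, $T=\{j\}$; this does give $\delta(\ell,\{j\})=\delta(\ell',\{j\})$, but the resulting countdown $c(r^n,\{j\})$ has sorted vector $(r-1,\dots,r-1,r)$, whose second-largest entry is $r-1<r$, so the paper's common successor lies \emph{outside} $U$ and hence outside $H$. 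Since the paper has just restricted attention to $H$, and both the no-bad-cycle hypothesis and the downstream Lemmas~\ref{lem:VAseq} and~\ref{lem:VAimp3} operate inside $H$, the natural reading is that attractor-region membership constrains only paths of $H$; under that reading, membership of $(\ell,r^n)$ in a region says nothing about a successor outside $H$, and the paper's one-line contradiction has a gap that your choice $T=\{j,j'\}$ with $j'\notin\{j,k\}$ closes: your successor's sorted countdown $(r-1,\dots,r-1,r,r)$ dominates the threshold defining $U$, so the merge takes place inside $H$ and the backward propagation of region membership is legitimate. You also make explicit two facts the paper uses silently: disjointness of the attractor regions (via frozenness of stable labelings) and the exclusion of the reader $k$ from the activated set (automatic for the paper's singleton, load-bearing for your pair). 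One caveat applies to both arguments: when $n=2$ (so $r=1$) your third node $j'$ does not exist, and the paper's singleton activation is not even a legal edge since all countdowns equal $1$ force $T=[n]$; both proofs implicitly assume $n\ge 3$.
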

\begin{proof}
Suppose every node $i\in V_0'$ is in some attractor region. As there are two attractor regions, at least one node is in the attractor region of $\hat{\ell}_1$ and at least one node is in the attractor region of $\hat{\ell}_2$. Therefore, there also exist two nodes $(\ell, r^n),(\ell',r^n)\in V_0'$ that are in different attractor regions and their labeling components differ in one coordinate, $(i, j)$.
Observe that $(\delta(\ell, \{i\}), c(r^n, \{i\}))=(\delta(\ell', \{i\}) , c(r^n, \{i\}))$. Thus, they both reach the same node, a contradiction.
\end{proof}

\begin{lemma}
\label{lem:VAseq}
Let $(\ell,x), (\ell',y)\in H$ be two vertices satisfying the following:
\begin{enumerate}
\item $\{e: \ell_e\neq \ell'_e\}\subseteq +j$ for some $j\in [n]$.
\item $x \leq y$.
\end{enumerate}
Then there exists an attractor region that is reachable from both.
\end{lemma}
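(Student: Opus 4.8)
The plan is to reduce the statement to steering both $(\ell,x)$ and $(\ell',y)$ to a single common vertex of $H$, and then to realize this merge by an explicit schedule carried out in two phases. The reduction is immediate under the standing assumption that $A_n$ is label $(n-1)$-stabilizing: every $r$-fair run out of a vertex converges to a stable labeling $\hat\ell_k$, and every $\hat\ell_k$-labeled vertex lies in an attractor region (being a fixed point of all $\delta_i$, its labeling never changes again, so it trivially satisfies the ``every path reaches $\hat\ell_k$'' condition, with $k=1$ placing it in the region of $\hat\ell_1$ and $k\neq 1$ placing it in the region of $\hat\ell_2$). Hence every vertex of $H$ can reach some attractor region, and if we drive both $(\ell,x)$ and $(\ell',y)$ to a common vertex $w$, then any attractor region reachable from $w$ is reachable from both, which is exactly the claim.

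First I would merge the labelings. The hypothesis $\{e:\ell_e\neq\ell'_e\}\subseteq +j$ says the two labelings agree everywhere except on the edges leaving $j$; in particular they agree on all edges entering $j$, so $\ell_{-j}=\ell'_{-j}$ and therefore $\delta_j(\ell_{-j},x_j)=\delta_j(\ell'_{-j},x_j)$. Thus activating $j$ recomputes the labels on $+j$ to the same value in both configurations, and since all other edges already coincide the two labelings become identical. The decisive scheduling point is that we must not activate an out-neighbor of $j$ while the discrepancy is still live: such a node reads the differing label on its incoming edge from $j$ and would propagate the discrepancy into its own outgoing edges. Applying a common activation sequence to both configurations preserves the invariant that the set of disagreeing edges stays inside $+j$, up to the step in which $\{j\}$ is activated in isolation, after which the labelings agree.

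Second I would merge the countdowns. Once the labelings are equal, any common activation set leaves them equal (identical states undergo identical transitions) while evolving the countdowns via $c(\cdot,T)$. Since activating a node resets its countdown to $r$ in both runs and not activating it decrements both by one, a single common step that activates exactly the coordinates on which the two countdown vectors still differ makes them equal, delivering the common vertex $w$. The domination hypothesis $x\le y$ is what keeps the common schedule legal for both runs: because $x_i\le y_i$, whenever $y_i=1$ we also have $x_i=1$, so the forced set $\{i:y_i=1\}$ is contained in $\{i:x_i=1\}$; thus every activation set admissible from $(\ell,x)$ is automatically admissible from $(\ell',y)$, and the domination is preserved by every common step.

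The main obstacle is the interplay between the labeling-merge step and the feasibility constraints defining $H$: to activate $j$ in isolation I need a step at which no out-neighbor of $j$ is forced (has countdown $1$) and at which the target still lies in $U$. This is exactly where the spread condition $s(x)\ge(1,2,\ldots,r-1,r,r)$ and the choice $r=n-1$ earn their keep, since in any vertex of $U$ at most one node has countdown $1$, so there is at most one forced node at a time, and the well-separated countdowns leave enough slack to activate the out-neighbors of $j$ early, resetting them to $r$, so that $j$ can later be fired alone before any of them becomes forced. Verifying that such a preparatory schedule exists and keeps every intermediate vertex inside $U$, so that the entire merge takes place within $H$, is the delicate part of the argument; the domination $x\le y$ then guarantees that the same schedule remains admissible from $(\ell',y)$ throughout.
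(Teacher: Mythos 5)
Your opening reduction is sound: under the standing contradiction hypothesis every vertex can reach some attractor region, so steering both vertices to a common vertex would indeed finish the proof, and your use of $x\le y$ to make one schedule simultaneously admissible for both runs is exactly the right observation (it is also how the paper uses that hypothesis). The genuine gap is in Phase 1: the ``preparatory schedule'' that lets you fire $j$ in isolation does not exist in general. First, inside $H$ no step may activate a single node at all: after such a step at most one countdown equals $r$ while every non-activated countdown is at most $r-1$, so the sorted vector violates $s(\cdot)\ge(1,2,\ldots,r-1,r,r)$; every legal $H$-step has $|T|\ge 2$ (the paper records this fact explicitly). Second, and fatally, the preparation is self-defeating: activating an out-neighbor of $j$ while the discrepancy is live makes that neighbor read differing incoming labels, so for adversarial reaction functions its outgoing labels now differ too; unless $j$ is co-activated and that neighbor is the \emph{only} out-neighbor activated, the disagreement then spans two nodes' outgoing edges and the confinement invariant is destroyed beyond repair. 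Concretely, take $G=K_n$, reaction functions that propagate every incoming difference, and the tight countdown configuration with sorted values exactly $(1,2,\ldots,r-1,r,r)$: at every step exactly one node is forced, it is always an out-neighbor of the current discrepancy holder, and the only move that does not spread the disagreement is to activate precisely the holder together with the forced node --- which merely migrates the discrepancy to the forced node and recreates the same tight configuration. This repeats forever, so no finite schedule (in $H$, or even in $G'$) ever produces equal labelings.

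The paper's proof is built around exactly this obstruction: it never tries to merge the two runs. It perpetually activates $\{j,j'\}$ where $j'$ is the minimum-countdown node among the others (that specific choice is what keeps both runs inside $H$), so the discrepancy either dies or migrates to $+j'$, confined to one node forever. The two runs are then infinite $r$-fair runs, hence by the contradiction hypothesis both converge to stable labelings; and two stable labelings that differ only on one node's outgoing edges must be identical, since activating that node yields equal outputs from equal incoming profiles while stability forbids any change. Thus the limits coincide and a single attractor region is reachable from both. In short, your argument requires the discrepancy to vanish in finitely many steps, which is impossible in the worst case; the paper's argument only requires it to remain confined to a single node indefinitely, which is always achievable.
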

\begin{proof}
Let $(\ell,x)$ and $(\ell',y)$ be such vertices. If $\ell=\ell'$ then observe that $(\delta(\ell,[n]) , c(x, [n]))=(\delta(\ell', [n]) , c(y, [n]))$. Otherwise, let $j$ be the node that some of its outgoing labels differ in $\ell, \ell'$. Let $j'\in \text{argmin}_{\bar{j}\neq j} x_{\bar{j}}$. Consider the vertices $i = (\delta(\ell, \{j,j'\}), c(x, \{j, j'\}))$ and  $i' = (\delta(\ell', \{j,j'\}), c(y, \{j, j'\}))$. Notice that,

\begin{enumerate}
\item For any subset $T$, $c(x, T)\leq c(y, T)$.
\item $(1, 2, \ldots, r-1, r, r) \leq c(x, \{j, j'\})$, thus both vertices are in $H$.
\item Observe that $\ell_{-j} = \ell'_{-j}$ therefore, $\delta(\ell, \{j,j'\})_{+j} = \delta(\ell', \{j,j'\})_{+j}$. 

\item Only if $(j,j')\in E$ the resulted labels may be different, and in this case,
$\{e: \delta(\ell, \{j,j'\})_e\neq \delta(\ell', \{j,j'\})_e\}\subseteq +j'$.
\end{enumerate}
Hence, we can apply the same activation for $i$ and $i'$ and repeat this infinitely many times to create two infinite sequences in $H$ such that the two conditions hold. From our assumption, both sequences eventually reach a stable labeling. As both labeling differ only at the outgoing labels of some node and since activating this node lead to the same labeling, it must be that they converge to the same labeling. 

\end{proof}

\begin{lemma}
\label{lem:VAimp3}
There exists a node $i\in U$, such that $i$ is not in any attractor region, and for every $(i,i')\in E(H)$, $i'$ is in some attractor region.
\end{lemma}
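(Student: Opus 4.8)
The plan is to produce the required ``boundary'' vertex by a greedy walk through the vertices of $H$ that lie outside every attractor region, exploiting that $H$ is finite together with the standing assumption that $H$ contains no cycle of non-attractor vertices reachable from $V_0'$.

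Before walking, I would record two structural facts about $H$. First, $H$ is finite, since its vertex set is contained in $V' = \Sigma^m \times [r]^n$. Second, every vertex of $H$ has at least one outgoing edge \emph{inside} $H$: for any $(\ell, x) \in U$ the activation $T = [n]$ produces the edge to $(\delta(\ell,[n]), c(x,[n]))$, and since $c(x,[n]) = r^n$ has sorted vector $(r, \ldots, r) \geq (1, 2, \ldots, r-1, r, r)$, this target again lies in $U$. The second fact guarantees that the quantification ``for every $(i,i') \in E(H)$'' in the conclusion ranges over a nonempty set of neighbors, which is what makes the lemma usable downstream.

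Now the walk itself. By \lemref{lem:VAv0exists} there is a vertex $v_0 \in V_0' \subseteq U$ lying outside every attractor region; start there. As long as the current vertex has an $H$-out-neighbor that also lies outside every attractor region, move to one such neighbor; otherwise halt. By construction every visited vertex is outside every attractor region and is reachable from $v_0 \in V_0'$ along edges of $H$. If the walk never halted, then by finiteness some vertex would repeat, and the walk segment between its two visits would be a cycle of non-attractor vertices reachable from $V_0'$, contradicting the standing assumption. Hence the walk halts at a vertex $i \in U$ lying outside every attractor region and having no non-attractor $H$-out-neighbor; since each vertex is either in some attractor region or in none, this says precisely that every $i'$ with $(i,i') \in E(H)$ lies in some attractor region, which is the claim.

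The walk, finiteness, and pigeonhole steps are routine; the point that needs care is to run the argument \emph{inside} $H$ rather than inside $G'$---keeping each step an edge of the induced subgraph (which is automatic) and, more importantly, checking that the terminal vertex still has outgoing $H$-edges so that the universally quantified conclusion is not vacuous. Verifying that the $T=[n]$ activation preserves membership in $U$ is exactly what discharges this obstacle.
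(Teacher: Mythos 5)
Your proof is correct and takes essentially the same approach as the paper's: starting from the non-attractor vertex supplied by Lemma~\ref{lem:VAv0exists}, you extend a path along non-attractor out-neighbors and use finiteness of $H$ to convert non-termination into a cycle of non-attractor vertices reachable from $V_0'$, contradicting the standing assumption---exactly the paper's argument, phrased as a halting greedy walk rather than a contradiction. Your extra observation that every vertex of $U$ has an $H$-out-neighbor (via the activation $T=[n]$, since $c(x,[n])=r^n$ keeps the target in $U$) is a worthwhile addition the paper leaves implicit, though it is needed only for the downstream use of the lemma, not for the lemma itself.
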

\begin{proof}
Assume that there is no such node. Then any node that is not in any attractor region has a neighbor that is also not in any attractor region. Thus, starting with arbitrary such node (which we know exists, from Lemma~\ref{lem:VAv0exists}) we can create a path that contains only such nodes. But since our graph is finite this path must be a cycle, a contradiction to our assumption.  
\end{proof}

Let $(\ell,x)\in V(H)$ as in Lemma~\ref{lem:VAimp3}. For every $T$ that correspond to an outgoing edge from $(\ell,x)$ the following conditions follow directly from the definition of $H$: (1) $|T|\geq 2$, and (2) If there exists $j$ s.t. $x_j=1$ (there is at most one such vertex) then $j\in T$.

Thus, its edges set is either $2^{[n]}- \bigcup_i\{i\}-\emptyset$ or $\{T\cup\{j\}: T\in 2^{[n] -\{j\}}-\emptyset\}$. Since $(\ell,x)$ is not in any attractor region but its neighbors are either in the attractor region of $\hat{\ell}_1$ or $\hat{\ell}_2$, and combining with our observation that its neighbor set is either the power set of $[n]$ or $[n]\setminus \{j\}$, there must exist $i\in [n]$ and $S\subseteq [n]-\{i\}$ such that $(\delta(\ell, S), c(x, S))$ and $( \delta(\ell, S\cup\{i\}), c(x, S\cup\{i\}))$ belong to two different attractor regions. 

Observe that  $\delta(\ell, S)$ and $ \delta(\ell, S\cup\{i\})$ might differ only in $+i$, and that $c(x, S)\leq c(x, S\cup\{i\})$ so the conditions of Lemma~\ref{lem:VAseq} hold.
Applying the lemma, we get that there is an attractor region that is reachable from both, a contradiction.

\end{proof}

\noindent{\bf Tightness of Theorem~\ref{thm_impossible}.} We next show, via an example, that the impossiblity result of this section is tight.

\begin{example}
We show a protocol over the clique $K_n$ with label space $\Sigma = \{0, 1\}$. For each node $i\in[n]$, the reaction function is
\[
\delta_i(\ell) = 
\begin{cases}
0^{n-1} & \text{if every incoming edge is labeled }0  \\
1^{n-1}              & \text{otherwise}\,.
\end{cases}
\]
Observe that $0^{n(n-1)}$ and $1^{n(n-1)}$ are both stable labelings. Also observe that if at some time step there is more than one node $i$ whose outgoing edges are all labeled $1$, then the labeling sequence is guaranteed to converge to $1^{n(n-1)}$. Hence, an oscillation occurs only if at every time step exactly one node's outgoing edges are all labeled 1. This implies that for an oscillation to occur, (1) two nodes must be activated at each time step, and (2) if $i$'s outgoing edges area labeled 1 at time $t$, then $i$ must be activated at time $t+1$. No $r$-fair schedule for $r<n-1$ can satisfy these two conditions, so the labeling sequence must converge. 
\end{example}

\noindent{\bf Implications for games, routing, circuits, and more.} We point out that best-response dynamics can be formalized in our model as the scenario that both the output set of each node and the labels of each of its outgoing edges are the same set and represent that node's (player's) possible strategies. Thus, a corollary of Theorem~\ref{thm_impossible} is a generalization of the main result in~\cite{JSW11} (Theorem 4.1) for this setting, showing that multiple equilibria imply instability even under linearly-fair schedules. Consequently, Theorem~\ref{thm_impossible} immediately implies strong nonconvergence results for the spectrum of environments formalized in~\cite{JSW11}  (Section 5): routing and congestion control on the Internet, asynchronous circuits, and diffusion of technologies in social networks. We refer the reader to~\cite{JSW11,JLSW15} for more details.

\section{Complexity of Verifying Self-Stabilization} \label{ssec:hardness}


    
We now turn our attention to the complexity of deciding whether a protocol is $r$-stabilizing. We present two complementary results for the communication complexity and the computational complexity models. Our first result establishes that the communication complexity of verifying $r$-stabilization is exponential in the number of nodes $n$.

\begin{theorem}\label{thm:cc}
Let $A_n = (\Sigma,(\delta_1, \ldots, \delta_n))$ be a stateless protocol. Consider the following 2-party communication problem. Alice receives as input a description of $\delta_1$ and Bob receives $\delta_2$. They both have access to $\delta_3, \ldots, \delta_n $, and they need to decide whether $A_n$ is label $r$-stabilizing. For eny value of $r$, deciding whether $A_n$ is label $r$-stabilizing requires $2^{\Omega(n)}$ bits of communication.    
\end{theorem}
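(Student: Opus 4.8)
The plan is to prove this communication lower bound via a reduction from a known hard communication problem, most naturally \textsc{Disjointness} (or \textsc{Equality}), whose deterministic communication complexity is $\Omega(N)$ on inputs of length $N$. I would set $N = 2^{\Theta(n)}$ so that the $\Omega(N)$ lower bound for the underlying problem translates into the claimed $2^{\Omega(n)}$ bound. The key is to design a gadget in which Alice's reaction function $\delta_1$ encodes her input $a\in\{0,1\}^N$, Bob's $\delta_2$ encodes his input $b\in\{0,1\}^N$, and the remaining shared reaction functions $\delta_3,\dots,\delta_n$ are fixed so that the protocol $A_n$ is label $r$-stabilizing if and only if $a$ and $b$ satisfy the target predicate (say, $a\cap b = \emptyset$). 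Since $\delta_1$ is a function on $\Sigma^{-1}\times\{0,1\}$, and we want it to carry $N=2^{\Theta(n)}$ bits, I would take a label space $\Sigma$ of size polynomial in $n$ (so $L_n = O(\log n)$), with enough incoming edges to node $1$ that the domain $\Sigma^{-1}$ has size $2^{\Theta(n)}$; then Alice can freely program $\delta_1$ as a truth table indexed by her secret string.

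The heart of the construction is to arrange the fixed part of the network so that stabilization is \emph{controlled} by a single ``probe'' interaction that queries Alice's and Bob's encoded bits at a common coordinate. Concretely, I would build a small subgraph that cycles through coordinate indices $k\in[N]$ (encoded in the labels, which is why logarithmic label length suffices to name an exponential number of coordinates) and, for each $k$, routes the pair $(a_k,b_k)$ into a local sub-protocol that is designed to \emph{oscillate} precisely when $a_k = b_k = 1$ and to reach a fixed point otherwise. By leveraging the earlier \emph{impossibility} machinery---specifically the existence of two distinct stable labelings forcing non-stabilization (Theorem~\ref{thm_impossible})---I can make a single ``bad'' coordinate plant two stable labelings (or a genuine oscillating cycle under the schedule), thereby breaking label $r$-stabilization, while a ``clean'' input leaves a unique stable labeling to which the system converges. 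Thus $A_n$ is label $r$-stabilizing if and only if there is no coordinate with $a_k=b_k=1$, i.e.\ iff $a$ and $b$ are disjoint, and a protocol deciding $r$-stabilization with $C$ bits of communication yields a \textsc{Disjointness} protocol with $C$ bits, forcing $C = \Omega(N) = 2^{\Omega(n)}$.

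I expect the main obstacle to be making the reduction work \emph{uniformly in $r$}, and in particular ensuring the gadget's oscillation is realizable under the relevant fairness regime rather than being an artifact of an adversarial schedule that an $r$-fair schedule cannot produce. One must check that when a bad coordinate exists, the constructed oscillation (or the two-stable-labelings obstruction) genuinely defeats label $r$-stabilization for the target value of $r$---and conversely that a disjoint input really does converge under \emph{every} $r$-fair schedule, so there are no spurious oscillations introduced by the coordinate-cycling control logic. A secondary technical point is confirming that node $1$'s reaction function, built as a size-$2^{\Theta(n)}$ truth table, is a legitimate description that Alice holds privately and that no information about $a$ leaks through the shared $\delta_3,\dots,\delta_n$; the standard fooling-set or rectangle argument underlying \textsc{Disjointness} then closes the bound. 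I would therefore front-load the gadget design and the schedule-realizability verification, and treat the reduction bookkeeping (indexing coordinates in the labels, decoupling Alice's and Bob's encodings) as routine once the gadget is correct.
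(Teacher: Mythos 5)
Your high-level architecture is the same as the paper's: Alice and Bob encode exponentially long strings as truth tables of $\delta_1,\delta_2$, the shared nodes implement an exponentially long ``coordinate cycle'' in the joint labeling, and the protocol oscillates iff the inputs violate the target predicate, so a standard fooling-set bound for \textsc{Equality}/\textsc{Disjointness} finishes the job. However, there are two genuine gaps. First, your fallback of ``planting two stable labelings'' and invoking Theorem~\ref{thm_impossible} only defeats label $r$-stabilization for $r\geq n-1$; the theorem must hold for \emph{every} $r$, including $r=1$, where the only $1$-fair schedule is the synchronous one, and the paper's own tightness example shows that two stable labelings are perfectly compatible with convergence under all $r$-fair schedules with $r<n-1$. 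So for small $r$ you need a genuine synchronous oscillation, not a multiplicity-of-fixed-points argument. Second, a single \textsc{Disjointness} reduction cannot be uniform in $r$: if the two sets share just one coordinate, then under a $1$-fair schedule Alice's and Bob's nodes are forced to react at every step, and the natural implementations of your gadget kill the cycle as soon as it passes a non-intersecting coordinate, i.e., the protocol \emph{does} stabilize synchronously even though the sets intersect. This is exactly why the paper uses two separate reductions: from \textsc{Equality} for $r\leq 2^{n/2}$ (where oscillation requires $x=y$, so the synchronous run cycles forever, and snake segments of length $3r$ guarantee that any disagreement forces convergence under every $r$-fair schedule), and from \textsc{Set-disjointness} only for $r\geq 2^{n/2}$ (where the large $r$ lets the adversarial schedule activate nodes $1,2$ rarely, precisely at the intersecting coordinate).

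The other missing piece is the combinatorial device that makes the ``coordinate-cycling control logic'' survive adversarial asynchrony and, crucially, arbitrary initial labelings (self-stabilization quantifies over all of them). The paper does this with a \emph{snake-in-the-box}: a maximal induced cycle of length $\Omega(2^n)$ in the hypercube $Q_{n-2}$, whose vertices are the joint $1$-bit labelings of the shared nodes. Because the cycle is induced and consecutive vertices differ in one coordinate, at every snake vertex \emph{exactly one} shared node wants to flip, so the traversal advances deterministically under any fair schedule and cannot be derailed or short-circuited; all off-snake labelings are oriented back toward the snake or toward the unique fixed point. Your proposal treats this schedule-robustness and initialization-robustness as ``routine bookkeeping,'' but it is the heart of the proof: a generic distributed counter over a polynomial-size alphabet, with several nodes needing coordinated updates to advance, admits corrupted intermediate states and spurious oscillations that would break the ``disjoint implies stabilizing'' direction. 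Until you specify a cycling gadget with the one-node-wants-to-move property (or an equivalent), and split the argument by regimes of $r$, the reduction does not go through.
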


Our proof utilizes combinatorial ``snake-in-the-box" constructions, as in~\cite{JSW11}. To prove the theorem for every possible value of $r$ two separate reductions, corresponding to two regimes of $r$ (``higher'' and ``lower'' values) are presented. The proof appears in Appendix~\ref{apdx:hardness}.

The above communication complexity hardness result requires the representation of the reaction
functions to (potentially) be exponentially long. What if the reaction functions can be succinctly
described? We complement the above result present a strong computational complexity hardness result for the case that each reaction function is given explicitly in the form of a Boolean circuit.

\begin{theorem}\label{thm:pspace}
For every $r$, deciding whether a stateless protocol is label $r$-stabilizing is PSPACE-complete.
\end{theorem}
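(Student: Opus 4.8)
The plan is to prove both membership in PSPACE and PSPACE-hardness. Membership is the easier direction. Given a protocol in which each $\delta_i$ is a Boolean circuit, I would observe that the relevant configuration space consists of edge labelings together with a bounded countdown state per node (as in the states-graph construction of \thmref{thm_impossible}), and that ``$A_n$ is \emph{not} label $r$-stabilizing'' is equivalent to the existence of a reachable cycle in this space that avoids all stable labelings, i.e.\ an infinite non-converging run. Since a single configuration has polynomial-size description (an edge labeling is $|E| \cdot L_n$ bits and the countdowns add $O(n \log r)$ bits), non-stabilization is an NPSPACE-style reachability-of-a-cycle question, which I can decide in PSPACE by nondeterministically guessing the run step-by-step while storing only the current configuration and a step counter; applying the transition $\delta(\ell, T)$ just requires evaluating the circuits, which is polynomial time. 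Closure of PSPACE under complement (Savitch/Immerman--Szelepcs\'enyi, or just determinization of polynomial-space machines) then gives that deciding stabilization is in PSPACE.

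The substance of the theorem is PSPACE-hardness, and here the natural approach is a reduction from a canonical PSPACE-complete problem whose instances naturally encode long computations --- I would reduce from the halting/acceptance problem of a polynomial-space Turing machine (equivalently from TQBF, but the Turing-machine formulation is more convenient for simulating dynamics). The plan is to build, from a machine $M$ and input $w$, a stateless protocol $A_n$ whose label dynamics under the synchronous ($r=1$) schedule simulate the step-by-step execution of $M$ on $w$, with the edge labels collectively encoding a configuration of $M$ (tape contents, head position, control state). Each node's reaction function, implemented as a small Boolean circuit, would compute the local update of the simulated tape cell / head according to $M$'s transition function. I would arrange the gadget so that the simulation reaches a fixed point (a stable labeling) precisely when $M$ reaches an accepting halting configuration, while a rejecting or looping computation is driven into a perpetual oscillation --- for instance by inserting a ``toggle'' gadget that flips a designated label forever once a reject state is entered (or, exploiting \thmref{thm_impossible}, by ensuring the non-accepting behavior produces two or more stable labelings so that no $r$-stabilization is possible). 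Thus $A_n$ is label $r$-stabilizing if and only if $M$ accepts $w$, and since $M$ uses polynomial space the whole construction, including the circuits $\delta_i$, has polynomial size.

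The main obstacle I anticipate is engineering the dynamics so that the correspondence holds for \emph{every} value of $r$, not merely the synchronous case, as the theorem demands uniformity over $r$. Statelessness makes this delicate: a node cannot remember how many steps have elapsed, so under an adversarial $r$-fair schedule the simulated Turing-machine configuration could be corrupted by stale neighbor labels, and I must ensure the construction is robust to arbitrary activation orders within the fairness window. The cleanest route is probably to make the accepting case produce a genuinely stable labeling (a true fixed point of every $\delta_i$, hence stable under any schedule) and to make the non-accepting case produce at least two distinct stable labelings, so that \thmref{thm_impossible} immediately rules out label $(n-1)$-stabilization and a direct oscillation argument handles small $r$; I expect the bulk of the work to lie in verifying that the simulation is faithful regardless of scheduling, which is exactly the kind of schedule-robustness subtlety that the earlier impossibility proof also had to confront. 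The remaining details --- bounding $|\Sigma|$, the number of nodes, and the circuit sizes by polynomials --- should be routine once the gadget is fixed.
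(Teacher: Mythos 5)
Your PSPACE-membership argument is fine (and is actually a useful addition --- the paper's appendix proof only establishes hardness), but the hardness half has genuine gaps, and they sit exactly where the paper's real work lies. First, statelessness: you plan for each node's circuit to ``compute the local update of the simulated tape cell,'' but in this model a reaction function sees only \emph{incoming} labels --- a node cannot read its own outgoing labels, so it cannot preserve, let alone update, a tape cell it is supposed to store. Any repair forces neighbors to echo values back, and under an adversarial $r$-fair schedule those echoes go stale and can corrupt the simulation; making this sound is precisely the content of the paper's key lemma (\thmref{thm:stateful-less}): the paper first reduces \textsc{String-oscillation} to stabilization of \emph{stateful} protocols (whose reaction functions may read their own outgoing labels), and then compiles any stateful protocol on $K_n$ into a stateless one on $K_{3n}$ by triplicating each node into a ``metanode,'' adding an error symbol $\omega$, and enforcing weak/strong consistency so that the majority of each triple effectively remembers that node's label under arbitrary fair schedules. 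You flag this difficulty but supply no mechanism for it. Second, there is a quantifier mismatch in reducing from ``$M$ accepts $w$'': label $r$-stabilization quantifies over \emph{all} initial labelings, so in the accepting case you must also prove convergence from garbage labelings encoding unreachable or cycling configurations of $M$ --- something acceptance of $w$ gives you no control over. The paper avoids this by choosing \textsc{String-oscillation} as the source problem, whose question (``does there \emph{exist} an initial string from which the iteration fails to halt?'') has the same existential-over-initializations structure as non-stabilization; your plan would need an explicit reset or validation gadget, which is not described.

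Third, your fallback via \thmref{thm_impossible} is incorrect as stated: two stable labelings rule out label $(n-1)$-stabilization only, not $r$-stabilization for every $r$; indeed the tightness example immediately following \thmref{thm_impossible} is a protocol with two stable labelings that \emph{is} label $(n-2)$-stabilizing, so ``no $r$-stabilization is possible'' does not follow. The ``direct oscillation argument for small $r$'' you defer to is therefore not a detail but the actual burden --- your primary toggle-gadget route would in fact cover all $r$ at once, since any $1$-fair (synchronous) schedule is $r$-fair for every $r$, but only if the faithful-simulation problem above were solved. In short: the overall shape is right and the upper bound is correct, but the two ideas that constitute the paper's proof --- the stateful-to-stateless metanode compilation and the choice of a source problem whose quantifiers match self-stabilization --- are missing from your plan.
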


Our proof of Theorem~\ref{thm:pspace} relies on a ``self-stabilization-preserving reduction''~\cite{engelberg2013best}. We first show that the \textsc{String-oscillation} problem can be reduced to the problem of deciding whether a \emph{stateful protocol}, in which reaction functions may depend on both their incoming and their \emph{outgoing} labels, is label $r$-stabilizing. Next, we show how to construct a stateless protocol from a stateful protocol without altering its stabilization properties.  The proof appears in Appendix~\ref{apdx:hardness}.
		

\section*{Part II: Computational Power of Stateless Protocols}

\section{Output Stabilization, Branching Programs, and Circuits}\label{sec:os}

We exhibit output-stabilizing protocols and inspect their computational power in comparison to label-stabilizing algorithms. Consider the clique topology, $K_n$. Note that every Boolean function can be computed using a 1-bit label and withing one round. A similar argument can be made for the \emph{star} topology. We therefore examine poorly connected topologies to gain a better understanding on the computational power and limitations of our model. As a first step in this direction, we consider the \emph{unidirectional} and the \emph{bidirectional} rings.

The ring topology is one of the simplest network topologies, yet even this simple topology exhibits interesting and highly nontrivial phenomena. For this reason,
computation on a ring has been widely studied in a broad variety of contexts, including asynchronous communication, indistinguishable processors, lack of global knowledge, etc.~\cite{attiya1988computing, abrahamson1986probabilistic, burns1980formal, awerbuch1990trade, goldreich1986effects, pachl1982technique, moran1986gap, mansour1986bit}.

Our main positive results establish that even when considering a system with limited expressive power, in terms of relatively succinct label size (logarithmic in $n$), all problems that can be solved efficiently via centralized computation, can also be solved in our restricted model.  

We obtain a separation between protocols on the unidirectional and bidirectional rings through an exact characterization of the computational power of each of these environments. 
We show that protocols on the unidirectional ring have the same computational power as branching programs of polynomial size, i.e., they can compute the languages in $\lpoly$. This is also the class of all languages that can be decided by a logspace Turing machines that receive, along with each input $\{0,1\}^n$, an auxiliary \emph{advice} string that depends only on $n$ and is of length polynomial in $n$. This advice is given on a secondary read-only tape, separate from the input tape and work tape.

Our second theorem shows that protocols on the bidirectional ring are stronger, and have the same computational power as polynomial-sized Boolean circuits, the class $\ppoly$. This is also the class of all languages that can be decided by polynomial time Turing machines with a polynomial sized advice.

Our characterization results reveal that the expressiveness for bidirectional rings is dramatically richer. This, in some sense, reflects the sequential nature of computation on a unidirectional ring, as opposed to the parallelism afforded by the bidirectional ring. We explain below that these results imply that proving super-logarithmic lower bounds for the unidirectional or bidirectional rings would imply major breakthroughs in complexity theory.

\begin{definition}
	Given any function $h:\bN\to\bN$, a language $\mathcal{L}\subseteq\{0,1\}^*$ belongs to the class $\textup{OS}^u_{h}$ if $\mathcal{L}$ is decided by some family $A_1,A_2\ldots$ of stateless protocols such that each $A_n$ has label complexity $O(h(n))$ and runs on the unidirectional $n$-ring. 
\end{definition}

\begin{theorem}\label{thm:lpoly}
	$\textup{OS}^u_{\log}\equiv \lpoly$.
\end{theorem}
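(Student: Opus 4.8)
The plan is to prove $\textup{OS}^u_{\log}\equiv \lpoly$ by establishing the two inclusions separately, exploiting the correspondence between a unidirectional ring and the sequential read-once structure of a (log-space) branching program. Recall that $\lpoly$ equals the class of languages decided by polynomial-size branching programs, so it suffices to show that unidirectional-ring protocols with $O(\log n)$ label complexity are computationally equivalent to polynomial-size branching programs.

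For the inclusion $\lpoly \subseteq \textup{OS}^u_{\log}$, I would start from a polynomial-size branching program $P_n$ for a language $\mathcal{L}$ restricted to $\{0,1\}^n$. The key observation is that on the unidirectional $n$-ring, node $i$ receives exactly one incoming label (from node $i-1$) and produces one outgoing label (to node $i+1$), so once the single label has made a full pass around the ring it has visited every node in order. I would encode in the label the current node (or layer) of the branching program being simulated together with the identity of the current vertex of $P_n$; since $P_n$ has polynomially many vertices, this identifier has $O(\log n)$ bits, matching the label budget. Each node $i$, upon seeing an incoming label that says ``you are responsible for reading variable $x_j$ at branching-program vertex $v$,'' consults its own input bit $x_i$ when $j=i$ and advances the branching-program state along the appropriate edge, writing the successor vertex on its outgoing edge. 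The main subtlety is that a branching program may query variables in an arbitrary order, whereas the ring visits nodes cyclically; I would handle this by letting the label carry the branching-program vertex around the ring and having only the ``owner'' node (the one whose index equals the queried variable) act on it, so that after enough full rounds (polynomially many, accounted for in $R_n$) the computation reaches an accept/reject sink, whose value is then broadcast as the stabilized output. Care is needed to argue output stabilization from \emph{arbitrary} initial labelings: I would design the reaction functions so that a distinguished ``reset/start'' token is regenerated at a fixed node, guaranteeing that spurious initial labels are flushed out within one traversal and the correct computation takes over, yielding convergence of every $y_i$ to $f_n(x)$.

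For the reverse inclusion $\textup{OS}^u_{\log} \subseteq \lpoly$, I would take an output-stabilizing protocol $A_n$ on the unidirectional ring with $|\Sigma| = 2^{O(\log n)} = \poly(n)$ and build a polynomial-size branching program. Here the sequential nature of the ring is the asset: under the synchronous ($1$-fair) schedule, the labeling evolves deterministically, and by Proposition~\ref{prop:bi} the protocol converges within $|\Sigma|^{|E|}$ steps, which on the ring is $\poly(n)^{O(n)} = 2^{O(n\log n)}$. The branching program I construct reads the input bits $x_1,\dots,x_n$ in the order the ring would process them and tracks, as its state, the single label currently on the relevant edge together with the node index; because each node's outgoing label depends only on its single incoming label and its own input bit, the entire converged output is a function computable by following label transitions node-by-node. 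The state space of this simulation is the product of the label space and the node counter, which is $\poly(n)$, so the resulting branching program has polynomial width, and its length is bounded by the round complexity times $n$. The main obstacle, and the step I expect to require the most care, is bounding the length of this branching program by a polynomial: a naive bound gives exponential length because stabilization may take exponentially many rounds. I would overcome this by arguing that, although the transient may be long, the \emph{output} value at stabilization is a function of the input that depends only on the periodic/fixed-point behavior of the deterministic label dynamics, which can be extracted by a branching program of polynomial size — essentially because the single-label-per-edge structure means the reachable configuration at each node, as a function of the inputs processed so far, is itself a $\poly(n)$-state automaton. Making this extraction precise, and ensuring it respects output stabilization from all initial labelings rather than a single canonical start, is the crux of the argument.
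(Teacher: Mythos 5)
Your overall architecture (two inclusions, using the equivalence of $\lpoly$ with polynomial-size branching programs) matches the paper, and your first direction, $\lpoly \subseteq \textup{OS}^u_{\log}$, is essentially the paper's construction in different clothing: the paper has node $1$ carry a full configuration of the logspace machine (rather than a branching-program vertex) in the label, with the other nodes answering input-bit queries and forwarding, and it achieves self-stabilization by adding a counter field that forces node $1$ to re-initialize the simulation whenever the counter reaches $|Z|$. One detail of yours is off: spurious initial labels are \emph{not} flushed ``within one traversal'' --- a bogus mid-computation label keeps being processed until it reaches a sink (or until a counter overflows), which can take polynomially many traversals; this is harmless for eventual output convergence, but your stated mechanism would need to be replaced by sink-detection or a counter as in the paper.

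The genuine gap is in the direction $\textup{OS}^u_{\log} \subseteq \lpoly$, and it is exactly the step you flag as ``the crux'' and then leave unresolved. Your BP has polynomial \emph{width} (label space $\times$ node index), but you never bound its \emph{length}: you need to know that the synchronous dynamics on the unidirectional ring reaches its converged output within polynomially many steps, and your appeal to ``the reachable configuration at each node \ldots is itself a $\poly(n)$-state automaton'' addresses the state count, not the stabilization time. The paper closes this with a dedicated lemma: on the unidirectional ring, $R_n \leq n|\Sigma|$. The proof is a pigeonhole/periodicity argument on the dynamics sampled every $n$ steps --- each node's incoming label at times $0, n, 2n, \ldots$ must repeat within $|\Sigma|$ samples; once every node has repeated, the entire labeling sequence is periodic, and since the protocol is output-stabilizing the outputs along that cycle must already be constant and equal to $f(x)$; hence convergence occurs by time $n|\Sigma| = \poly(n)$. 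With that lemma, one simulates the ``diagonal'' (node $t \bmod n$ at time $t$, starting from the all-$\ell_0$ labeling) for $n|\Sigma|$ steps, tracking only a single label --- this is precisely what the paper's logspace machine with the reaction-function truth tables as advice does, and it is what your BP should do, with length $n|\Sigma|$. Also note that your worry about ``respecting output stabilization from all initial labelings rather than a single canonical start'' is backwards: universal convergence is a guarantee you are handed, not an obligation, so you may fix one canonical initial labeling (e.g., all edges labeled $\ell_0$) and simulate only that run.
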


While we do not have a complete characterization for the class $OS^b_{\log}$, we can characterize a simple variant of this class. If we allow polynomially many ``helper'' nodes whose inputs do not affect the function value, then bidirectional protocols with logarithmic label complexity and polynomial round complexity have the same computational power as circuits of polynomial size. 
 
\begin{definition}
	Given any function $h:\bN\to\bN$, a language $\mathcal{L}\subseteq\{0,1\}^*$ belongs to the class $\widetilde{\textup{OS}}^b_{h}$ if $\mathcal{L}$ is decided by some family $A_1,A_2\ldots$ of stateless protocols such that each $A_n$ has polynomial round complexity, label complexity $O(h(n))$, and runs on the bidirectional $p(n)$-ring, where $p(n)$ is polynomial in $n$.
\end{definition}

\begin{theorem}\label{thm:ppoly}
	$\widetilde{\textup{OS}}^b_{\log} \equiv \ppoly$.
\end{theorem}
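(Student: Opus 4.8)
The plan is to prove the equivalence by establishing both inclusions, $\logBOS \subseteq \ppoly$ (a protocol can be simulated by a circuit) and $\ppoly \subseteq \logBOS$ (a circuit can be simulated by a protocol). The first is the routine direction; the second, which requires embedding an arbitrary polynomial-size circuit onto a ring using only $O(\log n)$-bit labels, is where the real work lies.

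For $\logBOS \subseteq \ppoly$, I would unroll the synchronous execution of the protocol into a circuit. Fix a canonical initial labeling (say the all-zeros label on every edge); since the protocol is output-stabilizing it converges to $f_n$ from this labeling within $R_n = \poly(n)$ rounds. The global state after each round is the labeling of the $2p(n)$ directed edges, a vector of $O(p(n)\log n) = \poly(n)$ bits. Each reaction function $\delta_i$ reads two incoming labels and one input bit, a total of $O(\log n)$ bits, so its truth table has $2^{O(\log n)} = \poly(n)$ rows and is computable by a circuit of $\poly(n)$ size. Stacking one such sub-circuit per node gives a single round circuit computing the next labeling from the current one; cascading $R_n$ copies and hardwiring the helper-node inputs to their fixed values yields a circuit of total size $R_n \cdot p(n) \cdot \poly(n) = \poly(n)$ whose output is read off any node after the final round. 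The non-uniformity of $\ppoly$ lets us hardwire the (possibly non-uniform) description of $A_n$ into the $n$-th circuit.

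For $\ppoly \subseteq \logBOS$, I would start from a family of polynomial-size circuits $\{C_n\}$ for the language and, after converting each to a layered bounded-degree circuit of width $w = \poly(n)$ and depth $d = \poly(n)$, lay out its space--time diagram on the bidirectional $p(n)$-ring with $p(n) = \Theta(wd)$; the extra helper nodes supply exactly the space needed to host all the gates. The idea is to evaluate the circuit layer by layer, using the bidirectional edges to route the inter-layer wiring: a signal is carried as a token $(\text{value},\text{destination})$, where the destination index fits in $O(\log w) = O(\log n)$ bits and the value in one bit, so a token occupies a single label. Tokens are forwarded locally around the ring toward their target gates, and once a gate has received its constantly many input tokens it fires and emits its own output tokens. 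Because each token travels at most $p(n)$ hops and there are $\poly(n)$ tokens in all, pipelining the routing keeps the round complexity polynomial, matching the parallelism that distinguishes this setting from the unidirectional case of \thmref{thm:lpoly}.

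The main obstacle is twofold. First, the \emph{routing} must respect the single-label-per-edge bandwidth: when several tokens contend for the same edge they must be serialized, so I would need a congestion bound showing that each inter-layer permutation is realizable in $\poly(n)$ rounds on the ring, via a standard store-and-forward argument that exploits the bounded fan-in and fan-out of the gates. Second, and more delicate, is \emph{output stabilization}: the protocol must converge from an \emph{arbitrary} initial labeling, not merely a clean one. To flush arbitrary garbage I would designate a fixed node as a clock that launches a synchronization wave carrying a $\poly(n)$-valued counter (which again fits in $O(\log n)$ bits) around the ring; each full sweep overwrites all stale labels and re-executes the layered evaluation from the stable inputs, so after at most one cycle every node's output equals $C_n(x)$ and stays there. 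Verifying that this clocking mechanism is itself self-stabilizing, i.e., that a corrupted counter cannot trap the system in a spurious cycle, is the crux of the argument; I would handle it by having the clock value be driven solely by the designated node's local recount rather than read back from the ring, so that the stable inputs deterministically force convergence.
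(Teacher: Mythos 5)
Your first direction ($\logBOS \subseteq \ppoly$) is essentially the paper's own argument: fix the all-zero initial labeling, unroll the synchronous execution into $R_n = \poly(n)$ layers of per-node truth-table sub-circuits (each of polynomial size because a reaction function reads only $O(\log n)$ bits), and use non-uniformity to hardwire the protocol description. That part is fine.

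The gap is in the second direction, and it sits exactly where you placed the ``crux.'' In this model a node has \emph{no} internal memory: its reaction function maps the current incoming labels (and its fixed input bit) to outgoing labels, nothing else. So your designated clock node cannot have its counter ``driven solely by the designated node's local recount rather than read back from the ring''---a local recount \emph{is} state. Any counter must live on the labels and therefore must be read back from neighbors, which re-opens precisely the corrupted-counter problem you were trying to sidestep. The same objection applies to your token-routing scheme: a gate that ``fires once it has received its constantly many input tokens'' must wait for and accumulate tokens, i.e., buffer them, and serializing contending tokens through a single edge requires remembering which tokens have already passed; none of this is available statelessly. The paper's proof supplies exactly the two stateless gadgets your outline is missing: (i) a self-stabilizing $D$-counter protocol (Claims~\ref{lem:flipping} and~\ref{clm:counter}) on \emph{odd-sized} bidirectional rings, built from a $2$-counter whose correctness is a parity argument exploiting the odd ring length, which forces all nodes into a common alternating phase from \emph{any} initial labeling and yields a global clock with $O(\log D)$-bit labels; and (ii) a ``memory via echo'' trick, in which a computed value is retained by bouncing it between a gate node and a dedicated partner node, with writes gated to two consecutive counter values so that the echoed pair stabilizes. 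With these, the paper evaluates the circuit gate-by-gate---one counter interval per gate, with operand values propagated clockwise to the gate's node during its interval---rather than layer-by-layer with routed tokens. Until you exhibit stateless replacements for your clock and your token buffers, the inclusion $\ppoly \subseteq \logBOS$ does not go through as written.
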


The complete proof of Theorem~\ref{thm:ppoly} is given in Appendix~\ref{apdx:os}. The central idea in simulating a Boolean Circuit in the bidirectional ring is to enable the nodes to simultaneously count using a synchronous $D$-counter. The way we think of counting is that after ``sufficiently long'' time has passed, all nodes simultanesouly see the same sequence of incoming labels with the numbers $0, 1, 2, \ldots$, repeated indefinitely. We now show a stateless $D$-counter protocol with a label complexity of $L_n = O(\log(D))$. We point out that this protocol does not compute any function, only reaches this desired global sequence of states. Thus, our reaction functions capture a mapping from incoming labels to outgoing labels only. The label complexity of simulating a global counter depends \emph{only} on the counting value, $D$. Our protocol utilize a $2$-counter protocol as a building block. We first present the $2$-counter protocol followed by a $D$-counter protocol.



\begin{claim}
\label{lem:flipping}
For every odd-sized bidirectional $n$-ring there exists a stateless $2$-counter protocol.
\end{claim}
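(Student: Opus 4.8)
The plan is to exhibit an explicit \emph{constant}-alphabet protocol (recall $L_n=O(\log D)=O(1)$ for $D=2$, so we may not store positions or counters) whose only limit behaviour is the spatially uniform, time-alternating labeling $0^{E}\leftrightarrow 1^{E}$, and to prove convergence to it from every initial labeling by a ``frustration'' argument in which the \emph{oddness} of the ring is exactly the feature that destroys every spurious stationary configuration. Concretely, I would take $\Sigma=\{0,1\}$ and give each node an \emph{anti-aligning} reaction function: upon activation a node sets both of its outgoing labels to the bit that opposes what it reads from its neighbors (with a tie-breaking convention, discussed below). By design the all-equal labeling is sent to its complement, so $0^{E},1^{E},0^{E},\ldots$ is a valid $2$-counter orbit in which, at every step, every node sees the common incoming value $t\bmod 2$; and this orbit is maintained by purely local updates. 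Since the synchronous schedule ($1$-fair) activates all nodes every round, it suffices to study the single global map $\ell^{t}\mapsto \ell^{t+1}$ and show its only recurrent orbit is this period-$2$ cycle.

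I would then characterize the obstructions. A configuration is stationary precisely when each node's outgoing bit already opposes its neighbors, i.e.\ the labeling is \emph{locally anti-aligned}; the globally anti-aligned (``ground'') states are exactly the proper $2$-colorings of the cycle $C_{n}$, the spatially alternating labelings. The heart of the convergence argument is a potential $\Phi$ counting the number of \emph{aligned} (monochromatic) adjacencies, i.e.\ the defects of an anti-aligned pattern. I would argue that a synchronous step never increases $\Phi$ and strictly decreases it away from the recurrent set, so that defects (domain walls) migrate and annihilate rather than freeze; the tie-breaking convention in the reaction function is chosen precisely to force walls to drift consistently and pairwise cancel, so that no \emph{defected} metastable fixed point survives. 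The conclusion of this step is that the only stationary labelings are the genuine proper $2$-colorings.

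The crux, and the sole place oddness is used, is the elementary fact that an odd cycle admits \emph{no} proper $2$-coloring: any alternation around $C_{n}$ with $n$ odd necessarily leaves a monochromatic adjacency, which the anti-aligning rule destabilizes. Hence for odd $n$ the global map has no stationary labeling at all, and together with the eroding potential this forces every trajectory into the uniform period-$2$ orbit, which is the desired $2$-counter. Bidirectionality is what lets each node see both neighbors and thus detect local alignment/frustration; I would also note that the protocol may keep identical rules at all nodes, so no leader is needed.

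I expect the main obstacle to be the potential analysis rather than the oddness step: one must design the tie-breaking so that it is \emph{provable} that boundaries migrate and annihilate (ruling out both defected fixed points and spurious period-$2$ orbits other than the uniform one), since the naive symmetric rules (e.g.\ NAND/NOR-type updates) do admit frozen defected configurations and therefore fail. Finally, as a consistency check confirming that the hypothesis is tight, I would verify that on an \emph{even} ring the alternating labeling $0^{E}$-vs-$1^{E}$ pattern is a genuine stationary configuration that never synchronizes, which is exactly why the statement is restricted to odd $n$.
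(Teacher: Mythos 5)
There is a genuine gap here, and it is not merely that the details are deferred: the protocol family you propose provably cannot work. You never fix the tie-breaking rule, but with a one-bit label broadcast identically to both neighbors, the constraint ``oppose agreeing neighbors'' ($f(0,0)=1$, $f(1,1)=0$) leaves exactly four possible reaction functions: NAND, NOR, $\lnot(\text{left neighbor})$, and $\lnot(\text{right neighbor})$. You concede NAND/NOR fail, and indeed they freeze: a NAND fixed point is any labeling whose $0$'s form a maximal independent set of the cycle (e.g.\ $(1,1,0)$ on $C_3$, $(1,1,0,1,0)$ on $C_5$), and these exist on every odd ring. But the remaining ``drift'' rules fail too, for a structural reason that kills your potential argument: the update $s_i^{t+1}=\lnot s_{i-1}^t$ is a bijection of $\{0,1\}^n$ (negation composed with a cyclic shift), so every configuration is recurrent, a single-defect configuration circulates forever, and no potential $\Phi$ can be non-increasing and strictly decreasing off the target orbit. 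So no instantiation of your ``anti-aligning with clever tie-breaking'' scheme admits the domain-wall annihilation argument you need. A second, independent gap: your oddness step (no proper $2$-coloring) only rules out \emph{fixed points}; for a deterministic map on a finite set this says nothing about convergence, since long spurious cycles remain possible (the drift rules above have no fixed points on odd rings, yet almost no trajectory synchronizes). Ruling out all non-uniform recurrent orbits is exactly the part you left unproved, and within your rule family it is false.

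The paper avoids this trap by spending more resources in precisely the two places you economized: it uses \emph{two}-bit labels and \emph{heterogeneous} reaction functions (distinct roles for nodes $1$, $2$, $n$, and even/odd intermediate nodes), rather than a one-bit anonymous rule. A designated adjacent pair (nodes $1$ and $2$) runs a tiny feedback loop in the first bit whose joint state cycles through $(0,0),(0,1),(1,1),(1,0)$ with period $4$; node $n$ XORs the two first bits it sees, and because $n$ is odd the propagation delays have matching parity, so this XOR alternates with period $2$ regardless of initialization; the second bit then carries this alternating signal around the ring and locks all nodes to it. The extra bit and the symmetry breaking are what supply the ``memory'' that forces spurious configurations to die out --- exactly the ingredient your one-bit anonymous family cannot provide.
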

\begin{proof}
We begin with a proof for $n=3$. 
The label space is $\Sigma=\{0, 1\}^2$.
All reaction functions send the same
label in both directions, so we refer to their outgoing labels as two
bits. We use the notation $\ell_{(i, j)}[i]$ to denote the $i^{th}$ bit in label the $\ell_{(i, j)}\in \Sigma$. 
The reaction function of node $1$, 
\[
\delta_1(\ell_{(2, 1)}, \ell_{(3, 1)})[1] = \text{NOT}(\ell_{(2, 1)}[1])
\]
\[
\delta_1(\ell_{(2, 1)}, \ell_{(3, 1)})[2] = \ell_{(3, 1)}[1]
\]
The reaction function of node $2$, 
\[
\delta_2(\ell_{(1, 2)}, \ell_{(3, 2)})[1] = \ell_{(1, 2)})[1]
\]
\[
\delta_2(\ell_{(1, 2)}, \ell_{(3, 2)})[2] = \text{NOT}(\ell_{(1, 2)}[2])
\]
The reaction function of node $3$ is a XOR over the first bit of its incoming labels:
\[
\delta_3(\ell_{(1, 3)}, \ell_{(2, 3)})[1] = 
\text{XOR}(\ell_{(1, 3)}[1],\ell_{(2, 3)}[1])
\]
\[
\delta_3(\ell_{(1, 3)}, \ell_{(2, 3)})[2] = \ell_{(2, 3)}[2]
\]



Assume that $(\ell_{(2,1)}[1], \ell_{(1,2)}[1]) = (0,0)$. Observe that at the next time step they alter to $(0, 1)$, then $(1, 1)$, $(1, 0)$ and again, $(0,0)$. Note that they agree/disagree on their outgoing labels every other time step,
therefore, node $3$ must send alternating bits, in its first label bit, and they converge after at most two time steps. 

To generalize for any odd $n$, we assign the nodes' reaction functions $\delta'_1,\delta'_2, \ldots , \delta'_n$ as follow. $\delta'_n = \delta_3$,
$\delta'_{1} = \delta_1$. 
For every other node $1<j<n$, if $j$ is even then 
\[
\delta'_j(\ell_{(j-1, j)}, \ell_{(j+1, j)})[1] = \ell_{(j-1, j)})[1]
\]
\[
\delta'_j(\ell_{(j-1, j)}, \ell_{(j+1, j)})[2] = \text{NOT}(\ell_{(j-1, j)}[2])
\]
and if $j$ is odd then,
\[
\delta'_j(\ell_{(j-1, j)}, \ell_{(j+1, j)})[1] = \ell_{(j-1, j)})[1]
\]
\[
\delta'_j(\ell_{(j-1, j)}, \ell_{(j+1, j)})[2] = \ell_{(j-1, j)}[2])
\]
Notice that the outgoing first bit label sequence of $1$ is $(0, 0, 1, 1, 0, 0, \ldots)$. Since node $2$ repeats it, its first label bit sequence will be the same but with a delay of one time step. As the number of nodes is odd, the delay of node $n-1$ must be odd too, thus nodes $1, n-1$ agree/disagree every other time step, and node $n$ outputs alternate bits at the first bit label sequence. Since node $1$ updates its outgoing second bit according to node $n$ first bit, and it will output alternating bits at its second bit sequence. Node $2$ negating the value it sees at the second bit thus, they both observe the same bit at every time step. We complete the proof by induction over $n$, while using the fact $n$ is odd.
\end{proof}

\begin{claim}\label{clm:counter}
For every odd-sized bidirectional $n$-ring there exists a stateless $D$-counter protocol.
\end{claim}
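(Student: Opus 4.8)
The plan is to build the $D$-counter by \emph{stacking} $k = \lceil \log_2 D\rceil$ copies of the synchronized alternating bit produced by the $2$-counter of Claim~\ref{lem:flipping}, one copy per bit of the binary representation of the count. Each edge label will carry a $k$-bit word together with the $O(1)$ auxiliary bits already used by the $2$-counter protocol, so the label complexity is $k\cdot O(1)=O(\log D)$, matching the target $L_n=O(\log D)$. First I would reduce to the power-of-two case: a $2^k$-counter is turned into a $D$-counter for any $D\le 2^k$ by composing the binary-increment map with a reset that sends the value $D$ back to $0$. Since, after convergence, every node reads the same count at every step, all nodes perform this reset simultaneously, so the periodic sequence $0,1,\dots,D-1,0,\dots$ is preserved globally.

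For $D=2^k$, I would specify the reaction functions so that the least-significant bit (call it layer $0$) is driven exactly by the $2$-counter protocol of Claim~\ref{lem:flipping}, guaranteeing that layer $0$ converges to the globally synchronized pattern $0,1,0,1,\dots$ Higher layers follow the standard binary-increment rule: bit $i$ toggles precisely when bits $0,\dots,i-1$ are all equal to $1$. The key structural observation is that this carry condition is a function only of the lower-order bits; hence once layers $0,\dots,i-1$ are globally synchronized, every node detects the carry into layer $i$ at the same time step, so layer $i$ is fed a clean, globally consistent ``clock'' that ticks once every $2^{i}$ steps.

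I would then argue convergence layer by layer, by induction on $i$. Once the clock feeding layer $i$ is synchronized, layer $i$ is itself a $2$-counter running at the slower frequency $2^{-i}$, and I would reuse the parity argument of Claim~\ref{lem:flipping} — in particular the XOR-reconciliation at a designated node and the fact that the ring has \emph{odd} length — to show that any initial disagreement on bit $i$ around the ring is washed out and the bit settles into a single synchronized value. Because each of the $k$ layers converges in finite time and there are only $k=O(\log D)$ of them, the full word converges, and the resulting labeling sequence is exactly $0,1,2,\dots,D-1,0,\dots$ seen simultaneously by all nodes; thus the protocol is a $D$-counter with $L_n=O(\log D)$.

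The main obstacle I anticipate is this synchronization of the higher-order bits rather than their arithmetic: incrementing is trivial, but from an adversarial initial labeling different nodes may hold wildly different $k$-bit words, and I must show that the stacked reconciliation mechanism forces \emph{global} agreement rather than merely maintaining it once established. Making the layer-by-layer induction rigorous — defining each layer's reaction function so that it simultaneously derives its clock from the already-synchronized lower layers and runs the odd-ring $2$-counter reconciliation at its own frequency, all while sharing the same physical edges — and carefully bounding the cumulative convergence time across layers, is where the real work lies.
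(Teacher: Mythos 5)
Your binary-layer decomposition is genuinely different from the paper's construction, but it contains a gap that I do not think can be patched in the form you describe, and the gap stems precisely from statelessness. Your induction step needs each node to \emph{toggle} its bit $i$ when the carry fires and to \emph{hold} it otherwise; but a stateless node cannot hold anything: between carries its outgoing layer-$i$ bit must be recomputed from its \emph{incoming} labels, i.e., from its neighbors' bits. If on non-carry steps each node copies bit $i$ from a neighbor, the layer-$i$ configuration does not stay put --- it rotates around the ring by one position per time step, so an adversarial initial pattern never settles (it just circulates, possibly with synchronized global flips), which is exactly the failure mode your reconciliation is supposed to cure. The cure you propose --- rerunning the parity/XOR argument of Claim~\ref{lem:flipping} ``at frequency $2^{-i}$'' --- does not transfer: that argument hinges on the flip-flop pair updating at \emph{every} time step, producing the period-$4$ stream $0,0,1,1,\ldots$, whose XOR with an odd-delayed copy of itself (odd because the ring is odd) is the alternating clock. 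Propagation delays remain one hop per real time step no matter how slowly a layer's clock ticks, and the analogous identity is simply false for slowed signals: the XOR of $0,0,0,0,1,1,1,1,\ldots$ with a one-step-shifted copy of itself is $0,0,0,1,0,0,0,1,\ldots$, not a half-rate clock. So for every layer $i\geq 1$ you have neither a holding mechanism nor a working reconciliation mechanism, and the induction does not get off the ground.

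It is instructive to see how the paper's proof sidesteps this: it never asks any node to maintain a count. Instead, a single mod-$D$ increment wave circulates in a field $z$ (the ``memory'' lives in the motion of the labels, which is the one thing a stateless ring can sustain), node $1$ measures the gap $g$ between the two interleaved subsequences that this wave creates on its two incoming edges, propagates $g$ around the ring, and every node then computes the synchronized count $c$ from $(z,g)$ together with the single phase bit $b_2$ supplied by Claim~\ref{lem:flipping}, which tells it which of the two subsequences it is currently reading. The $2$-counter is used once, as a phase reference, not as a digit; all arithmetic is performed on values that are allowed to move. If you want to salvage a digit-by-digit construction, you would need an explicit memory mechanism such as the neighbor-echo trick the paper introduces later for the circuit simulation, at which point you would be re-deriving machinery that the gap-and-phase construction already provides with label complexity $O(\log D)$ and round complexity $O(n)$.
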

\begin{proof}
Suppose we have a ring of size $n=2$, the label space $\Sigma =
\{0, 1, ... , D-1\}$, and nodes $1$ and $2$ have the same reaction
function that increments the value on their incoming label, modulo $D$.
Let $(\ell_{(1, 2)}, \ell_{(2, 1)}) = (x,y)$ be the initial labeling. 
After one time step, it will turn to $(y+1, x+1)$, after two time steps, $(x+2, y+2)$, and in general the sequence of incoming labels of node $1$ is 
\[(y, x+1, y+2, x+3, ...)\] 
and of node $2$ is 
\[(x, y+1, x+2, y+3, ...)\]
Now, assume
they both knew $x-y$, then by adding it to the $y$ subsequence, they
were able to count together. Clearly, they also need to be able to distinct between the $``x''$ and $``y''$ subsequences. Based on this observation, we will describe the counter algorithm.

\begin{enumerate}
\item The label space is $\Sigma = \{ (b_1, b_2, z, g, c) \} $, where:
\begin{enumerate}
\item $b_1$ and $b_2$ the two bits that implement the $2$-counter protocol from Claim~\ref{lem:flipping}.
\item $z, g, c$ are integers from $[D]$.
\end{enumerate}

\item The reaction function of node $1$ assigns $z=x+1$ for $\ell_{(2, 1)} = (-, -, x, -, -)$. 
\item Every other node $j$, assigns $z=y+1$ for $\ell_{(j-1, j)} = (-, -, y, -,
-)$.
\item Observe that all odd/even nodes have the same outgoing sequence, thus, node $1$ can see at the same time the two subsequences in the $z$ field, and calculate the gap $g=x-y$. Then, it simply propagate it to all nodes in a clockwise direction in the $g$ field. 
\item Finally, the counter value $c$ will be for every odd node as 
\[
c = 
\begin{cases}
z+1+g & \text{If } b_2=0 \\
z+1 & \text{Otherwise}
\end{cases}
\]
and for even nodes as
\[
c = 
\begin{cases}
z+1 & \text{If } b_2=0 \\
z+1+g & \text{Otherwise}
\end{cases}
\]
\end{enumerate}
 
The round complexity of the algorithm is $R_n = 4n$ as $n$ time steps are needed for each field $b_1, b_2, z$ and $g$ to stabilize. The label complexity is $L_n = 2+3\log(D)$.
\end{proof}

\begin{corollary}
\label{coro:1}
$P\subseteq \logBOS$ and $L\subseteq \textup{OS}^u_{\log}$
\end{corollary}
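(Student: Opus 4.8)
The plan is to derive both containments of Corollary~\ref{coro:1} as consequences of the characterization theorems (Theorem~\ref{thm:ppoly} and Theorem~\ref{thm:lpoly}) combined with standard inclusions from complexity theory. The corollary as stated is strictly weaker than the full equivalences, so I would treat it as a sanity-check/immediate consequence rather than reprove it from scratch. For the first containment, $P \subseteq \logBOS$: by Theorem~\ref{thm:ppoly} we have $\widetilde{\textup{OS}}^b_{\log} \equiv \ppoly$, and since every language in $P$ is decidable in polynomial time \emph{without} advice, we trivially have $P \subseteq \ppoly$ (pad the advice string arbitrarily, e.g. take the empty advice). Combining these gives $P \subseteq \ppoly \equiv \logBOS$. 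For the second containment, $L \subseteq \textup{OS}^u_{\log}$: by Theorem~\ref{thm:lpoly} we have $\textup{OS}^u_{\log} \equiv \lpoly$, and a logspace Turing machine is exactly a $\lpoly$ machine that ignores its advice tape, so $L \subseteq \lpoly \equiv \textup{OS}^u_{\log}$.

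The key steps, in order, are therefore: (1) invoke Theorem~\ref{thm:ppoly} to replace $\logBOS$ by $\ppoly$; (2) recall the textbook inclusion $P \subseteq \ppoly$, justified by observing that a polynomial-time machine uses no advice (or trivial advice), which is a special case of a polynomial-time machine with polynomial advice; (3) invoke Theorem~\ref{thm:lpoly} to replace $\textup{OS}^u_{\log}$ by $\lpoly$; and (4) recall the analogous inclusion $L \subseteq \lpoly$, justified identically by noting that a plain logspace machine is a logspace-with-advice machine that reads nothing from its advice tape. Each of these is a one-line argument, so the ``proof'' is essentially a chain of four substitutions.

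I anticipate essentially no genuine obstacle here, since the hard content lives entirely inside the two characterization theorems, which I am permitted to assume. The only point requiring a word of care is making sure the advice-class definitions are applied correctly: $\ppoly$ and $\lpoly$ are defined (in the paragraphs preceding the theorems) as the nonuniform classes of polynomial-time and logspace machines \emph{with} polynomial advice, so the inclusions $P \subseteq \ppoly$ and $L \subseteq \lpoly$ follow because the uniform class is the special case where the advice is discarded. If one wanted the containments in the stronger form $P \subsetneq \ppoly$, one would need to recall that $\ppoly$ contains undecidable languages (via advice encoding an arbitrary sparse language), but for Corollary~\ref{coro:1} only the non-strict inclusion is asserted, so no such argument is needed. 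The entire proof can thus be written in two or three sentences: apply the theorems, then cite the standard uniform-to-nonuniform inclusions.
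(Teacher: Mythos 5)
Your proposal is correct and coincides with the paper's intended derivation: the paper states this as an immediate corollary of Theorems~\ref{thm:lpoly} and~\ref{thm:ppoly} (offering no separate proof), relying exactly on the standard uniform-to-nonuniform inclusions $L\subseteq\lpoly$ and $P\subseteq\ppoly$ that you invoke. Your chain of substitutions is precisely the argument the paper has in mind.
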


\begin{corollary}
\label{coro:2}
If $\mathcal{L} \in P$ but $\mathcal{L} \notin \textup{OS}^u_{\log}$, then $P \neq LSPACE$
\end{corollary}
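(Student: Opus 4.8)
The plan is to prove the contrapositive: assuming $P = LSPACE$, I would show that no language can witness the hypothesis, i.e.\ that every $\mathcal{L}\in P$ already lies in $\textup{OS}^u_{\log}$. All of the technical content has in fact already been discharged by Theorem~\ref{thm:lpoly}, which identifies $\textup{OS}^u_{\log}$ with the complexity class $\lpoly$; the corollary then reduces to a short chain of inclusions, so this is a direct logical consequence rather than a new argument.

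Concretely, the first step is to invoke Theorem~\ref{thm:lpoly} to replace $\textup{OS}^u_{\log}$ by $\lpoly$ throughout, turning the target into the purely complexity-theoretic statement: if $\mathcal{L}\in P$ but $\mathcal{L}\notin\lpoly$, then $P\neq LSPACE$. The second step is to record the trivial inclusion $LSPACE\subseteq\lpoly$. This holds because a deterministic logspace decider is in particular a logspace machine supplied with (empty) polynomial advice; equivalently, $\lpoly$ is obtained from $LSPACE$ by augmenting it with an advice string and can therefore only be larger. Note that the relevant inclusion goes in this direction and not the reverse, and that the class characterized in Theorem~\ref{thm:lpoly} is the tilde-free $\textup{OS}^u_{\log}$ rather than the helper-node variant $\widetilde{\textup{OS}}^b_{\log}$.

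The third step assembles the contrapositive. Suppose $P=LSPACE$. Then any $\mathcal{L}\in P$ satisfies $\mathcal{L}\in LSPACE\subseteq\lpoly=\textup{OS}^u_{\log}$, so $\mathcal{L}\in\textup{OS}^u_{\log}$. Hence under the assumption $P=LSPACE$ there can be no language that is simultaneously in $P$ and outside $\textup{OS}^u_{\log}$, which is precisely the negation of the hypothesis. Contraposing yields the claim: the existence of some $\mathcal{L}\in P\setminus\textup{OS}^u_{\log}$ forces $P\neq LSPACE$.

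I do not anticipate any genuine obstacle, since once Theorem~\ref{thm:lpoly} is in hand the corollary is essentially immediate; the only points requiring care are the direction of the inclusion $LSPACE\subseteq\lpoly$ and the use of the correct (unidirectional, tilde-free) class. The conceptual payoff worth emphasizing in the write-up is the interpretation: any super-logarithmic label-complexity lower bound for output-stabilizing protocols on the unidirectional ring—that is, any $\mathcal{L}\in P$ provably escaping $\textup{OS}^u_{\log}$—would separate $P$ from $LSPACE$, resolving a longstanding open question in complexity theory.
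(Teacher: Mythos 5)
Your proposal is correct and matches the paper's (implicit) argument: the corollary is an immediate consequence of Theorem~\ref{thm:lpoly} together with the trivial inclusion $\mathrm{LSPACE}\subseteq\lpoly$ (equivalently, Corollary~\ref{coro:1}'s statement that $L\subseteq\textup{OS}^u_{\log}$), exactly the chain you assemble. Your contrapositive phrasing and your care about the direction of the inclusion and the choice of the unidirectional, tilde-free class are precisely the right points, and no further argument is needed.
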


\begin{corollary}
\label{coro:3}
If $\mathcal{L}\in \text{NP}$ but $\mathcal{L}\notin \logBOS$, then $P\neq NP$
\end{corollary}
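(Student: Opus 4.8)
The plan is to prove the statement by contradiction, deriving an absurdity from the assumption $P = NP$ together with the two hypotheses $\mathcal{L} \in \text{NP}$ and $\mathcal{L} \notin \logBOS$. The entire argument rests on the containment $P \subseteq \logBOS$ already established in Corollary~\ref{coro:1}, which itself follows from the characterization $\logBOS \equiv \ppoly$ of Theorem~\ref{thm:ppoly} combined with the standard fact $P \subseteq \ppoly$.

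Concretely, I would proceed as follows. Suppose toward a contradiction that $P = NP$. Since $\mathcal{L} \in \text{NP}$ by hypothesis, the assumption $P = NP$ immediately gives $\mathcal{L} \in P$. Invoking Corollary~\ref{coro:1}, which asserts $P \subseteq \logBOS$, we then obtain $\mathcal{L} \in \logBOS$. This directly contradicts the standing hypothesis $\mathcal{L} \notin \logBOS$. Hence the assumption $P = NP$ is untenable, and we conclude $P \neq NP$, as claimed.

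There is essentially no technical obstacle in this final step itself; all of the substantive work has already been discharged in proving Theorem~\ref{thm:ppoly} and its Corollary~\ref{coro:1}. The corollary is best read as a contrapositive reformulation of the inclusion $P \subseteq \logBOS$: exhibiting any language in $\text{NP}$ that provably lies outside the class $\logBOS$ would separate $P$ from $NP$. Thus the only genuine ``difficulty'' here is conceptual, namely recognizing that a super-logarithmic label lower bound against bidirectional-ring protocols for an $\text{NP}$ language is at least as hard as the $P$ versus $NP$ question --- which is precisely the cautionary message this corollary (in parallel with Corollary~\ref{coro:2}) is intended to convey.
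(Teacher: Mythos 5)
Your proof is correct and matches the paper's intended argument exactly: the corollary is stated without proof precisely because it is the immediate contrapositive of $P \subseteq \logBOS$ (Corollary~\ref{coro:1}), which in turn follows from Theorem~\ref{thm:ppoly} together with $P \subseteq \ppoly$. Nothing further is needed.
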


It follows from the above corollaries that: (1) obtaining super logarithmic lower bounds on the label complexity both for either unidirectional or bidirectional rings would imply major breakthroughts in complexity theory and is thus challenging, and (2) every efficiently-decidable language can be computed in a decentralized, stateless, manner with low label complexity. 
Nevertheless, a simple counting argument reveals that some functions cannot be computed by protocols with sublinear label complexity. In fact, this is true for any network topology in which the maximum degree is constant, and even if we do not require label stabilization. This implies that there are problems for which we cannot hope to achieve a better complexity than the trivial upper bound of Proposition \ref{prop:genupperbd}. 

\begin{theorem}\label{thm:hardFunc}
Let $\{G_n\}_{n=1}^{\infty}$ be graph family, so that the maximal degree  of $G_n$ is  constant, i.e., there is some $k\in \mathbb{N}$ so that for every $n$, $\Delta(G_n) = k$. Then for every $n>8$, there exists a function $f:\{0,
1\}^n\rightarrow \{0, 1\}$ that cannot be computed by a stateless protocol on $G_n$ with $L_n < n/(4k)$.
\end{theorem}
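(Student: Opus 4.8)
The plan is to use a counting argument comparing the number of Boolean functions on $n$ bits against the number of stateless protocols whose label complexity lies below the threshold. There are exactly $2^{2^n}$ functions $f:\{0,1\}^n\to\{0,1\}$, so it suffices to show that the number of protocols on $G_n$ with $L_n<n/(4k)$ is strictly smaller than $2^{2^n}$. The point is that every protocol computes at most one function: if a protocol computes $f$, then $f(x)$ is forced to equal the common limit of the nodes' output sequences, so $f$ is uniquely determined by the protocol. Hence a deficit in the count guarantees a function that no such protocol computes. I would emphasize that this enumeration counts \emph{all} protocols and is oblivious to the schedule and to whether one demands output- or label-stabilization, which is precisely why the conclusion holds ``even if we do not require label stabilization,'' and shows the linear upper bound of Proposition~\ref{prop:genupperbd} is tight up to a constant factor depending on $k$.

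Next I would count protocols of a fixed label-space size. Write $s=|\Sigma|$, and let $d_i^-$ and $d_i^+$ denote the in- and out-degree of node $i$; since $\Delta(G_n)=k$ we have $d_i^-,d_i^+\le k$ for every $i$. A reaction function $\delta_i:\Sigma^{-i}\times\{0,1\}\to\Sigma^{+i}\times\{0,1\}$ is a map from a domain of size $2 s^{d_i^-}$ into a codomain of size $2 s^{d_i^+}$, so there are exactly $(2 s^{d_i^+})^{2 s^{d_i^-}}$ choices for it. Fixing the label space to be $[s]$ (relabelings do not change the computed function), the number of protocols with $|\Sigma|=s$ is $P(s)=\prod_{i=1}^n (2 s^{d_i^+})^{2 s^{d_i^-}}$, so that
\[
\log_2 P(s)=\sum_{i=1}^n 2 s^{d_i^-}\left(1+d_i^+\log_2 s\right).
\]

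Then I would impose the label bound. The constraint $L_n<n/(4k)$ means $s<2^{n/(4k)}$, i.e.\ $\log_2 s<n/(4k)$ and $s^{d_i^-}\le s^{k}<2^{n/4}$. Substituting these and $d_i^+\le k$ gives
\[
\log_2 P(s)<2n\cdot 2^{n/4}\left(1+\tfrac{n}{4}\right).
\]
Summing over the fewer than $2^{n/(4k)}\le 2^{n}$ admissible integer values of $s$ adds only an $O(n)$ term to the exponent, so the total number $N$ of protocols with $L_n<n/(4k)$ satisfies $\log_2 N<2n\cdot 2^{n/4}(1+\tfrac n4)+n$. It remains to check this is below $2^n$, which after dividing through reduces to the polynomial-versus-exponential inequality $2n(1+\tfrac n4)<2^{3n/4}$ up to the negligible additive term; this holds for all $n>8$ (one verifies the base case $n=9$ directly and invokes monotonicity thereafter). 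Hence $N<2^{2^n}$ and a non-computable $f$ exists.

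The main obstacle I anticipate is purely quantitative: lining up the constants so the threshold is exactly $n/(4k)$ and the bound engages precisely at $n>8$. The slack is modest near the base case, so I would be careful to (i) bound the in- and out-degree \emph{separately} by $k$ rather than bounding only the total degree, (ii) retain the additive $O(n)$ contribution from summing over label-space sizes, and (iii) confirm the base case $n=9$ by explicit computation before appealing to the eventual dominance of $2^{3n/4}$. Everything else---the enumeration of reaction functions and the ``one function per protocol'' observation---is routine.
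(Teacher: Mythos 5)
Your proposal is correct and follows essentially the same counting argument as the paper: the paper bounds the number of protocols by $(2|\Sigma|^k)^{2n|\Sigma|^k}$, notes that each protocol computes at most one function, and derives $L_n \geq n/(4k)$ from the requirement that this count reach $2^{2^n}$ — the contrapositive of your direct deficit count. Your additional care (summing over all admissible label-space sizes, tracking in/out degrees separately, and verifying the base case $n=9$) only tightens details the paper glosses over; the core idea is identical.
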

\begin{proof}
The number of possible protocols over the label space $\Sigma$ is at most $(2|\Sigma|^k)^{2n|\Sigma|^k}$.
Applying the same protocol for two different Boolean functions on an
inconsistent input must result in an incorrect behavior. It follows that the numbers of protocols is at least the number of Boolean functions, $2^{2^n}$:
\[
(2|\Sigma|^k)^{2n|\Sigma|^k} \geq 2^{2^{n}}
\] 
\[
2n|\Sigma|^k \cdot \log(2|\Sigma|^k) \geq 2^n
\]
\[
\log(n) +\log(2|\Sigma|^k)+\log( \log(2|\Sigma|^k) )\geq n
\]
\[
\log(n)+2k\log(|\Sigma|) \geq n
\]
\[
2k\log(|\Sigma|) \geq n-\log(n)
\]
\[
\log(|\Sigma|) \geq \frac{1}{2k} (n-\log(n)) \geq \frac{1}{2k} (n-\frac{n}{2})
\]    
\[
L_n \geq \frac{n}{4k}\;.
\]

\end{proof}

\section{A Method for Label-stabilizing Lower Bounds}\label{sec:ls}

In this section, we present a method for deriving lower bounds on label-stabilizing protocols on an arbitrary topology. We then conclude linear and logarithmic lower bounds for concrete functions (\emph{equality} and \emph{majority}) on the ring topology. 



We now show that a variation on the \emph{fooling-set method}~\cite{arora2009computational}, can be used to prove lower bounds for label complexity. We present the formal definition followed by our lower bound theorem, which is proven in Appendix~\ref{apdx:lsbound}.

\begin{definition}

A \emph{fooling set} for a function $f:\{0,1\}^{n} \rightarrow \{0,1\}$ is a set $S\subseteq \{0,1\}^{m}\times \{0,1\}^{n-m}$, for some $m\leq n$, such that (1) for every $(x,y)\in S$, $f(x,y) = b$, and (2) for every distinct $(x,y),(x',y')\in S$, either $f(x ,y')\neq b$ or $f(x', y)\neq b$. \end{definition} 

\begin{theorem}
\label{thm:lowerbound}
	Let $m\in\bN$, let $f:\{0, 1\}^{n} \rightarrow \{0, 1\}$ be a function, and let $G=([n],E)$ be a directed graph. Define $C=\{(i,j)\in E:i\leq m<j\}$ and $D=\{(i,j)\in E:j\leq m<i\}$,
	the sets of edges out of and into the node subset $[m]$. Suppose $f$ has a fooling set $S\subseteq\{0,1\}^{m}\times\{0,1\}^{n-m}$ such that, for every $(x,y),(x^\prime,y^\prime)\in S$,
	\begin{itemize}
		\item if $(i,j)\in C$, then $x_{i} = x^\prime_{i}$, and
		\item if $(i,j)\in D$, then $y_i = y^\prime_i$.
	\end{itemize}
	Then every label-stabilizing protocol on $G$ that computes $f$ has label complexity at least $\frac{\log(|S|)}{|C|+|D|}$.
\end{theorem}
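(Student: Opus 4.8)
The plan is to run a fooling-set argument of the kind used for communication lower bounds \cite{arora2009computational}, treating the cut between $[m]$ and its complement as a channel whose ``wires'' are exactly the edges of $C\cup D$. At any stable labeling these edges carry a \emph{boundary profile} in $\Sigma^{C\cup D}$, and the whole argument reduces to showing that distinct fooling-set inputs must induce distinct profiles. Since there are only $|\Sigma|^{|C|+|D|}$ possible profiles, injectivity of the profile map on $S$ gives $|S|\le|\Sigma|^{|C|+|D|}$; taking logarithms and recalling $L_n=\log|\Sigma|$ then yields $L_n\ge \frac{\log|S|}{|C|+|D|}$.

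Concretely, I would first use label stabilization to pick, for each $(x,y)\in S$, one stable labeling $\ell^{(x,y)}$ the protocol reaches on input $(x,y)$, and set $\tau(x,y)=\ell^{(x,y)}|_{C\cup D}$; because the protocol computes $f$, every node outputs $f(x,y)=b$ at $\ell^{(x,y)}$. Assume toward a contradiction that $\tau(x,y)=\tau(x',y')$ for two distinct elements of $S$. I would glue the two stable labelings along the cut into a hybrid $\ell^{*}$: copy $\ell^{(x,y)}$ on edges internal to $[m]$, copy $\ell^{(x',y')}$ on edges internal to the complement, and use the common profile on $C\cup D$. The key claim is that $\ell^{*}$ is a stable labeling for the spliced input $(x,y')$ at which every node still outputs $b$. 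Initializing the protocol at the fixed point $\ell^{*}$ then forces $f(x,y')=b$, and the mirror-image hybrid (copying $\ell^{(x',y')}$ inside $[m]$ and $\ell^{(x,y)}$ inside the complement) forces $f(x',y)=b$; together these contradict the defining property of the fooling set, so $\tau$ is injective on $S$.

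The crux, and the step I expect to be the main obstacle, is verifying that $\ell^{*}$ violates no reaction function. For a node $i\le m$, its internal in-edges are unchanged and its in-edges in $D$ carry the shared profile, which equals $\ell^{(x,y)}|_{D}$; since its input bit is still $x_i$, its reaction reproduces exactly the outgoing labels and output it had under $\ell^{(x,y)}$, and on its out-edges in $C$ this matches the shared profile. The symmetric statement holds for nodes $i>m$ relative to $\ell^{(x',y')}$. This is where the hypotheses on $C$ and $D$ enter: they pin the input bits at the source endpoints of the crossing edges so that they coincide across $S$, which is the agreement one needs in order for each side to recompute the common crossing labels and for the two runs to stitch together into a single global fixed point. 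Once this consistency is established the counting in the first paragraph completes the proof.
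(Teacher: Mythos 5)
Your proposal is correct and takes essentially the same approach as the paper's proof: fix a stable labeling per fooling-set input, assume two inputs share a boundary profile on $C\cup D$, glue the two stable labelings along the cut into a fixed point for the spliced input, and count profiles to get $|S|\le|\Sigma|^{|C|+|D|}$ (the paper just uses one hybrid after assuming WLOG $f(x,y')\neq b$, rather than both hybrids). One small remark: your stitching verification is already complete without the input-bit agreement hypotheses on the endpoints of $C$ and $D$ edges --- each side of the hybrid keeps the input bits of its own original run, so sources of crossing edges recompute the common profile automatically --- and indeed the paper's own proof never invokes those bullet conditions either, so your attribution of the crux to them is the only (harmless) inaccuracy.
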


We apply this method to give lower bounds on the label complexity of label-stabilizing protocols for two important functions: the \emph{equality} function $\textsc{Eq}_n:\{0, 1\}^{n}\to\{0, 1\}$ defined by $\textsc{Eq}_n(x)=1$ if and only if $n$ is even and $(x_1,\ldots,x_{n/2})=(x_{n/2+1},\ldots,x_{n})$, and the \emph{majority} function $\textsc{Maj}_n:\{0, 1\}^n\to\{0,1\}$ defined by $\textsc{Maj}_{n}(x)=1$ if and only if $\Sigma_{i=1}^{n}x_i \geq n/2$. The proofs appear in Appendix \ref{apdx:lsbound}.

\begin{corollary}\label{cor:eq}
Every label-stabilizing protocol that computes $\textsc{EQ}_n$ on the bidirectional $n$-ring has label complexity at least $\frac{n-2}{8}$.
\end{corollary}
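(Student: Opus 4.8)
The plan is to apply the fooling-set method of Theorem~\ref{thm:lowerbound} with split point $m = n/2$, so that the pair $(x,y)$ is exactly the two halves compared by $\textsc{Eq}_n$. First I would pin down the cut edges on the bidirectional ring. Since $[m] = \{1,\dots,n/2\}$ is a contiguous arc of the cycle, exactly two adjacencies cross the cut: the ``seam'' between nodes $n/2$ and $n/2+1$, and the ``wrap-around'' between nodes $n$ and $1$. Each contributes one directed edge in each direction, so $C = \{(n/2,n/2+1),(1,n)\}$ and $D = \{(n/2+1,n/2),(n,1)\}$, giving $|C|+|D| = 4$. This constant is the denominator in the bound, so the whole argument reduces to exhibiting a fooling set of size $2^{\Theta(n)}$ that respects the boundary hypotheses.

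Next I would take the fooling set to be the diagonal $S = \{(w,w) : w \in W\}$ for a suitable $W \subseteq \{0,1\}^{n/2}$. Every element has $\textsc{Eq}_n(w,w) = 1 =: b$, and for distinct $w,w'$ both cross-terms $\textsc{Eq}_n(w,w')$ and $\textsc{Eq}_n(w',w)$ equal $0 \neq b$, so condition~(2) of the fooling-set definition holds automatically---this is just the standard extremal fooling set for equality. The substantive step is to satisfy the boundary conditions of Theorem~\ref{thm:lowerbound}: the coordinates indexed by $C$ (here $x_1$ and $x_{n/2}$) and by $D$ (here $y_{n/2+1}$ and $y_n$) must be constant across $S$. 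Under the identification $y = w$, the coordinate $y_{n/2+1}$ is a copy of $w_1$ and $y_n$ a copy of $w_{n/2}$, so all four constraints collapse to pinning just the two coordinates $w_1$ and $w_{n/2}$. I would therefore let $W$ consist of every string with $w_1,w_{n/2}$ fixed (say to $0$) and the remaining $n/2-2$ coordinates free.

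Counting then gives $|S| = |W| = 2^{n/2-2}$, and Theorem~\ref{thm:lowerbound} yields label complexity at least $\frac{\log_2|S|}{|C|+|D|} = \frac{n/2-2}{4} = \frac{n-4}{8}$, the claimed linear lower bound with slope $\tfrac18$. I expect the only delicate part to be the bookkeeping at the two cut points---correctly reading off which input bit sits at each endpoint of each crossing edge, and checking that after the diagonal identification all four boundary constraints are met simultaneously while nearly all of $w$ stays free. The key simplification, which keeps the loss down to a constant number of bits, is the observation that the seam and the wrap-around pin the very same pair $\{w_1,w_{n/2}\}$; once that is noted, the fooling property of equality and the final arithmetic are entirely routine.
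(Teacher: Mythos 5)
Your argument is, in structure, exactly the paper's: apply Theorem~\ref{thm:lowerbound} with $m=n/2$, the same four cut edges $C=\{(n/2,n/2+1),(1,n)\}$, $D=\{(n/2+1,n/2),(n,1)\}$, and a diagonal fooling set for $\textsc{Eq}_n$. The genuine problem is the final constant: because you pin \emph{two} coordinates ($w_1$ and $w_{n/2}$), your fooling set has size $2^{n/2-2}$ and the theorem yields only $\frac{n-4}{8}$, which is strictly weaker than the stated $\frac{n-2}{8}$. Calling this ``the claimed linear lower bound with slope $\tfrac18$'' glosses over the discrepancy; as written, you have proven a weaker statement than Corollary~\ref{cor:eq}. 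The paper gets $\frac{n-2}{8}$ by pinning only one coordinate, taking $S=\{(x,x): x\in\{0,1\}^{n/2},\ x_1=1\}$ of size $2^{n/2-1}$.

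The source of the mismatch is worth understanding, because your version is actually the more faithful reading of Theorem~\ref{thm:lowerbound} as stated: under the diagonal identification, the theorem's boundary hypotheses pin the inputs of all four cut nodes $1$, $n/2$, $n/2+1$, $n$, which collapses to fixing both $w_1$ and $w_{n/2}$, exactly as you argue. The paper's own set leaves $x_{n/2}$ (hence the inputs of nodes $n/2$ and $n$, the tails of $(n/2,n/2+1)\in C$ and $(n,1)\in D$) free, so it does not literally satisfy those hypotheses. The resolution is that the constancy hypotheses are never used in the proof of Theorem~\ref{thm:lowerbound}: the hybrid fixed-point argument needs only the fooling property of $S$ and the assumed agreement of the two stable labelings on $C\cup D$; the inputs at the cut nodes never need to coincide across the two runs, since each side of the hybrid is compared against its own run. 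Consequently the clean repair of your proof is to drop the pinning entirely and take the full diagonal $S=\{(w,w): w\in\{0,1\}^{n/2}\}$, giving label complexity at least $n/8\geq \frac{n-2}{8}$ (or, if you insist on respecting the stated hypotheses only where the paper does, pin just $w_1$ and recover $\frac{n-2}{8}$ exactly). Either one-line change closes the gap.
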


\begin{corollary}\label{cor:maj}
Every label-stabilizing protocol that computes $\textsc{Maj}_{n}$ on the bidirectional $n$-ring has label complexity at least $\log(\lfloor n/2\rfloor)/4$.
\end{corollary}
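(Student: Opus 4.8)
The plan is to instantiate the fooling-set method of \thmref{thm:lowerbound} with a balanced cut of the ring. Take $m=\lfloor n/2\rfloor$, so that $[m]=\{1,\dots,m\}$ is a contiguous arc. On the bidirectional $n$-ring this arc borders its complement in exactly two places, between nodes $m$ and $m+1$ and between nodes $n$ and $1$; hence $C=\{(m,m+1),(1,n)\}$ and $D=\{(m+1,m),(n,1)\}$, giving $|C|=|D|=2$ and $|C|+|D|=4$, which is exactly what produces the $/4$ in the bound. The boundary conditions of the theorem then demand only that every element of the fooling set agree on the four input bits at the nodes incident to the cut: $x_1,x_m$ (the sources of $C$) and $y_{m+1},y_n$ (the sources of $D$).

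Next I would build a fooling set $S$ for $\textsc{Maj}_n$ with target value $b=1$ by balancing Hamming weights across the cut. Writing $|x|$ and $|y|$ for the number of ones in the two halves, recall that $\textsc{Maj}_n=1$ exactly when $|x|+|y|\geq n/2$. I would fix the four boundary bits $x_1,x_m,y_{m+1},y_n$ to constant values and then let the remaining $m-2$ free bits of $x$ and $n-m-2$ free bits of $y$ vary, selecting a one-parameter family $(x^{(k)},y^{(k)})$ in which the total weight $|x^{(k)}|+|y^{(k)}|$ stays pinned exactly at the majority threshold $\lceil n/2\rceil$ while $|x^{(k)}|=k$ increases. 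Choosing the constant boundary values so that this threshold falls near the centre of the attainable free-weight range lets $k$ run over a range of size $\lfloor n/2\rfloor$, yielding a fooling set with $|S|=\lfloor n/2\rfloor$.

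Two checks then finish the argument. First, every pair has total weight $\lceil n/2\rceil\geq n/2$, so $\textsc{Maj}_n(x^{(k)},y^{(k)})=1=b$. Second, for $k<l$ the mixed pair $(x^{(k)},y^{(l)})$ has weight $k+(\lceil n/2\rceil-l)=\lceil n/2\rceil-(l-k)$, which for both parities of $n$ drops strictly below $n/2$, so $\textsc{Maj}_n(x^{(k)},y^{(l)})=0\neq b$; this verifies the fooling condition, and the pairs are manifestly distinct. Since the four boundary bits are constant over $S$, both bullet conditions of \thmref{thm:lowerbound} hold, and the theorem gives label complexity at least $\log(|S|)/(|C|+|D|)=\log(\lfloor n/2\rfloor)/4$.

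The main obstacle is not the fooling property — which is immediate from weight-balancing — but squeezing the fooling set up to size exactly $\lfloor n/2\rfloor$. Freezing the four boundary bits removes one degree of freedom from each half, and since $\textsc{Maj}_n$ depends only on weights, distinct fooling-set elements are forced to have distinct weight profiles $(|x|,|y|)$ lying on the threshold anti-diagonal of an $(m-1)\times(n-m-1)$ grid. The size of such a family is capped by the shorter side of this grid, so the cut must be balanced and the frozen boundary values must be chosen to recentre the threshold exactly; getting this bookkeeping right, together with the parity case analysis for $\lceil n/2\rceil$, is the delicate part of the proof.
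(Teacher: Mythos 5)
Your proposal follows essentially the same route as the paper's proof: the same balanced cut $m=\lfloor n/2\rfloor$ with the four crossing edges giving $|C|+|D|=4$, and a fooling set of inputs whose total weight is pinned exactly at the majority threshold, so that every mixed pair of distinct elements drops strictly below threshold. The paper's explicit construction, $Q=\{(1,1^{k}0^{m-1-k})\}_{k=0}^{m-1}$ paired with coordinatewise complements (plus a frozen $1$ at node $n$ when $n$ is odd), is precisely your one-parameter threshold anti-diagonal family written out concretely.

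There is, however, one genuine discrepancy, and it sits exactly at the point you flag as delicate: the claimed size $|S|=\lfloor n/2\rfloor$ is not attainable under the hypotheses of \thmref{thm:lowerbound} as stated. Freezing $x_1,x_m,y_{m+1},y_n$ leaves $|x|$ with at most $m-1$ possible values, and distinct fooling-set elements must have distinct $|x|$ (two elements with equal weight profiles make both mixed pairs sit exactly at the threshold, so the fooling condition fails); hence $|S|\leq\min(m-1,\,n-m-1)=\lfloor n/2\rfloor-1$, as your own $(m-1)\times(n-m-1)$ grid cap already shows. No choice of frozen boundary values recovers the missing element, so your construction proves the slightly weaker, asymptotically identical bound $\log(\lfloor n/2\rfloor-1)/4$. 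You are in good company: the paper's proof commits the same off-by-one, since its element with $k=m-1$, namely $x=1^{m}$, has $x_m=1$ while every other element has $x_m=0$, violating the constancy required at the source of the edge $(m,m+1)\in C$; deleting that element likewise leaves $|S|=\lfloor n/2\rfloor-1$. The stated constant can be rescued by observing that the proof of \thmref{thm:lowerbound} never actually uses the boundary-bit constancy hypotheses (only the agreement of the two stabilized labelings on $C\cup D$ is needed), so the full size-$\lfloor n/2\rfloor$ fooling sets are admissible against the theorem's proof, if not against its statement.
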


\section{Summary and Future Work} \label{sec:conc}

We formalized and explored the notion of stateless computation. Our investigation of stateless computation focused on two important aspects: (1) guaranteeing and verifying self-stabilization of stateless computation systems and (2) analyzing the computational power of such systems.

We presented a general necessary condition for self-stabilization, establishing that multiple stable labelings induce potential instability even under relatively fair schedules on node activation. We also investigated the communication and computational complexity of verifying that a system is self-stabilizing and exhibited hardness results for both models of computation. Our results for the computational power of stateless systems show that stateless computation is sufficiently powerful to solve a nontrival range computational problems in a manner that is both robust to transient faults and frugal in terms of message length. We provided exact characterizations of the languages that can be computed by output-stabilizing
protocols with logarithmic label size, on  unidirectional and bidirectional rings.
 
We believe that we have but scratched the surface in our exploration of stateless computation and leave the reader with many open questions: (1) Identifying necessary and sufficient conditions on the reaction functions for self-stabilization. (2) Exploring ``almost stateless'' computation, e.g., computation with aconstant number of internal memory bits. (3) Extending our research to specific network topologies such as the hypercube, torus, trees, etc. (4) Understanding the implications of randomized reaction functions for self-stabilization and computation.

\newpage
\bibliography{bib} 
\bibliographystyle{plain}

\newpage

\appendix

\noindent {\Large \bf Appendix}

\section{Proofs for Section~\ref{sec:model}}\label{apdx:bounds}

\begin{proposition}
Let $f:\{0, 1\}^n\rightarrow \{0, 1\}$ be a non constant Boolean function on a graph $G=([n],E)$, and let $r$ be the graph radius. Then for every output-stabilizing protocol, $r \leq R_n$. 
\end{proposition}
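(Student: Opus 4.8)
The plan is to run a locality-of-computation (indistinguishability) argument that exploits the fact that under the synchronous schedule information travels at most one hop per round. First I would use that $f$ is non-constant to pick two inputs $x,x'\in\{0,1\}^n$ that differ in exactly one coordinate $i^{*}$ and satisfy $f(x)\neq f(x')$: since $f$ attains both values and the hypercube is connected under single-bit flips, some single-coordinate flip must change the value of $f$. Thus $x_{i^{*}}$ is a \emph{relevant} input, and correctness forces the output of \emph{every} node to eventually differ between the run on $x$ and the run on $x'$.

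The technical core is a ``cone of influence'' lemma. Running the protocol on $x$ and on $x'$ from the \emph{same} initial labeling $\ell^0$ under the synchronous schedule, I claim that for every time $t$ and every node $k$ with $d(i^{*},k)>t$ — where $d(\cdot,\cdot)$ is directed shortest-path distance — the outgoing labels $\ell^{t}_{+k}$ and the output $y^{t}_{k}$ are identical in the two runs. I would prove this by induction on $t$. The base case $t=0$ is immediate, since $\ell^0$ is shared. For the inductive step, a node $k$ with $d(i^{*},k)>t\geq 1$ cannot be $i^{*}$ itself, so $x_k=x'_k$; moreover each in-neighbor $m$ of $k$ satisfies $d(i^{*},m)\geq d(i^{*},k)-1>t-1$, so the induction hypothesis makes every incoming label in $\ell^{t-1}_{-k}$ agree across the two runs. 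Determinism of $\delta_k$ then forces $\delta_k(\ell^{t-1}_{-k},x_k)=(\ell^{t}_{+k},y^{t}_{k})$ to be the same in both runs, completing the step.

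To conclude, let $j^{*}$ maximize $d(i^{*},j^{*})$; then $d(i^{*},j^{*})=\operatorname{ecc}(i^{*})\geq \min_v \operatorname{ecc}(v)=r$, because the radius is the minimum eccentricity, so this holds no matter which node $i^{*}$ turns out to be. By the lemma, $y^{t}_{j^{*}}$ coincides in both runs for every $t<d(i^{*},j^{*})$. If the protocol converged in fewer than $d(i^{*},j^{*})$ rounds, then at that time the (now-converged) output of $j^{*}$ would equal $f(x)$ in the first run and $f(x')$ in the second, yet the lemma forces these to be equal, contradicting $f(x)\neq f(x')$. Hence $R_n\geq d(i^{*},j^{*})\geq r$. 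I expect the main obstacle to be getting the induction bookkeeping exactly right — in particular the one-hop estimate $d(i^{*},m)\geq d(i^{*},k)-1$ for in-neighbors, and the insistence that the \emph{same} initial labeling and schedule drive both runs. The synchronous assumption is precisely what pins the propagation speed to one hop per round; under a slower schedule information would spread even more slowly, so the same lower bound would persist.
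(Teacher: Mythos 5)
Your proof is correct, and it rests on the same underlying mechanism as the paper's: a single-bit input flip whose influence can travel only one hop per synchronous round, so a node realizing the eccentricity of the flipped coordinate (which is at least the radius) cannot learn of the flip too early. The difference is in the experimental setup. The paper runs the protocol on $x$ from an arbitrary labeling until convergence and then uses the resulting labeling $\ell^{R_n}$ as the \emph{initial} labeling for a second run on $x'$; the bound is read off from how long the perturbation at the flipped node $j$ takes to overwrite the stale outputs $f(x)$. You instead run the two inputs in parallel from one common, arbitrary initial labeling and prove an explicit cone-of-influence lemma by induction on time, concluding that if $R_n < d(i^*,j^*)$ the two converged outputs at $j^*$ would have to agree, contradicting $f(x)\neq f(x')$. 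Your arrangement buys rigor and symmetry: the induction makes precise exactly the propagation claim the paper asserts in a single sentence, and you never need one run to have converged before the argument begins. The paper's arrangement buys a slightly more constructive statement: it exhibits a specific adversarial initialization (the converged labeling for $x$) from which a single run on $x'$ is provably slow, which is closer in spirit to the self-stabilization setting where the adversary picks the initial labeling. Both routes yield $R_n \geq \mathrm{ecc}(j)\geq r$.
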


\begin{proof}[Proof of Proposition \ref{prop:radius}]
Since $f$ is not constant, there exist two input assignments, $x, x'\in \{0, 1\}^n$, so that:
\begin{enumerate}
\item $x$ and $x'$ differ in exactly one coordinate, denoted by $j$.
\item $f(x)\neq f(x')$.
\end{enumerate}
We assign the graph with $x$ and an arbitrary labeling $\ell^0$, and run $A_n$ $R_n$ time steps (until convergent). We initialize the graph with $x'$ and $\ell^{R_n}$ and run $A_n$ again. It takes at least one time step for node $j$ to update its own output to $f(x')$, at least two time steps for $j$'s neighbors to update their own output, and at least $r$ time steps for output convergence to of all nodes.   
\end{proof}

\begin{proposition}
Let $G=([n], E)$ be a directed graph and $A_n=(\Sigma, \delta)$, a stateless protocol. Then, 
\[
R_n \leq |\Sigma|^{|E|} = (2^{L_n})^{|E|}
\]  
\end{proposition}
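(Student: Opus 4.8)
The plan is to exploit the fact that, under synchronous activation (the $1$-fair schedule on which round complexity is defined in Section~\ref{sec:cm}), the entire evolution of the system is a deterministic trajectory through the finite state space of edge labelings. Fix an input $x\in\{0,1\}^n$ and an initial labeling $\ell^0\in\Sigma^E$. Since every node is activated in every time step, the labeling at time $t$ depends only on the labeling at time $t-1$: writing $F(\ell)$ for the labeling component of $\delta(\ell, x, [n])$, we have $\ell^t = F(\ell^{t-1})$. Thus the sequence $\ell^0, \ell^1, \ell^2,\ldots$ is the orbit of $\ell^0$ under iteration of a single map $F\colon \Sigma^E\to\Sigma^E$, and the state space $\Sigma^E$ has exactly $|\Sigma|^{|E|}$ elements.

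The key step is to argue that no labeling can recur strictly before convergence. Suppose $\ell^a = \ell^b$ with $a < b$ where $\ell^a$ is not a fixed point of $F$. By determinism, $\ell^{a+k}=\ell^{b+k}$ for every $k\ge 0$, so the orbit is eventually periodic with period dividing $b-a\ge 1$. A label-stabilizing protocol, however, must reach a fixed point, i.e.\ the orbit is eventually constant, which forces the eventual period to equal $1$; this is incompatible with the orbit still passing through the non-fixed state $\ell^a$. Hence the labelings $\ell^0,\ell^1,\ldots,\ell^{t^\star-1}$ visited before the fixed point $\ell^{t^\star}$ is first reached are pairwise distinct.

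Since there are at most $|\Sigma|^{|E|}$ distinct labelings in $\Sigma^E$, the fixed point is reached after at most $|\Sigma|^{|E|}$ steps, for every choice of $x$ and $\ell^0$. Taking the maximum over all inputs and initial labelings gives $R_n\le |\Sigma|^{|E|}$, and substituting $|\Sigma| = 2^{L_n}$ yields the stated identity $(2^{L_n})^{|E|}$. The same bound also covers output stabilization by a pigeonhole variant: among any $|\Sigma|^{|E|}+1$ consecutive labelings two must coincide, so the orbit enters its terminal cycle within $|\Sigma|^{|E|}$ steps; as the output vector is a deterministic function of the current labeling, an eventually-constant output sequence is necessarily already constant once the cycle is entered.

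This argument is essentially routine, and I do not expect a genuine obstacle. The only point requiring care is aligning the informal word ``convergence'' in the definition of $R_n$ with the correct dynamical notion (a fixed point for label stabilization, a terminal cycle for output stabilization), so that the ``no recurrence before convergence'' claim is applied to the right object; everything else is a direct consequence of determinism on a finite state space.
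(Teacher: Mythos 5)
Your proof is correct and takes essentially the same approach as the paper: the paper's entire proof is the one-line observation that ``the maximal time steps until convergence cannot exceed the number of possible system configurations,'' which is exactly the determinism-plus-pigeonhole argument you spell out. Your explicit treatment of the no-recurrence-before-convergence step (and of the output-stabilizing case via the terminal cycle) simply fills in the details the paper leaves implicit.
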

\begin{proof}[Proof of Proposition \ref{prop:bi}]
The maximal time steps until convergence cannot exceed the number of possible system configurations. 
\end{proof}

\begin{proposition}
Let $G=([n], E)$ be a strongly connected directed graph and $f:\{0, 1\}^n\rightarrow \{0, 1\}$ a Boolean function. There exists a label-stabilizing protocol, $A_n$, that computes $f$, with $L_n=n+1$ and $R_n=2n$.  
\end{proposition}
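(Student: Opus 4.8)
The plan is to let every edge label carry a full candidate for the global input plus one spare bit, and to make each node the unique, unconditional ``source of truth'' for its own coordinate, so that arbitrary initial garbage is flushed out along shortest paths. I would take the label space to be $\Sigma=\{0,1\}^n\times\{0,1\}$, giving $L_n=\log(|\Sigma|)=n+1$; a label $(\hat x,b)$ is read as a guess $\hat x\in\{0,1\}^n$ for the input $(x_1,\dots,x_n)$ together with a guess $b$ for $f(x)$ (the spare bit lets the computed answer be stored in the label itself, which is what accounts for the ``$+1$''). Since $G$ is strongly connected, for each ordered pair $(k,i)$ with $k\neq i$ I fix, once and for all, an in-neighbour $p(k,i)$ of $i$ that lies one step closer to $k$, i.e.\ $\mathrm{dist}(k,p(k,i))=\mathrm{dist}(k,i)-1$. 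The reaction function $\delta_i$ then assembles its outgoing vector $\hat x^{\mathrm{out}}$ by writing its own input into coordinate $i$ unconditionally and, for every other coordinate $k$, copying coordinate $k$ from the label arriving on the edge $(p(k,i),i)$; it sets the spare bit to $f(\hat x^{\mathrm{out}})$ and emits output $y_i=f(\hat x^{\mathrm{out}})$. All of this routing data is static and is hard-wired into the $\delta_i$, so no ``known/unknown'' bookkeeping bit is required.

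The correctness argument proceeds by induction on the round number, tracking each coordinate separately. Because node $k$ writes $x_k$ into coordinate $k$ of all its outgoing labels on every activation, regardless of what it reads, that coordinate is correct from the first round on and can never again be corrupted; this re-assertion at the source is precisely what yields self-stabilization from an arbitrary initial labeling. Following the fixed pointers $p(\cdot,\cdot)$, a one-line induction then shows that coordinate $k$ of node $i$'s outgoing label is correct from round $\mathrm{dist}(k,i)+1$ onward. As $\mathrm{dist}(k,i)\le n-1$ in a strongly connected digraph, after at most $n$ rounds every label equals $(x,f(x))$ and every output equals $f(x)$. I would finish by checking that the all-$(x,f(x))$ labeling is a genuine fixed point of each $\delta_i$ --- immediate from the definition --- so the protocol is label-stabilizing, it computes $f$, and $R_n\le n\le 2n$ while $L_n=n+1$, as required.

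The one genuinely delicate point is convergence from an \emph{arbitrary} initial labeling rather than a benign one, and I expect that to be the crux of the write-up; the unconditional re-assertion of each node's own coordinate is exactly the mechanism that neutralises the adversarial initialisation, and strong connectivity (finiteness of all the distances $\mathrm{dist}(k,i)$) is what guarantees every coordinate eventually reaches every node. If one wants the round complexity to equal the stated $2n$ rather than beat it, the same pieces repackage as a gather-then-broadcast along a fixed in-branching and out-branching rooted at a designated leader: the static subtree partition again removes any need for a mask, and the protocol spends at most $n-1$ rounds gathering the input to the leader and at most $n-1$ broadcasting it back, for $R_n\le 2n$.
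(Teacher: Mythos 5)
Your construction is correct, but it is not the route the paper takes, so a comparison is in order. The paper designates node $1$ as a leader and fixes two spanning trees: an in-tree $T_2$ along which every node propagates, via coordinate-wise OR, a vector containing its own input bit and those of its $T_2$-descendants; node $1$ is then the only node that ever evaluates $f$, and it broadcasts the single answer bit down an out-tree $T_1$, giving the stated $R_n=2n$ ($n$ rounds to gather, $n$ to broadcast) with labels $(z,b)\in\{0,1\}^n\times\{0,1\}$. You instead run a leaderless, per-coordinate flood: every node unconditionally re-asserts its own coordinate and pulls each foreign coordinate $k$ from a fixed in-neighbour $p(k,i)$ on a shortest path from $k$, so \emph{every} node reconstructs the whole input and evaluates $f$ locally. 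Your per-coordinate induction (coordinate $k$ correct at node $i$ from round $\mathrm{dist}(k,i)+1$ on, hence stability within $\max_{k,i}\mathrm{dist}(k,i)+1\leq n$ rounds) is sound, the final labeling is indeed a fixed point, and the source re-assertion handles adversarial initialization exactly as you say. What each approach buys: yours converges in at most $n$ rounds rather than $2n$, needs no leader, and in fact the spare bit $b$ is redundant in your scheme (each node computes $f$ itself, so $L_n=n$ would suffice, strictly improving the proposition); the paper's version evaluates $f$ at only one node, with all others doing mere OR-and-forward, and lands exactly on the advertised $R_n=2n$. Your closing remark about repackaging as gather-then-broadcast along an in-branching and out-branching is essentially a description of the paper's proof, so you have both arguments in hand. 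One small caveat: both your protocol and the paper's are analyzed under the synchronous ($1$-fair) schedule, which is the setting the paper's complexity measures assume, so nothing further is owed there.
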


\begin{proof}[Proof of Proposition \ref{prop:genupperbd}] 
Since $G$ is strongly-connected there are two spanning trees $T_1, T_2$ rooted in node $1$. In $T_1$ there is a directed path from $1$ to every node $i\neq 1$, and in $T_2$ there is a directed path from every node $i\neq 1$ to $1$. For a node $i\neq 1$ denote by $p_1(i), p_2(i)\in [n]$ its parents nodes in $T_1$ and $T_2$, respectively, and by $c_1(i), c_2(i)\subseteq [n]$, its children sets in $T_1, T_2$. 
The label space is $\Sigma = \{0,1\}^{n+1}$. A label is composed of two components, $(z, b)$ where $z\in\{0, 1\}^n$ corresponds to the global input vector, and $b\in\{0, 1\}$ corresponds to the global output bit. For every $i\in [n]$, let $w_i\in \{0, 1\}^n$ denote the vector of all zeros except of the $i^{th}$ coordinate, at which it is $x_i$. Let $\text{OR}(z_{c_2(i)})$ be the coordinate-wise OR function of all the $z$ components in the incoming labels from all of $i$'s children in $T_2$.   

The reaction function of node $1$ satisfies $(\ell^{t}_{+1}, y_1^{t}) = \delta_1(\ell_{-1}^{t-1}, x_1)$, where:
\[
y_1^t = f(w_1\lor \text{OR}(z_{c_2(1)}^{t-1}))
\]
and
\[
\ell^t_{(1, j)} = 
\begin{cases}
(0^n, f(w_1\lor \text{OR}(z_{c_2(1)}^{t-1})) ) & \text{If } j\in c_1(1)\\
0^{n+1} & \text{Otherwise},
\end{cases}
\]
The reaction function of every node $i\neq 1$ satisfies $(\ell_{+i}^t, y_i^t) = \delta_i(\ell_{-i}^{t-1}, x_i))$, where:
\[ 
y_i^t = b_{p_1(i)}
\]
\[
\ell^t_{(i, j)} = 
\begin{cases}
(w_i \lor \text{OR}(z_{c_2(i)}^{t-1}), b_{p_1(i)} )& \text{If }j=p_2(i) \text{ and } j \in c_1(i) \\
(0^{n}, b_{p_1(i)} ) & \text{If }j \neq p_2(i) \text{ and } j \in c_1(i) \\ 
( w_i \lor \text{OR}(z_{c_2(i)}^{t-1}), 0) & \text{If }j = p_2(i) \text{ and } j \notin c_1(i) \\ 
0^{n+1} & \text{Otherwise}
\end{cases}
\]

	Informally, the nodes in $T_2$ propagate to node $1$ their own input together with the input of all nodes in their sub-tree in $T_2$. Node $1$ receives this information from its children, and together with its own private input, can calculate $f(x_1,\ldots, x_n)$. Then, node $1$ propagate the resulting bit in $T_1$.
    Note that within at most $n$ time steps node $1$ has a complete knowledge of $(x_1, \ldots, x_n)$, and within at most additional $n$ time steps, $f(x_1, \ldots, x_n)$ is propagated to every node in $G$. 
\end{proof}

\section{Proofs for Section~\ref{ssec:hardness}}\label{apdx:hardness}

\begin{theorem}
Let $A_n = (\Sigma,(\delta_1, \ldots, \delta_n))$ be a stateless protocol. Consider the following 2-party communication problem. Alice receives as input a description of $\delta_1$ and Bob receives $\delta_2$. They both have access to $\delta_3, \ldots, \delta_n $, and they need to decide whether $A_n$ is label $r$-stabilizing. 

For every $r$, deciding whether $A_n$ is label $r$-stabilizing requires $2^{\Omega(n)}$ bits of communication.    
\end{theorem}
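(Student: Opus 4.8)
The plan is to prove the lower bound by a reduction from a two-party communication problem whose complexity is already known to be exponential. A natural choice is set disjointness, $\mathrm{DISJ}_N$, where Alice holds $A\subseteq[N]$, Bob holds $B\subseteq[N]$, and the players must decide whether $A\cap B=\emptyset$; it is classical that this requires $\Omega(N)$ bits. Taking $N=2^{\Omega(n)}$ and embedding $A$ into the description of $\delta_1$ and $B$ into the description of $\delta_2$ (with $\delta_3,\ldots,\delta_n$ and the topology fixed and public), any protocol that decides label $r$-stabilization yields a protocol for $\mathrm{DISJ}_N$, so the claimed $2^{\Omega(n)}$ bound follows provided the reduction is correct. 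The descriptions of the reaction functions are allowed to be exponentially long, so encoding an exponential-size set presents no obstruction.

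First I would fix a snake-in-the-box code: an induced path (and, for the second regime, an induced cycle) in the hypercube $\{0,1\}^d$ with $d=\Theta(n)$ and length $N=2^{\Omega(d)}$. The defining feature I will exploit is that consecutive codewords differ in a single coordinate while non-consecutive codewords lie at Hamming distance at least two; this ``no shortcut'' property lets a local reaction function walk along the snake one step at a time and guarantees that the only way to pass from one codeword to an adjacent one is the intended single-bit flip. The shared functions $\delta_3,\ldots,\delta_n$ encode the static structure of the snake, while $\delta_1$ and $\delta_2$ encode, through $A$ and $B$, which transitions along the snake are ``enabled.''

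The heart of the argument is to arrange the reaction functions so that a non-stabilizing execution exists if and only if $A\cap B$ meets a fixed condition (say $A\cap B\neq\emptyset$): a common index $k\in A\cap B$ should correspond to a closed walk in the induced state-transition space along the snake, producing a labeling sequence that oscillates forever, whereas if no common index exists every $r$-fair execution is driven to a fixed point. Because the theorem must hold for \emph{every} $r$, I would present two reductions tuned to two regimes. For ``higher'' $r$ the loose fairness lets the adversarial schedule steer the system around an induced cycle, so a cyclic snake-in-the-box gadget suffices. For ``lower'' $r$ (near-synchronous activation) I would instead use a path gadget together with a rotating phase, designed so that the forced frequent activation of all nodes itself sustains the oscillation whenever a common index is present.

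The main obstacle I anticipate is establishing both directions of correctness for each regime: I must show that a common index genuinely forces a non-convergent $r$-fair execution, and that disjointness genuinely forces convergence of \emph{all} $r$-fair executions. This oscillation-versus-fixed-point analysis is delicate because it hinges on the interplay between the scheduling freedom permitted by $r$ and the single-bit-flip dynamics enforced by the snake's induced property. Ruling out unintended shortcuts or premature fixed points—either of which would break a direction of the reduction—in both the high-$r$ and low-$r$ gadgets is, I expect, the crux of the proof.
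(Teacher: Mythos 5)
Your overall architecture is the same as the paper's (snake-in-the-box gadgets on the clique, inputs embedded in $\delta_1,\delta_2$ with $\delta_3,\ldots,\delta_n$ public, two regimes of $r$, oscillation iff a condition on the inputs), and your high-$r$ half matches the paper's almost exactly: for $r\geq 2^{n/2}$ the paper reduces from \textsc{Set-disjointness}, with a snake walk that advances only while nodes $1$ and $2$ both show label $1$, a state restorable (by activating both nodes twice in a row) only at a snake position whose index lies in $A\cap B$; the scheduler's freedom to leave nodes $1$ and $2$ untouched for up to $r$ steps is what carries the walk between consecutive occurrences of that index. The genuine gap is your insistence on disjointness in the \emph{low}-$r$ regime. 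First, a quantitative obstruction: in any such ``parking'' reduction, $r$-fairness forces nodes $1$ and $2$ to be re-activated at least once every $r$ steps, so the walk can only travel $O(r)$ positions between recharges at a common index, and hence the gadget can encode only $O(r)$ disjointness elements---this is exactly why the paper caps its disjointness gadget at $q=r/2$ and invokes it only when $r\geq 2^{n/2}$. For small $r$ (at $r=1$ the unique fair schedule is the synchronous one) this yields an $\Omega(r)$ bound, not $2^{\Omega(n)}$. Second, a structural obstruction: with small $r$ the oscillation must be sustained by dynamics in which nodes $1,2$ react at essentially every snake position, so the sustaining condition is naturally \emph{universal} over positions, not existential. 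That is precisely why the paper switches to \textsc{Equality} there: the walk around the snake continues iff Alice's and Bob's strings agree on every segment of length $3r$, and any disagreement trips a collapse gadget (nodes $3,4$) within $3r$ steps, giving $|S|/(3r)=2^{\Omega(n)}$ bits.

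An existential condition ($\exists k\in A\cap B$) could drive a low-$r$ oscillation only if the system \emph{remembers}, while traversing the exponentially many positions that give it no support, that the common index was visited. But statelessness is exactly the obstacle your sketch ignores: a node cannot see its own outgoing label, so a flag cannot simply be held; it must be maintained by other nodes echoing it, and such pair-copying memory can be subverted by an adversarial $r$-fair schedule (e.g., one that preserves a spurious flag by never activating the flag-holders while the walk is in the clearing region), producing oscillations even when $A\cap B=\emptyset$ and breaking the ``disjoint $\Rightarrow$ stabilizing'' direction. Repairing this needs $r$-dependent clearing windows of length $\Omega(r)$, position-stamped reports to defeat stale information under asynchrony, and a full robustness argument---none of which appears in your plan, and your phrase ``the forced frequent activation of all nodes itself sustains the oscillation'' points the wrong way: frequent forced activation is what \emph{kills} an oscillation powered by a single index unless memory is added. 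So either switch the low-$r$ regime to \textsc{Equality} (the paper's route) or supply the memory mechanism with its adversarial-schedule analysis; as written, that half of your reduction does not go through.
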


\begin{proof}[Proof of Theorem~\ref{thm:cc}]

The proof is divided into two parts, the first for $r\leq 2^{n/2}$ and the second for $r\geq 2^{n/2}$. In both proofs we utilize the hypercube $Q_{n-2}$ for construction, and specifically, the \emph{snake-in-the-box} in $Q_{n-2}$.

\begin{definition}
An induced simple cycle in an hypercube is called a \emph{snake-in-the-box}.
\end{definition}

\begin{theorem}[Abbott and Katchalski (\cite{abbott1988snake})]
Let $s:\mathbb{N}\rightarrow\mathbb{N}$ be the function that maps $d$ to the length of the maximal snake in $Q_{d}$. For every $d\geq 8$:
\[
\lambda 2^d \leq s(d) \leq 2^{d-1}
\]
For some $\lambda\geq 0.3$. 
\end{theorem}

In both reductions, for $r\leq 2^{n/2}$ and for $r\geq 2^{n/2}$, we construct protocols over the clique $K_n$ with the label space $\Sigma = \{0, 1\}$.
We define reaction functions that maps the same outgoing label to all neighbors. Formally, for every $i$ and every $j, k\neq i$ $\ell_{(i, j)}=\ell_{(i, k)}$. Since the graph considered is the complete graph, we use the notation $\ell_i \in \{0, 1\}$ to refer to $i$'s outgoing labeling. Thus, \[\ell_{-i} = (\ell_1, \ldots, \ell_{i-1}, \ell_{i+1}, \ldots, \ell_{n})\].
We also redefine $\delta: \{0, 1\}^n\times 2^{[n]} \to \{0, 1\}^n$, to depict the transition in the outgoing labeling of all nodes: 
\[
(\ell_1^t, \ldots, \ell_n^t) = \delta(\ell_1^{t-1}, \ldots, \ell_n^{t-1}) 
\]

\begin{theorem}\label{thm:cc4smallRs}
Deciding if a stateless protocol $A$ is label $r$-stabilizing for $r\leq 2^{n/2}$ requires $2^{\Omega(n)}$ bits of communication.    
\end{theorem}
\begin{proof} 
To simplify presentation, we first prove for $r=1$, and then generalize the proof for every $r\leq 2^{n/2}$. 
	Let $S$ be a maximal snake in $Q_{n-2}$. Alice and Bob gets as input $x, y \in \{0, 1\}^{|S|}$, respectively. Their reaction functions are: 
\[
\delta_1(\ell_{-1}) = 
\begin{cases}
x_i & \text{If } (\ell_3, \ldots, \ell_n) = s_i\in S\\
1             & \text{Otherwise }
\end{cases}
\]

\[
\delta_2(\ell_{-2}) = 
\begin{cases}
y_i & \text{If } (\ell_3, \ldots, \ell_n) = s_i \in S\\
0             & \text{Otherwise }
\end{cases}
\] 

The reaction function $\delta_j$ of every $j\geq 3$:
\[
\delta_j(\ell_{-j}) = 
\begin{cases}
0 & \text{If } \ell_1\neq \ell_2\\
\phi_j((\ell_3, \ldots, \ell_{j-1}, \ell_{j+1}, \ldots, \ell_{n}))    & \text{Otherwise }
\end{cases}
\]
Note that in order to define $\phi_3, \ldots, \phi_n$ it is suffices to orient $Q_{n-2}$. 
Observe that there exists an edge $(v_i,v_j)\in Q_{n-2}$ such that $v_i,v_j\notin S$. As the total number of edges is $(n-2)2^{n-3}$, and the number of edges that have at least one end coincide with $S$ is $|S|+(n-4)|S| \leq (n-3)2^{n-3}$, there are at least $2^{n-3}$ edges that do not coincide with $S$. W.l.o.g we assume $v_i = 0^{n-2}$.
We orient $S$ so as to create a directed cycle (in one of the two possible ways); orient $0^{n-2}$ towards $v_j$; orient all other edges towards $S$.

\begin{claim}
If $x\neq y$ then the protocol is label $1$-stabilizing.
\end{claim}
\begin{proof}
Let $\ell^0$ be the initial labeling.
If  $\ell_1^0\neq \ell_2^0$, then after one round $(\ell_3,\ldots,\ell_n)=0^{n-2}$. After the second round, $\ell^2$ equals to either $(1,0,v_j)$ or $(1,0,0^{n-2})$ and in both cases $\delta(\ell^2, [n]) = (1,0,0^{n-2})$. Note that this is a stable labeling.

If $\ell_1^0 = \ell_2^0$, from our orientation and the fact that $x\neq y$, at some point we reach a labeling in which $\ell_1\neq l_2$, and as described above after three rounds the labeling converges to $(1,0,0^{n-2})$.  
\end{proof}

\begin{claim}\label{clm:xyeq}
If $x=y$ then the protocol is not label $1$-stabilizing.
\end{claim}
\begin{proof}
For every initial labeling of the form $\ell^0 = (\alpha,\alpha, s_i)$, since $x=y$ 
we have that for every $\ell^t = (\alpha, \alpha, s_{i(t)})$, $\delta(\ell^t, [n]) = (\alpha, \alpha, s_{i(t)+1 \text{ mod }|S| })$, thus, the protocol oscillates.
\end{proof}
It follows that deciding if the protocol is label $1$-stabilizing is equivalent to deciding if $x=y$. Since the communication complexity of $\text{EQ}_d$ is $\Omega(d)$ we get that we need at least $|S| = s(n-2) \geq \lambda 2^{n-2}$ bits of communication. 

To generalize the reduction for every $r\leq 2^{n/2}$ we consider the $Q_{n-4}$ cube, instead of $Q_{n-2}$. We define $S$ to be the largest snake in $Q_{n-4}$, and partition $S = (S_1,\ldots, S_k)$ so that each $S_i$ is of size $3r$, and we modify the reaction functions as follow:

\[
\delta_1(\ell_{-1}) = 
\begin{cases}
x_i & \text{If } (\ell_3,\ell_4) \neq (1,1) \text{ and } (\ell_5, \ldots, \ell_n) = s_j\in S_i\\
1             & \text{Otherwise }
\end{cases}
\]

\[
\delta_2(\ell_{-2}) = 
\begin{cases}
y_i & \text{If } (\ell_3,\ell_4) \neq (1,1) \text{ and } (\ell_5, \ldots, \ell_n)  = s_j \in S_i\\
0             & \text{Otherwise }
\end{cases}
\]

\[
\delta_3(\ell_{-3}) = \ell_4
\]

\[
\delta_4(\ell_{-4}) = 
\begin{cases}
1 & \text{If } \ell_3 = 1 \text{ or } \ell_1\neq \ell_2 \\
0             & \text{Otherwise }
\end{cases}
\] 

For every $j\geq 5$:
\[
\delta_j(\ell_{-j}) = 
\begin{cases}
\phi_j((\ell_5, \ldots, \ell_{j-1}, \ell_{j+1}, \ldots, \ell_{n})) & \text{If } (\ell_3,\ell_4) \neq (1,1) \\
0             & \text{Otherwise }
\end{cases}
\]
Where $\phi$ is defined the same way as before.

If $x = y$ then the same oscillation as in \ref{clm:xyeq} works here too, and since every $1$-fair scheduling is also $r$-fair scheduling we get that the protocol is not label $1$-stabilizing.

If $x \neq y$, note that if at some point, $\ell_3=\ell_4=1$ the labeling converge to $(1, 0, 1, 1, 0^{n-4})$. Also observe that if at some point $(\ell_5^t, \ldots, \ell_n^t)  = s_j \in S$ and $(\ell_3,\ell_4) \neq (1,1)$ then from the definition of $\phi$ and the reaction functions, at the next time step $(\ell_5^{t+1}, \ldots, \ell_n^{t+1})  = s_{j+1}$, since there is exactly one node in $\{5, \ldots, n\}$ that wants to change its labeling bit. Assume that there exists an oscillation. If at some point $(\ell_5^t, \ldots, \ell_n^t)  = s_j \in S$ then the labeling must cycling $S$. But as there exists $i$ s.t. $x_i\neq y_i$ there exist $3r$ consecutive time steps when the labeling changes through $S_i$. After the first $r$ time steps nodes $1,2$ must change their labeling to $(\ell_1,\ell_2)=(1,0)$. After the next $r$ time steps node $4$ must change its labeling to $\ell_4=1$ and after the next $r$ time steps, $\ell_3=1$ also. From that point, the labeling must converge, a contradiction. If, on the other hand, there is no time step for which $(\ell_5^t, \ldots, \ell_n^t)  = s_j \in S$, then again $(\ell_1,\ell_2)=(1,0)$ for $2r$ consecutive time steps, so the labeling must converge in this case, too. Thus if $f$ is the corresponding function of deciding whether a protocol is label $r$-stabilizing  
\[
CC(f) \geq |x| =\frac{|S|}{3r} \geq \frac{\lambda2^{n-4}}{3\cdot 2^{\frac{n}{2}}} = \frac{\lambda}{3}2^{\frac{n}{2}-4} = 2^{\Omega(n)}
\]
\end{proof}

\begin{theorem}
Deciding if a stateless protocol $A$ is label $r$-stabilizing for $r\geq 2^{n/2}$ requires $2^{\Omega(n)}$ bits of communication.   
\end{theorem}
\begin{proof}
We show a reduction from \textsc{Set-disjointness}. Let $S$ be a maximal snake in $Q_{n-2}$. Alice and Bob get as input $x,y\in \{0, 1\}^{q}$ where $q = r/2$ if $r\leq |S|$, and $q = |S|$ for $r\geq |S|$. These vectors are characteristic vectors, indicate which subsets out of the $2^q$ subsets of $[q]$ each party holds. We divide $S = (s_1, \ldots, s_{|S|})$  into consecutive paths of length $q$, $(S_1, \ldots, S_{k-1})$. If $q$ divides $|S|$ then we define $S_k = \emptyset$, and otherwise, $S_k$ is the remaining path of length $|S|\text{ mod }q$. We construct the mapping $I(j)  = 1 + (j-1)\text{ mod }q$ for every $j$ such that $s_j\in S_{i}$.

The reaction  functions, $\delta_1$ and $\delta_2$:
\[
\delta_1(\ell_{-1}) = 
\begin{cases}
x_{I(j)} & \text{If } \ell_2 = 0 \text{ and } (\ell_3, \ldots, \ell_n) = s_j\in S_i \\
0             & \text{Otherwise }
\end{cases}
\]

\[
\delta_2(\ell_{-2}) = 
\begin{cases}
y_{I(j)} & \text{if } \ell_1 = 0 \text{ and } (\ell_3, \ldots, \ell_n) = s_j \in S_i\\
0             & \text{otherwise }
\end{cases}
\] 

The reaction function $\delta_j$ of every $j\geq 3$:
\[
\delta_j(\ell_{-j}) = 
\begin{cases}
\phi_j((\ell_3, \ldots,\ell_{j-1}, \ell_{j+1}, \ldots, \ell_n) & \text{if } (\ell_1,\ell_2) = (1, 1)\\
0             & \text{otherwise }
\end{cases}
\]
Note that in order to define $\phi_3, \ldots, \phi_n$ it is suffices to orient $Q_{n-2}$. 
Observe that there exists an edge $(v_i,v_j)\in Q_{n-2}$ such that $v_i,v_j\notin S$. As the total number of edges is $(n-2)2^{n-3}$, and the number of edges that have at least one end coincide with $S$ is $|S|+(n-4)|S| \leq (n-3)2^{n-3}$, there are at least $2^{n-3}$ edges that do not coincide with $S$. W.l.o.g we assume $v_i = 0^{n-2}$.
We orient $S$ so as to create a directed cycle (in one of the two possible ways); orient $0^{n-2}$ towards $v_j$; orient all other edges towards $S$.

\begin{claim}\label{clm:ndistj}
If $E^A \cap E^B \neq \emptyset$ then the protocol is not label $r$-stabilizing.
\end{claim}
\begin{proof}
Assume $k\in [q]$ is the index corresponding to a subset in $E^A \cap E^B$, and we denote by $j_i$ the index $s_j\in S_i$ s.t. $I(j)=k$. We initialize the system with $(1, 1, s_1)$, and the following $r$-fair scheduling:
\begin{enumerate}
\item For $k-1$ times do:
\begin{enumerate}
\item Activate $1,2$ twice.
\item Activate nodes $2, \ldots, n-2$ for $q$ time-steps.
\end{enumerate}
\item Activate $2, \ldots, n-2$ for $q+|S_k|$ time-steps. 
\item Go back to step $1$.
\end{enumerate}
Note that the maximal time steps between activation is $q+|S_k|< 2q \leq r$, thus, the scheduling is $r$-fair. The system is in oscillation from the definition of the reaction functions.
\end{proof}

\begin{claim}\label{clm:distj}
If $E^A \cap E^B = \emptyset$ then the protocol is label $r$-stabilizing.
\end{claim}
\begin{proof}
Note that if at some point it is never the case that $(\ell_1,\ell_2) = (1, 1)$, then from that point, the system will converge to $0^{n}$. On the other hand, if $(\ell_1,\ell_2) = (1, 1)$ infinitely many times then: (1) they were activated simultaneously  since they pick $1$ only if the the other has label $0$ (2) $(\ell_3, \ldots, \ell_n)=s_j\in S_i$ for some $i$ (3) $y_{I(j)} = x_{I(j)}=1$, meaning that $E^A \cap E^B \neq \emptyset$.    

\end{proof}
Since the communication complexity of \textsc{Set-disjointness} is $\geq q$, if $f$ is the corresponding function of deciding whether a protocol is label $r$-stabilizing, where $|S| > r \geq 2^{n/2}$ then:
\[
CC(f) \geq q = r/2 \geq 2^{n/2 - 1} = 2^{\Omega(n)} 
\]
If $|S|\leq r$ then:
\[
CC(f) \geq q = |S| \geq \lambda 2^{n} = 2^{\Omega(n)}  
\]
\end{proof}
\end{proof}

\begin{theorem}
For every $r$, deciding whether a stateless protocol is label $r$-stabilizing is PSPACE-complete.
\end{theorem}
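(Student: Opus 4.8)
The plan is to establish PSPACE-completeness by proving membership and hardness separately, with the bulk of the effort directed at hardness. For membership in PSPACE, I would observe that a protocol given by circuits induces a global transition graph whose vertices are configurations consisting of an edge labeling together with a bounded ``countdown'' record tracking $r$-fairness constraints, exactly as in the states-graph $G'$ of the proof of \thmref{thm_impossible}. Deciding whether the protocol fails to be label $r$-stabilizing amounts to detecting a reachable cycle (an oscillation) in this graph reachable from some initial labeling under some $r$-fair schedule. Each configuration is describable in polynomial space (the labeling has $|E|\cdot L_n$ bits and the countdown vector has $n\lceil\log r\rceil$ bits, with $r\le |\Sigma|^{|E|}$ by \namedref{Proposition}{prop:bi} so $\log r$ is polynomial), and the existence of a reachable oscillation is a reachability-type question in a graph of exponentially many but polynomially-describable vertices. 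Since reachability and cycle-detection in such implicitly-presented graphs lie in $\mathrm{NPSPACE}=\mathrm{PSPACE}$ by Savitch's theorem, the problem is in PSPACE.

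For hardness, the approach I would follow, as the author signals, is a self-stabilization-preserving reduction from a canonical PSPACE-complete oscillation problem, namely \textsc{String-oscillation}: given a circuit computing a map $g:\{0,1\}^k\to\{0,1\}^k$ and a start string, decide whether iterating $g$ eventually cycles without reaching a fixed point. I would proceed in two stages. First, I would encode the iteration dynamics as a \emph{stateful} protocol, in which each node's reaction function may read its own current outgoing labels in addition to incoming labels; here one can naturally lay out the $k$ bits of the string across nodes/edges and have the reaction functions apply the circuit for $g$ so that the protocol reaches a fixed point precisely when the string iteration does, and oscillates precisely when the iteration cycles. The correctness of this stage reduces stabilization of the protocol to termination of the string iteration, and the $r$-fairness parameter can be handled uniformly since, as stated, the result holds for every $r$ (in particular one may first argue for the synchronous case $r=1$ and then lift). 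Second, I would apply a generic gadget transforming any stateful protocol into a stateless one that preserves label-stabilization: intuitively, a node's outgoing label is made visible to itself by routing it through an auxiliary neighbor or by encoding the previous outgoing value into the incoming label stream, so that statelessness does not lose the memory the stateful protocol relied upon, while neither introducing spurious oscillations nor destroying genuine ones.

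I expect the main obstacle to be the \emph{stateful-to-stateless transformation} and verifying that it is genuinely stabilization-preserving in both directions. The delicate point is that a stateless node, by definition, cannot consult its own previous output, so the gadget must reconstruct this information purely from incoming labels; one must check that the auxiliary nodes and edges introduced do not create new fixed points or new cycles that would corrupt the reduction, and that every oscillation of the stateful protocol corresponds to an oscillation of the stateless one and vice versa, under \emph{every} $r$-fair schedule rather than just the synchronous one. A secondary subtlety is ensuring the whole reduction is computable in polynomial time (the circuits describing the reaction functions must have size polynomial in the input instance), and that the argument is uniform in $r$; I would handle the $r$-dependence by designing the gadget so that its stabilization behavior is insensitive to the schedule's fairness parameter, mirroring how \thmref{thm:cc} treats ``higher'' and ``lower'' regimes of $r$ but here aiming for a single construction valid for all $r$.
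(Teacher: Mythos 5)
Your overall plan coincides with the strategy the paper announces in Section~\ref{ssec:hardness} (reduce \textsc{String-oscillation} to label $r$-stabilization of a \emph{stateful} protocol, then apply a stabilization-preserving stateful-to-stateless transformation), and your PSPACE membership argument via cycle detection in the implicit configuration graph is sound (indeed, the paper never spells membership out). But the two constructions that constitute the actual proof are left unbuilt, and the one mechanism you sketch for the crucial step would fail. You propose making a node's previous outgoing label visible ``by routing it through an auxiliary neighbor or by encoding the previous outgoing value into the incoming label stream.'' A single echo neighbor does not preserve stabilization: consider the stateful protocol in which every node simply keeps its outgoing label unchanged (trivially label $r$-stabilizing for all $r$). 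In the pair gadget, the node reads only its echo and re-emits it, while the echo node copies the node; initializing the two with different values $A\neq B$ and running synchronously makes them swap $A$ and $B$ forever---a spurious oscillation created entirely by the gadget. The difficulty is that with two copies of a value there is no way to decide which is authoritative after an adversarial initialization or a desynchronizing schedule. The paper's construction (Theorem~\ref{thm:stateful-less}) resolves exactly this: each node of $K_n$ is triplicated into a ``metanode'' in $K_{3n}$, a node's own previous label is decoded as the \emph{majority} of its two siblings' labels, and an absorbing error symbol $\omega$ is emitted whenever a node's view is inconsistent (and also when the decoded labeling is stable of the original protocol), so that any desynchronization forces global convergence to $\omega^{3n}$ rather than a spurious cycle. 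Without this, or an equivalent tie-breaking and error-absorbing device, the ``no new oscillations'' direction of your reduction cannot be established.

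A second gap is your handling of general $r$: arguing ``for the synchronous case $r=1$ and then lifting'' covers only one direction. Since every $1$-fair schedule is $r$-fair, an oscillation exhibited synchronously indeed shows the protocol is not label $r$-stabilizing for any $r\geq 1$; but on NO instances you must show stabilization under \emph{every} $r$-fair schedule, and this class of schedules grows with $r$, so nothing is inherited from the $r=1$ case. The paper makes the stateful construction schedule-insensitive by a handshake: the controller node advances the simulation (increments the pointer and applies $g$) only after it observes that the pointed-to node has adopted the previously computed character, so no fair schedule can produce a cycle unless the string iteration itself fails to terminate. Your stage-one sketch does not address this, and it is also where statefulness genuinely enters (nodes must \emph{hold} their characters between updates), which is the reason the intermediate stateful model is needed at all.
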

\begin{proof}[Proof of Theorem~\ref{thm:pspace}]

We show that deciding whether a protocol is label $r$-stabilizing is PSPACE complete.
Our proof of this theorem relies on an interesting ``stabilizing preserving reduction”. 
 We first (\thmref{thm:stateful}) show that the \textsc{String-oscillation} problem can be reduced to the problem of deciding whether a \emph{stateful protocol}, in which reaction functions may depend on their incoming and outgoing labels, is label $r$-stabilizing. Next ((\thmref{thm:stateful-less})), we show that we can construct a stateless protocol from a stateful protocol, without altering its stabilization properties.

\begin{theorem}\label{thm:stateful}
For every $r$, deciding whether a \emph{stateful protocol} is label $r$-stabilizing is PSPACE-complete.
\end{theorem}
\begin{proof}
We construct a reduction from the  \textsc{String-oscillation} problem defined as follow. Given a Boolean circuit that implements $g : \Gamma^m \rightarrow \Gamma \cup \{halt\}$, for some alphabet $\Gamma$, the objective is to determine whether there exists an initial string $T=(T_1,...,T_{m})\in \Gamma^m$ such that the following procedure does not halt.
\begin{enumerate} 
\item $i \leftarrow 1$
\item While $g(T)\neq$ \emph{halt} do
  \begin{enumerate}
  \item $T_i \leftarrow g(T)$
  \item $i \leftarrow 1+(i \text{ mod } m)$
  \end{enumerate}
\end{enumerate}
 
Given an instance of  \textsc{String-oscillation}, i.e., a function $g$, we construct a stateful protocol, in which reaction functions may depend on their incoming and outgoing labels over the clique $K_n$, where $n=m+1$. We define reaction functions that maps the same outgoing label to all neighbors. Formally, for every $i$ and every $j, k\neq i$ $\ell_{(i, j)}=\ell_{(i, k)}$. Since the graph considered is the complete graph, we use the notation $\ell_i \in \Sigma$ to refer to $i$'s outgoing labeling. Thus, we redefine $\delta_i: \Sigma^n\to  \Sigma$ so that,
\[
\ell_i^t = \delta_i(\ell_1^{t-1}, \ldots, \ell_n^{t-1})
\]
We also redefine $\delta: \Sigma^n\times 2^{[n]} \to \Sigma^n$, to depict the transition in the outgoing labeling of all nodes: 
\[
(\ell_1^t, \ldots, \ell_n^t) = \delta(\ell_1^{t-1}, \ldots, \ell_n^{t-1}) 
\]

\begin{enumerate}
\item $\Sigma = [m]\times \Gamma\cup \{ \text{halt}\}$.
\item The reaction function of each node $i<n$:

\[
\delta_i((k_1, \alpha_1), \ldots, (k_n, \alpha_n), (j, \gamma)) = 
\begin{cases}
(1, \text{halt}) & \text{If } \gamma = \text{halt} \\
(1, \gamma) & \text{If } \gamma \neq \text{halt} \text{ and } j=i\\
(1, \alpha_i)             & \text{Otherwise }
\end{cases}
\]
\item The reaction function of node $n$: 
\[
\delta_{m+1}((k_1, \alpha_1), \ldots, (k_n, \alpha_n), (j, \gamma)) = 
\begin{cases}
(1, \text{halt}) & \text{If } \gamma = \text{halt} \\
((j+1) \text{ mod } m, \text{ }g(\alpha_1, \ldots, \alpha_m))  & \text{If } \gamma \neq \text{halt} \text{ and } \alpha_j=\gamma\\
(j, \gamma)             & \text{Otherwise }
\end{cases}
\]
\end{enumerate}

\begin{claim}
If there exists $T'\in \Gamma^t$ s.t. the procedure doesn't terminate, then the protocol is not label $r$-stabilizing. 
\end{claim}
\begin{proof}
Starting with the initial outgoing labeling $\ell^0_i = (0,T'_i)$ for every $i<n$ and $\ell^0_{(m+1)} = (1,g(T'))$, every fair scheduling simulates the procedure, thus the protocol won't converge.
\end{proof}

\begin{claim}
If the protocol is not label $r$-stabilizing, then there exists $T'\in \Gamma^t$ s.t. the procedure doesn't terminate. 
\end{claim}
\begin{proof}
If the protocol is not label $r$-stabilizing, then there exists an initial labeling $\ell^0$ and $\sigma$ a $r$-fair scheduling so that the run oscillates. Let $t$ be a time step such that:
\begin{enumerate}
\item $t > 2rm$. 
\item $\ell^t_{(m+1)} = (m, \gamma)$ and $\ell^t_{m} = \gamma$.
\item $(m+1)\in \sigma(t)$.
\end{enumerate}
Since the scheduling is $r$-fair we know such $t$ exists. Denote $(1, \alpha_i) = \ell^t_{i}$, for every $i<m+1$, and note that $\ell^{t+1}_{i} = \ell^t_{i}$ and that $\ell^{t+1}_{(m+1)} = (1, g(\alpha_1, \ldots, \alpha_m))$.
Since any fair scheduling simulates the procedure, the run of the procedure with the initial string  $T = (\alpha_1, \ldots, \alpha_m)$ doesn't terminate.   
\end{proof}

\end{proof} 

\begin{theorem}\label{thm:stateful-less}
For every stateful protocol $A=(\Sigma, \delta)$ on $K_n$ there exists a stateless protocol $\bar{A}=(\bar{\Sigma}, \bar{\delta})$ on $K_{3n}$ such that $A$ is label $r$-stabilizing if and only if $\bar{A}$ is label $r$-stabilizing.
\end{theorem}

\paragraph{Topology Construction}
We map each node $i$ in $K_n$ to a triplet of nodes $\langle i_1, i_2, i_3 \rangle$ in $K_{3n}$ which we call \emph{metanode}.

Before proceeding to the protocol construction, we formulate some definitions needed for our formal statements.

\begin{definition}[Weakly and Strongly Consistent Metanode]
We say a metanode is \emph{weakly-consistent} if at least two of the nodes in the metanode have the same (output) label, which cannot be $\omega$. We say a metanode is \emph{strongly-consistent} if all nodes in the metanode have the same label.
\end{definition}

\begin{definition}[Weakly and Strongly Consistent Labeling]
A labeling is said to be \emph{weakly-consistent} if all metanodes are weakly-consistent, and all but at most one metanode are strongly-consistent.
A labeling is said to be \emph{strongly-consistent} if all metanodes are strongly-consistent.
\end{definition}

\begin{definition}[Corresponding Labeling]
For every weakly-consistent labeling $\bar{\ell}\in \Sigma^{3n}$ of $K_{3n}$ we call the unique labeling $\ell\in \Sigma^n$ \emph{ "$\bar{\ell}$'s corresponding labeling"} defined as:
\[
\ell_i = \text{MAJORITY}\{\bar{\ell}_{i_1}, \bar{\ell}_{i_2}, \bar{\ell}_{i_3}\} 
\].
\end{definition}

\begin{definition}[Consistent View]
We say that a node $i_j$ has a \emph{consistent-view} if (1) all matanodes besides its own metanode are strongly-consistent, and (2) the other two nodes in its own metanode have the same label, which cannot be $\omega$.
\end{definition}

\paragraph{Protocol Construction:}
We define the protocol $\bar{A}=(\bar{\Sigma}, \bar{\delta}_1, \ldots, \bar{\delta}_{3n})$ on $K_{3n}$ as follow:

\begin{enumerate}
\item $\bar{\Sigma} = \Sigma \cup \{\omega\}$.
\item The reaction function of node $i_j$ defined as follow:
\[
\bar{\delta}_{i_j}(\bar{\ell}_{-i_j}) = 
\begin{cases}
\omega, & \text{if $i_j$'s view is not  consistent}\\
\omega,           & \text{otherwise, if $L$ is a stabilizing labeling}\\
\delta_i(\ell), &  \text{otherwise}

\end{cases}
\]
Where $\ell$ is $\bar{\ell}$ corresponding labeling.
Note that the reaction function is well define since if $i_j$'s view is consistent it must be that $\bar{\ell}$ is weakly-consistent. 
\end{enumerate}

\begin{claim}
If $A$ is not label-stabilizing on $K_n$ then $\bar{A}$ is not label-stabilizing on $K_{3n}$.
\end{claim}
\begin{proof}
Let $\ell^0$ and $\sigma$ be an initial labeling and a fair scheduling for which $A$ oscillates. Then $\bar{A}$ oscillates on the fair scheduling, $\bar{\sigma}(t) = \bigcup_{i\in \sigma(t)} \{i_1, i_2, i_3\}$, 
and the initial labeling, $\bar{\ell}^0_{i_j} = \ell^0_i$.
\end{proof}

\begin{lemma}\label{lem:wc}
Let $\bar{\sigma}$ be a fair scheduling and $\bar{\ell}^0$ the initial labeling, such that each labeling in the labeling sequence, $\bar{\ell}^0, \bar{\ell}^1, \ldots$, is weakly-consistent.
Then, 
\begin{enumerate}
\item If $\bar{\ell}^k$ is weakly-consistent, and $i_j$ was activated, then $\delta_i(\ell^k) = \ell^k_i$.
\item If $\bar{\ell}^k$ is strongly-consistent and $P$ is the set of nodes who alter their label, then either $P$ is empty, or $|P|\geq 2$ and in this case, it contains complete metanodes and at most one partial metanode.
\end{enumerate}
\end{lemma}

\begin{proof}
\begin{enumerate}
\item Assume $\bar{\ell}^k$ is weakly-consistent, and let $i_j$ be the node that differ from the other nodes in $i$'s metanode. If $i_j$ is not activated or if the number of nodes that are activated is greater than two, then $\bar{\ell}^{k+1}$ cannot be weakly-consistent. Thus, either only $i_j$ was activated or $i_j$ and another node, $u_c$.

Note that $i_j$ has a consistent view, therefore $\bar{\ell}^{k+1}_{i_j} = \delta_i(\ell^k)$. Assume $\delta_i(\ell^k)\neq \ell^k_i$. If $i_j$ wasn't the only node that was activated then, $\bar{\ell}^{k+1}_{u_c} = \omega$, and as $i_j$ still differ from the nodes in $i$s metanode, we get that $\bar{\ell}^{k+1}$ is not weakly-consistent. If $i_j$ was the only node that was activated then, since the scheduling is fair after finite number of steps, other node must be activated, thus, we again have a non weakly-consistent labeling. Therefore, $\delta_i(\ell^k) = \ell^k_i$. If $i_j$ was not the only node that was activated then observe that the corresponding label of $\bar{\ell}^{k+1}$ hasn't change, that is $\ell^{k+1}=\ell^{k}$ and that $u_c$ is the node that differ from the nodes in its metanode in $\bar{\ell}^{k+1}$, so $\delta_u(\ell^k) = \ell^k_u$. 

\item Assume $\bar{\ell}^k$ is strongly-connected and let $P$ be the set of nodes who alter their label, i.e., $P = \{ u_c : u_c\in \bar{\sigma}(k) \text{ and } \bar{\ell}^k_{u_c}\neq \bar{\ell}^{k+1}_{u_c} \}$. If $P = \{i_j\}$ for some $i_j$, then note that $\bar{\ell}^{k+1}$ is weakly-connected and that $\delta_i(\ell^{k})\neq \ell^{k}_i$, which means that $\delta_i(\ell^{k+1})\neq \ell^{k+1}_i$, and from previous section this implies that $\bar{\ell}^{k+2}$ is not weakly-consistent. Thus, either $P=\emptyset$ or $|P|\geq 2$. 
Observe that if $P$ contains two non complete metanodes then $\bar{\ell}^{k+1}$ is not weakly-consistent, thus $P$ has at most one incomplete metanode.

\end{enumerate}
\end{proof}

\begin{claim}
If $A$ is label-stabilizing on $K_n$ then $\bar{A}$ is label-stabilizing on $K_{3n}$.
\end{claim}
\begin{proof}\label{prf:kn3kn}
First observe that if at some point the labeling in not weakly-consistent, or if the corresponding labeling is a stable, then from that point, for any fair scheduling the labeling will converge to $\omega^{3n}$. Assume that there exist an initial labeling $\bar{\ell}^0$ and a fair scheduling $\bar{\sigma}$ for which $\bar{A}$ doesn't converge. Then it must be that every labeling in $\bar{\ell}^0, \bar{\ell}^1, \ldots$ is weakly-consistent and its corresponding labeling is not stable. Consider the subsequence of strongly-consistent labelings $\{\bar{\ell}^{n_k}\}$ and let $P^k = \{ u : u\in \bar{\sigma}(n_k) \text{ and } \bar{\ell}^{n_k}_{u}\neq \bar{\ell}^{n_k+1}_{u} \}$.
We define the following scheduling, $\sigma$:

\[
\sigma(k) = S_1 \cup S_2 \cup S_3
\]
Where:
\begin{enumerate}
\item $S_1$ is the set of nodes who don't change their label in the corresponding labeling $L^k$. Formally, \[
S_1 = \{i : i_j\in \bar{\sigma}(n_k)\setminus P^k\}
\]
\item $S_2$ is the set of nodes that change their label and at least two of the nodes in their metanodes were activated. Formally, 
\[
S_2 = \{i : |\{i_1, i_2, i_3\}\cap P^k| \geq 2\}
\]
\item $S_3$ is the set of nodes that were activated in all the weakly but not strongly consistent labelings between the previous strongly-consistent labeling and the current labeling. Formally, 
\[
S_3 = \{i : i_j \in \bar{\sigma}(t) \text{ for every } n_{k-1} < t < n_k\}
\]
\end{enumerate}
We claim that $\sigma$ is a fair scheduling. To see this, observe that from lemma \ref{lem:wc} for every $t$ such that $\bar{\ell}^t$ is not strongly-consistent $\bar{\sigma}(t)\subseteq \sigma(k)$ where $t\leq n_k$ is the next strongly-consistent labeling, and that if $\bar{\ell}^t = \bar{\ell}^{n_k}$ is strongly-consistent then $i_j\in \bar{\sigma}(n_k)$ implies $i\in \sigma(k)$ unless only $i_j$ was activated among $i$'s metanode. Since $\bar{\ell}^{n_k +1}$ is not strongly-consistent, from the proof of lemma \ref{lem:wc} we know that $i_j\in \bar{\sigma}(n_k+1)$, thus, $i\in \sigma(k+1)$. So to proof that $\sigma$ is fair, we need to show that the subsequence $\{\bar{\ell}^{n_k}\}$ is infinite, but from the proof of lemma \ref{lem:wc}, is it was finite then $\bar{\sigma}$ is not fair. 

Finally, we claim that $A$ oscillates on $\ell^0$ and $\sigma$, defined above. Assume that $A$ converges. Note that this implies that from some point there are no weakly but not strongly consistent labelings, since the labeling $A$ converges to has a strongly-consistent corresponding labeling, and from lemma \ref{lem:wc}, the transition from strongly-consistent to non strongly-consistent labeling implies a change in the label of at least one metanode. Since from some point there are no weakly-consistent labelings, and from the definition of $\sigma$, $i$ changes its label if and only if $i$'s metanode changes its label, thus, $\bar{A}$ converges, a contradiction.

\end{proof}

\begin{theorem}
If $\bar{A}$ is not $r$-stabilizing then $A$ is not $r$-stabilizing. 
\end{theorem}
\begin{proof}
Following the proof \ref{prf:kn3kn}, note that every scheduling for which $\bar{A}$ oscillates, has a corresponding scheduling with only strongly-consistent labelings, that oscillates too, and since such scheduling can be translated directly to a scheduling for $A$, the statement holds.
\end{proof}
\begin{theorem}
If $A$ is not $r$-stabilizing then $\bar{A}$ is not $r$-stabilizing. 
\end{theorem}
\begin{proof}
Every scheduling in $A$ can be translated to a scheduling in $\bar{A}$ by activating only complete metanodes. 
\end{proof}
 
\end{proof}

\section{Proofs for Section~\ref{sec:os}}\label{apdx:os}

\begin{theorem}
	$\textup{OS}^u_{\log}\equiv \lpoly$.
\end{theorem}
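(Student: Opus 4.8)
The plan is to prove the two inclusions separately, using the standard identification of $\lpoly$ with the class of languages computed by polynomial-size branching programs (noted already in the introduction).

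For $\textup{OS}^u_{\log}\subseteq\lpoly$, the key observation is that a synchronous unidirectional ring is governed by a single self-map of the label space. Writing $g_i:\Sigma\to\Sigma$ for the label part of $\delta_i(\cdot,x_i)$ and $G=g_n\circ\cdots\circ g_1$, one checks that the label $a_t$ on the wrap-around edge $(n,1)$ satisfies $a_t=G(a_{t-n})$ for all $t\geq n$, so the subsequence $a_0,a_n,a_{2n},\dots$ is exactly the orbit of $G$ from $a_0$. Since $|\Sigma|=2^{O(\log n)}=\poly(n)$, this orbit enters a cycle within $|\Sigma|$ iterations. I would hand the protocol's reaction functions to the machine as the (polynomial-length) advice, and then have a logspace machine evaluate $\sigma^*=G^{|\Sigma|}(\sigma_0)$ from a fixed seed $\sigma_0$ by the obvious loop -- maintaining only a current label in $\Sigma$ and two counters bounded by $\poly(n)$ -- and output the output bit of $\delta_1(\sigma^*,x_1)$. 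Because the protocol output-stabilizes from the initialization that places $\sigma_0$ on edge $(n,1)$, node $1$'s limiting output equals $f(x)$, and this limiting value must be constant over the entire $G$-cycle; hence reading it off at $\sigma^*$ is correct. This uses space $O(\log n)$ with polynomial advice, and, crucially, it copes with even exponential round complexity, since the relevant dynamics collapse to iterating one function on a polynomial-size domain.

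For $\lpoly\subseteq\textup{OS}^u_{\log}$, I would start from a polynomial-size branching program for $f_n$ and first convert it, with only polynomial blow-up, into an \emph{oblivious} program that queries the variables in the fixed cyclic order $x_1,x_2,\dots,x_n,x_1,\dots$: make it leveled, then replace each level by a block of $n$ mini-levels, having a node ``act'' only at the mini-level matching its queried variable and pass through otherwise. The ring then simulates this program by circulating a token whose label encodes the current level $\ell\le L=\poly(n)$ together with the current program node; node $i$ applies the level-$\ell$ transition exactly when $\ell\equiv i-1\pmod n$, advancing one level per time step. I would add a one-bit ``answer'' field that every node copies forward and that node $1$ overwrites with the computed bit whenever a token reaches the final level, and let each node output its answer field. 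Every field has length $O(\log n)$.

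The main obstacle is \emph{self-stabilization}: the initial labeling is adversarial, and because the schedule is synchronous, node $1$ fires at every step and cannot simply ``wait'' for a computation to finish before relaunching it, so the naive reset fails. My plan is a blessing/flushing argument over the $n$ tokens that circulate in lockstep. A token's misalignment (the difference between its level and its position) is invariant under rotation, so a misaligned token is detected in one step, converted to a dedicated $\omega$-like \textsc{invalid} symbol that propagates to node $1$, which then relaunches a correctly-started token. An aligned garbage token looks valid and runs to completion exactly once -- causing at most one spurious answer write -- after which node $1$ resets it to a correct start. I would argue that within $O(L+n)=\poly(n)$ steps every circulating token is a genuine computation of $f_n(x)$, every subsequent completion writes $f_n(x)$ into the answer field, and the answer field then fills the ring, so all outputs converge to $f_n(x)$, yielding an output-stabilizing protocol with logarithmic label complexity. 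Verifying that this flushing terminates from \emph{every} initial labeling, and that stray writes cannot recur once all tokens are blessed, is the delicate part of the argument.
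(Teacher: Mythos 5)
Your proposal is correct, and it splits into two halves relative to the paper. For $\textup{OS}^u_{\log}\subseteq\lpoly$ you do essentially what the paper does: its machine also stores a single label plus two small counters and runs the sequential ``diagonal'' simulation $(\ell,y_j)\leftarrow\delta_j(\ell,x_j)$ for $n|\Sigma|$ steps, with the reaction-function truth tables supplied as advice; your subsequence $a_0,a_n,a_{2n},\dots$ on the wrap-around edge is exactly that diagonal. The only real difference is how the step bound is justified: the paper proves a round-complexity lemma ($R_n\le n|\Sigma|$ on unidirectional rings), whereas you argue that $G^{|\Sigma|}(\sigma_0)$ lies on the eventual cycle of $G$ and that output stabilization forces the output bit of $\delta_1(\cdot,x_1)$ to equal $f(x)$ on every cycle element; both are valid, and yours is slightly more self-contained. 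For $\lpoly\subseteq\textup{OS}^u_{\log}$ your route genuinely differs. The paper simulates a logspace machine with advice directly: each label carries a full machine configuration (including the input-head position), node $1$ alone applies transitions (one per revolution of the ring, so $n$ time steps per machine step), the other nodes merely fill in the queried input bit, and self-stabilization comes from a bounded counter field that node $1$ increments on every revolution and that forces an \emph{unconditional} reset to the start configuration within polynomially many revolutions, no matter what the token contains. You instead pass through oblivious, cyclically-ordered branching programs, distribute transitions to the node owning the queried variable (one level per time step), and handle adversarial initialization by misalignment detection plus letting aligned garbage run to completion. Both work, but the paper's counter trick buys a much shorter stabilization argument: there is no case analysis on kinds of garbage and no ``blessing'' invariant to maintain, which is precisely the delicate part you flag at the end. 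In your scheme the level field implicitly plays the counter's role --- levels strictly increase each step, which is what bounds the run-to-completion of aligned garbage --- and once that is observed your flushing argument does close: after every token has passed through node $1$'s reset once, all subsequent answer writes equal $f_n(x)$ and the answer bits fill the ring. What your version buys in exchange is locality (each node applies its own variable's transitions rather than node $1$ doing all the work) and a faster simulation (one program level per step rather than one machine step per revolution), though neither is needed for membership in $\textup{OS}^u_{\log}$, which imposes no round-complexity bound.
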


\begin{proof}[Proof of Theorem \ref{thm:lpoly}]
We prove this theorem by showing that any family of protocols with logarithmic label complexity on the unidirectional ring can be simulated by a logspace Turing machine with polynomial advice, and vice versa.

\vspace{1em}
\noindent[$\textup{OS}^u_{\log}\subseteq\lpoly$]
Let $\mathcal{L}\in\textup{OS}^u_{\log}$. Then there is some family $A_1,A_2,\ldots$ of protocols with logarithmic label complexity that decide $\mathcal{L}$, meaning that each $A_n$ computes $f_n$, $\mathcal{L}$'s indicator function on $\{0,1\}^n$. We describe a logspace Turing machine $M$ and a polynomial-length advice function $a:\bN\to\{0,1\}^*$ such that, for every $n\in\bN$ and $x\in\{0,1\}^n$, the machine accepts the input $(x,a(n))$ if $x\in\mathcal{L}$ and rejects it otherwise. For each $n$, the advice string $a(n)$ will contain the truth tables of the reaction functions $\delta_1,\ldots,\delta_n$. Since $|\Sigma|$ is polynomial in $n$, the description of the $n$ functions $\delta_i:\Sigma\times\{0,1\}\to\Sigma\times\{0,1\}$ is also of length polynomial in $n$.

    To simulate the protocol $A_n$, the machine will maintain on its work tape a private output variable $y_j$ of length $1$, an incoming label variable of length $\log(|\Sigma|)$, and two counters: $j$, of length $\log(n)$, and $t$, of length $\log(n)+\log(|\Sigma|)$. Let $\ell_0$ be an arbitrary label from $\Sigma$. The machine executes the following procedure.
\begin{myalgo}{Simulation of protocol $A_n$}
	    	$y,t\leftarrow 0$\\
	    	$j\leftarrow 1$\\
	    	$\ell\leftarrow \ell_0$\\
	    	While $t<n|\Sigma|$:\\
	    	\qquad $(\ell,y_j) \leftarrow \delta_j(\ell,x_j)$\\
	    	\qquad $j\leftarrow j\pmod n + 1 $\\        
	    	\qquad $t\leftarrow t+1$\\
	    	Output $y$
\end{myalgo}
After the $t$\textsuperscript{th} iteration of this loop, $\ell$ and $y$ will reflect the outgoing label and output of node $t\bmod n$ after the protocol $A_n$ has been run for $t$ time steps on input $x$ and initial labeling $(\ell_0,\ldots,\ell_0)$. 

We claim 
each node's output converges to $f_n(x)$ within $n|\Sigma|$ time steps, so at the end of this procedure we have $y=f_n(x)$. Thus $M$ decides $\mathcal{L}$ using logarithmic space and polynomial advice, i.e., $\mathcal{L}\in\lpoly$. Our claim follows from the following lemma, that establishes the relation between the label complexity and round complexity in unidirectional rings. 

\begin{lemma}\label{prop:rbound}
Let $G=([n], E)$ be the unidirectional ring and $A_n=(\Sigma, \delta)$, a stateless protocol. Then,

\begin{enumerate} 
\item $R_n \leq n|\Sigma| = n2^{L_n}$
\item There exists a protocol with $R_n = n(|\Sigma|-1)$. 
\end{enumerate}

\end{lemma}
\begin{proof} 
\begin{enumerate}
\item  Observe that if $\ell_{-1}^{n} = \ell_{-1}^{0}$ then $\ell_{-2}^{n+1} = \ell_{-2}^{1}$. Thus, at time $2n$, we have that $\ell_{-1}^{2n} = \ell_{-1}^{n}$. From the same reasoning, if $\ell_{-1}^{n} \neq \ell_{-1}^{0}$ but there exists $k\in\mathbb{N}$ for which $\ell_{-1}^{kn} = \ell_{-1}^{0}$ then for every $j\in \mathbb{N}$, $\ell_{-1}^{jkn} = \ell_{-1}^{0}$. 

	Since $R_n$ is the round complexity of $A_n$, we know that for some input assignment and some labeling, the convergence time is $R_n$. We initialize the system with this configuration, and look at the system at time steps $0, n, 2n, \ldots, |\Sigma|n$.
We define for every $k\in \{0, 1, \ldots, |\Sigma|\}$ the set:
\[B_{kn} = \{ i : \ell_{-i}^{kn}\notin\{ \ell_{-i}^{0}, \ell_{-i}^{n}, \ldots, \ell_{-i}^{(k-1)n} \}  \}.\] 
From the observation above, if a node $j\notin B_{kn}$ then $j\notin B_{rn}$ for any $r>k$. We claim that if $B_{kn}$ is empty then the system is in oscillation. Assume that at time $kn$, $B_{kn} = \emptyset$. It means that for every node $i$ there is $r_i\in \mathbb{N}$ and $(\ell_{-i}^0, \ldots, \ell_{-i}^{r_i-1})\in \Sigma^{r_i}$ so that i's incoming label at time $(k+j)n$ is $\ell_{-i}^{j \text{ mod } r_i}$, for every $j$. 
Denote by $\ell^{kn}$, the labeling at time $kn$, and we define $r=\Pi_i r_i$. Note that $\ell^{(k+r)n} = \ell^{kn}$ hence, the system is oscillating,
and since $A_n$ is output stabilizing the system has reached (output) convergence. Finally, note that $B_{|\Sigma|n}$ must be empty since for every $i$, $r_i\leq|\Sigma|$, thus, $R_n\leq n|\Sigma|$.

\item  We now show an algorithm $A_n$ with $R_n=n(|\Sigma|-1)$. 
\begin{itemize}
\item $\Sigma = [q]$.
\item The reaction function of node $1$ is:
\[\delta_1(\ell_{-1},x_1)=\left\{
	 	\begin{array}{ll}
		 	(q-1, 1)&\ \text{if}\ \ell_{-1}=q-1\\
		 	(\ell_{-1}+1, 0)&\ \text{otherwise}\,.
		\end{array}\right
	.\]
\item The reaction function of node $i\neq 1$ is:
\[\delta_i(\ell_{-i},x_1)=\left\{
	 	\begin{array}{ll}
		 	(q-1, 1)&\ \text{if}\ \ell_{-i}=q-1\\
		 	(\ell_{-i}, 0)&\ \text{otherwise}\,.
		\end{array}\right
	.\]

\end{itemize}
We initialize the system with an arbitrary input assignment (as the reaction function only depends on the incoming label) and initialize the labeling with $\ell_{(1,2)}=0$ and for every other label, $\ell=0$. It is not hard to see that every $n$ time steps the outgoing label of node $1$ increases by $1$ and that only at time $t_{(q-1)n}$ all outputs stabilize. Thus, $R_n=n(q-1) = n(|\Sigma|-1)$.
\end{enumerate}
\end{proof}

    \noindent[$\lpoly\subseteq\textup{OS}^u_{\log}$] Now let $\mathcal{L}\in\lpoly$. Then there exist functions $s:\bN\to\bN$, $a:\bN\to\{0,1\}^*$ and a Turing machine $M$ that decides $\mathcal{L}$ such that $s(n)=O(\log(n))$, and $M$ accepts $(x,a(n))$ if and only if $x\in\mathcal{L}$.
	
	Fix $n$ and the corresponding advice string $a(n)$, and let $Z=Q\times\{0,1,\textvisiblespace\}^{s(n)}\times[s(n)]\times[n]$, the set of all possible configurations of $M$'s work tape and input tape head on inputs of size $n$. For each $(q,w,k,j)\in Z$, $q$ is the machine state, $w$ is the content of the work tape, $k$ is the position of the work tape head, and $j$ is the position of the input tape head. Define the initial configuration $z_0=(q_0,\textvisiblespace^{s(n)},1,1)$, and define a function $F:Z\to\{0,1\}$ by $F(q,w,k,j)=1$ if and only if $q$ is an accepting state. The transition function of $M$ induces a partial function $\pi:Z\times\{0,1\}\to Z$ such that if $M$ is in configuration $z$ and reads input bit $b$, then its next configuration will be $\pi(z,b)$.
	 
	 We now describe the protocol used to simulate $M$ on inputs of size $n$. The label space is $\Sigma= Z\times\{0,1\}\times[|Z|]\times\{0,1\}$.  The first two coordinates of a label $(z,b,c,o)\in\Sigma$ correspond to the domain of the function $\pi$. The third, $c$, is a counter used to periodically reset the initialization in our simulation of $M$, and the last, $o$, communicates the current output.
	 
	Informally, in our protocol node $1$ will run $n$ simulations of $M$ in parallel, and it will be responsible for initializing and updating configurations in these simulations, while the other nodes will only answer queries about their input bits and forward information. Formally, node $1$ has reaction function
	 \[\delta_1((z,b,c,o),x_1)=\left\{
	 	\begin{array}{ll}
		 	((\pi(z,b),x_1,c+1,o),o)&\ \text{if}\ c<|Z|\\
		 	((z_0,x_1,0,F(z)),F(z))&\ \text{otherwise}\,,
		\end{array}\right
	.\]
	and each node $i=2,\ldots,n$ has reaction function
	\[\delta_i((z,b,c,o),x_i)=\left\{
		\begin{array}{ll}
			((z,x_i,c,o),o)&\ \text{if the input tape head position in $z$ is $i$}\\
			((z,b,c,o),o)&\ \text{otherwise}\,.
		\end{array}\right.
	\]
	Let $r\in[n]$, and consider the sequence of outgoing labels of node $1$ at time steps $n+r,2n+r,3n+r,\ldots$ when this protocol is run on input $x\in\{0,1\}^n$. Regardless of the initial labeling, the counter $c$ in this label sequence will eventually reach $|Z|$ at time $qn+r$, for some $q\in\bN$. After that, the label sequence will reflect the first $|Z|$ configurations of $M$ when it is properly initialized and run on $x$ with advice $a(n)$. Thus, at time $(q+|Z|)n+r$, node 1 will output 1 if and only if the simulation of $M$ is in an accepting configuration at time $|Z|-1$, i.e., if and only if $x\in\mathcal{L}$. Node 1 will then re-initialize the simulation, but for every $k\geq q+|Z|$, it will always have the correct output in time step $kn+r$. Additionally, because node 1 passes on its output as part of its label, each node $i\in[n]$ will have the correct output at time step $kn+r+i-1$. This argument applies for every $r\in[n]$, so every node's output will eventually converge to the correct value. We conclude that $\mathcal{L}\in \text{OS}_{\log}^u$.
\end{proof}


\begin{theorem}
	$\widetilde{\textup{OS}}^b_{\log} \equiv \ppoly$.
\end{theorem}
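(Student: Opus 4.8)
The plan is to prove the two inclusions $\widetilde{\textup{OS}}^b_{\log}\subseteq\ppoly$ and $\ppoly\subseteq\widetilde{\textup{OS}}^b_{\log}$ separately.

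\textbf{Containment in $\ppoly$.} First I would simulate any protocol family in $\widetilde{\textup{OS}}^b_{\log}$ by polynomial-size circuits with polynomial advice. Fix such a family $A_1,A_2,\ldots$, where each $A_n$ runs on the bidirectional $p(n)$-ring with $|\Sigma|=2^{O(\log n)}=\poly(n)$ and round complexity $R_n=\poly(n)$. A global configuration is just an edge labeling in $\Sigma^{E}$ with $|E|=2p(n)$, hence described by $\poly(n)$ bits. In the synchronous setting the one-round transition $\ell\mapsto\delta(\ell,x,[p(n)])$ is computed node-by-node; since each node has in-degree $2$ on the ring, each reaction function $\delta_i$ is a lookup table of size $|\Sigma|^{2}\cdot 2=\poly(n)$, which I would hard-wire into the circuit as part of the advice string $a(n)$. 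Stacking $R_n$ copies of this one-round transition subcircuit yields a circuit of total size $R_n\cdot\poly(n)=\poly(n)$ that, starting from a fixed initial labeling and feeding the input bits to the first $n$ nodes (and fixed bits to the helper nodes), outputs node $1$'s output bit after $R_n$ rounds. By the definition of $R_n$ this bit has already converged to $f_n(x)$, so the circuit decides $\mathcal{L}$, giving $\mathcal{L}\in\ppoly$.

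\textbf{Simulating circuits.} For the converse I would fix, for each $n$, a Boolean circuit $C_n$ of size $s=\poly(n)$ computing $f_n$, and build a protocol on an odd bidirectional ring of size $p(n)=\Theta(s)$ that simulates $C_n$. The engine of the simulation is the stateless $D$-counter of Claim~\ref{clm:counter}: taking $D=\poly(n)$ and label complexity $O(\log D)=O(\log n)$, after an initial transient all nodes share a synchronized global clock cycling through $0,1,\ldots,D-1$. The crucial observation is that, once this clock is in place, the ring together with its $O(\log n)$-bit edge labels is exactly a one-dimensional synchronous systolic array whose ``memory'' lives on the edges and whose cell logic is implemented by the reaction functions, so statelessness is not an essential obstruction. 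On top of the clock I would evaluate $C_n$ gate-by-gate in topological order, using the two ring directions to route the (constant fan-in) operands of each gate so that they arrive simultaneously at the node computing that gate; a value computed at one clock phase is kept in flight on the edges until it is consumed, and since $p(n)=\Theta(s)$ there is enough room on the ring to hold all pending wire values, each in a label field of $O(1)$ bits alongside the $O(\log n)$-bit clock. Pipelining these routings lets one full evaluation of $C_n$ complete within a single clock period of $\poly(n)$ rounds.

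\textbf{Output convergence and self-stabilization.} To obtain \emph{output} (rather than label) stabilization I would reserve a dedicated ``result'' field in each label that is refreshed, once per period, with the freshly computed value of the output gate; because the input $x$ is fixed, every period recomputes the same value $f_n(x)$, so this field becomes a fixed point once it holds $f_n(x)$ on every edge, and the designated output node can then emit $f_n(x)$ forever. I expect the main obstacle to be precisely this last point: arguing that the protocol self-stabilizes, i.e., converges to the correct output from an \emph{arbitrary} initial labeling. The counter construction guarantees that the clock field washes out any initial garbage within $O(p(n))$ rounds, and I would then need to show that, after the clock synchronizes, the routing-and-evaluation schedule deterministically reconstructs all wire values independently of the initial labels, so that the result field is overwritten with $f_n(x)$ and stabilizes. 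Managing the timing of the in-flight values (so that each gate's operands meet exactly when the clock phase dictates) and verifying robustness of this schedule against arbitrary initialization is the delicate part of the argument.
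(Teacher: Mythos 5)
Your proposal is correct and follows essentially the same route as the paper's proof: the forward inclusion by unrolling the synchronous protocol into $R_n$ layers of hard-wired reaction-function subcircuits, and the reverse inclusion by building an odd ring around the $D$-counter of Claim~\ref{clm:counter} and evaluating the circuit gate-by-gate in topological order within each clock period, with self-stabilization following from the counter washing out arbitrary initialization. The only divergence is cosmetic: the paper stores each computed gate value in a dedicated ``memory'' node pair via a two-step echo, whereas you keep values in flight in $O(1)$-bit label fields, which is the same idea realized slightly differently.
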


\begin{proof}[Proof of Theorem \ref{thm:ppoly}] 

The first part of the proof shows that any output-stabilizing protocol with polynomial round complexity on the bidirectional ring, can be emulated by a polynomial sized Boolean circuit. The key observation here, is that any function, $f:\{0, 1\}^n\rightarrow f:\{0, 1\}^m$, can be implemented via a Boolean circuit whose size is exponential in $n$. We use this fact to build polynomial (in $|V|$) sized circuits that implement the nodes reaction functions.

\vspace{1em}
\begin{proof}[$\logBOS \subseteq \ppoly$] Let $\{A_n\}_{n=1}^{\infty}$, 
  a protocols family computing $\mathcal{L}$. We construct a
  family of polynomial size Boolean circuits $\{C_n\}_{n=1}^{\infty}$
  that satisfy $C_n(x) = 1$ iff $A_n(x)=1$. 
  Circuit construction:

\begin{enumerate}
  
\item We use small circuits, $\{C_{\delta_i}\}_{i=1}^{n}$, serve as building blocks, that compute the reaction function of each
  node. We use the fact that any Boolean function,  $g:\{0,1\}^N\rightarrow
  \{0,1\}^M$, can be computed by a Boolean circuit of size
  $MN2^N$. As the label size is logarithmic, and the in-degree and out-degree is $2$, each circuit, $C_{\delta_i}$, is of size polynomial in $n$. 

\item We use a constant circuit, $C_{0}$, that outputs $|E|L_n = 2nL_n$ zeros, simulating the initial labeling. 

\item Let $T$ be the maximal time, over all inputs, it takes $A_n$ to converge when the labels are initialized to zeros. The circuit is composed of $T$ layers, in each layer $j$, the $n$ circuits that compute the $n$ reaction functions, $\{C_{\delta_i}^j\}_{i=1}^{n}$. The first layer is connected to the initial labeling circuit, $C_0$. Any two consecutive layers, $0<j,j+1$ are connected according to the ring topology.

\item The input of $C_n$ will be $x_1, ... , x_n$ and each $x_i$ will be wired to all circuits $C_{\delta_i}^1, \ldots, C_{\delta_i}^T$. 
 
\item Recall that at convergence, 
  the private output of every node is $f_{\mathcal{L}}(x)$, thus, it 
  is suffices to define the output gate to be the private output of node $1$, $y_1$, at layer $T$. 
  
\item As $T \leq R_n$ is polynomial in $n$, the size of the circuit is also polynomial in $n$.

\end{enumerate}
\end{proof}

For the other direction, we use two protocols that assist us simulating a Boolean circuit. The first is a counter protocol, that we use to carry out a global clock in a form of a counter that counts up to some value, $D$, and repeats it indefinitely. The second, shows how a communication between one node and its neighbor can serve as a method to retain memory throughout the execution. The specification and proof of the $D$-counter protocol appears in Claim~\ref{clm:counter}.

The second assisting protocol shows how to retain memory via a communication, and is based on the counter protocol introduced above. We demonstrate this with an example. Consider a bidirectional ring of size $n=3$, with input vector $(x_1, x_2, x_3)$ and assume that node $2$ wants to save at some point the value $g(x_1, x_2)$, where $g$ is some logic gate. To achieve that, it communicates with node $3$ as follow. First, it sends $3$ a message with $g(x_1, x_2)$ (assumed at this point node $1$ send it its private input) at the $j^{th}$ coordinate, then, $3$ repeatedly updates $\ell_{(3,2)}^{t+1}[j] = \ell_{(2,3)}^t[j]$ and in turn, node $2$ updates,  $\ell_{(2,3)}^{t+1}[j] = \ell_{(3,2)}^t[j]$. From that point, node $2$ has ``saved''  $g(x_1, x_2)$ it computed earlier. 

The main challenge here, is how to do this in a self-stabilization stateless manner. Having a consistent counter makes things easier, and we can define $\delta_2$ to send $\ell_{(2,3)}[j] = g(x_1, x_2)$ if the counter value is either $x$ or $x+1$ ($x\in [D]$), and at other counter values to send $\ell_{(2,3)}^{t+1}[j] = \ell_{(3,2)}^t[j]$, and accordingly, define $\delta_3$ to send  $\ell_{(3,2)}^{t+1}[j] = \ell_{(2,3)}^t[j]$ for any counter value. The reason we need node $2$ to send the value $g(x_1, x_2)$ in two consecutive time steps is that only then, the $j^{th}$ coordinate stabilizes in both $\ell_{(2,3)}$ and $\ell_{(3,2)}$. 

Equipped with the counter protocol and a method for keeping memory throughout a run, we are ready to describe the circuit simulation over an odd sized bidirectional ring.

\vspace{1em}
\begin{proof}[$\ppoly \subseteq \logBOS$]
Let $C_n$ be a Boolean circuit of polynomial size, and fan-in $2$. Let $g_1, g_2, \ldots, g_{|C_n|}$ be the circuit gates, in a topological order.
We build a bidirectional $N$-ring as follow:

\begin{enumerate}

\item The size of the ring is $N=2|C_n|+n$ if $n$ is odd and  $N=2|C_n|+n+1$ if $n$
is even, as we need $N$ to be odd. 

\item The first $n$ nodes are the circuit input.

\item Additionally, for every gate $g_j$ we
have two nodes, $n+2j-1$ and $n+2j$. Node $n+2j-1$ calculates the gate value, and node $n+2j$ ``remembers'' the value, once calculated. 

\item The first label fields, denoted by $F_D$, are used to implement a $D$ counter.  Let $g_j$ be some gate, we define $(1(j), 2(j))$ to be the two vertices that correspond to $g_j$ inputs in $C_n$. For example, if $g_j$ is an AND gate of gate $g_2$ and input bit $x_5$, then, $(1(j), 2(j)) = (n+3, 5)$. We consider the clockwise orientation and define $d_j=\text{dist}(\text{min}\{ 1(j), 2(j) \}, n+2j-1)$, the length of the clockwise path from $g_j$'s minimal input node to $n+2j-1$. The counter value, $D$, is defined as $D =\sum_{j=1}^{|C_n|} (d_j+1) = |C_n|+\sum_{j=1}^{|C_n|} d_j$.

\item In addition to the counter fields we use four more bit fields, and a label is then $(F_D, i_1, i_2, v, o)$, where $i_1, i_2, v, o\in \{0, 1\}$.

\end{enumerate}
We denote $t_{j} = \sum_{k=1}^{j} (d_k+1)$.
We partition the counter values to $|C_n|$ intervals, where the $j$ interval is: 
\[
I_j = \{ t_{j-1} ,\ldots, t_j - 1 \}.
\]

The calculation of gate $g_j$ is during the $I_j$ interval, whose size is $d_j+1$, and it is performed as follows. 

All messages propagate in a clockwise direction from $i$ to $i+1$.
If the counter is either $t_{j-1}$ or $t_{j-1}+1$ then node $k = \text{min}\{ 1(j), 2(j) \}$ changes the $i_1$ field of $\ell_{(k, k+1)}$ to its value bit. If $k$ is an input node then the value bit is its input bit, $x_k$, and if $k$ is a gate node then it means it has calculated the value before (as the order or execution is topological) and it remembers it via the $v$ bit field from its ``memory'' node, $k+1$.

If the counter value is either $t_{j-1}+\text{dist}(1(j), 2(j))$ then the second input node, $k'$, updates $i_2$ in $\ell_{(k', k'+1)}$ according to its type (input node or a gate).
If the counter value is either $t_j-2$ or $t_j-1$ then, node $n+2j-1$ updates the $v$ field on $\ell_{(n+2j-1, n+2j)}$ with $v=g_j(i_1, i_2)$.
For any counter value, node $n+2j$ updates the $v$ field on $\ell_{(n+2j, n+2j-1)}$ from the $v$ field on $\ell_{(n+2j-1, n+2j)}$. From the method for keeping memory we saw, we are guaranteed that from this point and on, $v = g_j(i_1, i_2)$.

Finally, node $n+2|C_n|$ which sees at the $v$ field of $\ell_{(n+2|C_n|-1, n+2|C_n|)}$ the value of gate $g_{|C_n|}$, which is also the value of the circuit output, updates the $o$ field in both of its outgoing labels to $v$, for any counter value. The rest of the nodes propagate this value, thus, at convergence, the $o$ value converges to the circuit output, and every node in the system can update its private output to this value, as required. 

The round complexity is $R_N = 4N+D+N$, since  and the label complexity is $L_n=3\log(D)+6$. As $D\leq N+N^2$, and $N$ is polynomial in $n$ (since  the circuit size is polynomial in $n$) we get that the round complexity is polynomial in $n$ and the label complexity is logarithmic in $n$.
\end{proof}

\end{proof}

\section{Proofs for Section~\ref{sec:ls}}\label{apdx:lsbound}

\begin{theorem}
	Let $m\in\bN$, let $f:\{0, 1\}^{n} \rightarrow \{0, 1\}$ be a function, and let $G=([n],E)$ be a directed graph. Define $C=\{(i,j)\in E:i\leq m<j\}$ and $D=\{(i,j)\in E:j\leq m<i\}$,
	the sets of edges out of and into the node subset $[m]$. Suppose $f$ has a fooling set $S\subseteq\{0,1\}^{m}\times\{0,1\}^{n-m}$ such that, for every $(x,y),(x^\prime,y^\prime)\in S$,
	\begin{itemize}
		\item if $(i,j)\in C$, then $x_{i} = x^\prime_{i}$, and
		\item if $(i,j)\in D$, then $y_i = y^\prime_i$.
	\end{itemize}
	Then every label-stabilizing protocol on $G$ that computes $f$ has label complexity at least $\frac{\log(|S|)}{|C|+|D|}$.
\end{theorem}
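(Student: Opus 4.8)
The strategy is a fooling-set/cut argument: I will show that, for any label-stabilizing protocol $A=(\Sigma,\delta)$ computing $f$, the restriction of a stable labeling to the cut edges $C\cup D$ is an injective function of the fooling-set element that induced it. Since there are only $|\Sigma|^{|C|+|D|}$ possible such restrictions, injectivity gives $|S|\le|\Sigma|^{|C|+|D|}$, and taking logarithms yields $L_n=\log(|\Sigma|)\ge\log(|S|)/(|C|+|D|)$. The one observation I will use repeatedly is that every stable labeling $\ell$ for an input $z$ has all node outputs equal to $f(z)$: initializing $A$ at the fixed point $\ell$ keeps the labeling constant, so every node's output is constant, and since $A$ computes $f$ these constant outputs must equal $f(z)$. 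By label stabilization, for each $(x,y)\in S$ there is at least one stable labeling; fix one, call it $\ell^{(x,y)}$, and define the fingerprint $\Phi(x,y)=\ell^{(x,y)}|_{C\cup D}\in\Sigma^{C\cup D}$.

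To prove injectivity I would argue by contradiction, assuming $\Phi(x,y)=\Phi(x',y')$ for distinct $(x,y),(x',y')\in S$, and carry out a cut-and-paste on the two stable labelings. Define a labeling $\hat\ell$ for the mixed input $(x,y')$ by setting it to $\ell^{(x,y)}$ on every edge internal to $[m]$, to $\ell^{(x',y')}$ on every edge internal to $[n]\setminus[m]$, and to the common value $\Phi(x,y)=\Phi(x',y')$ on every cut edge in $C\cup D$. The central claim is that $\hat\ell$ is a stable labeling for input $(x,y')$. For a node $i\in[m]$, its incoming labels under $\hat\ell$ (from in-neighbors inside $[m]$, plus its $D$-edges carrying $\Phi(x,y)|_D=\ell^{(x,y)}|_D$) and its input $x_i$ coincide exactly with its situation in $\ell^{(x,y)}$, so $\delta_i$ reproduces the $\ell^{(x,y)}$ outgoing labels on both its internal edges and its $C$-edges; symmetrically, each node $i\notin[m]$ reproduces its $\ell^{(x',y')}$ behavior using input $y'_i$. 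The agreement $\Phi(x,y)=\Phi(x',y')$ makes the cut-labels well defined, and the structural hypotheses on $S$ --- fixing the input $x_i$ at every source of a $C$-edge and $y_i$ at every source of a $D$-edge --- are what guarantee that the label each half emits along the cut matches the value the other half expects, so no node is left off its fixed point.

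By the output observation applied to $\ell^{(x,y)}$ and $\ell^{(x',y')}$, every node of $[m]$ outputs $b$ under $\hat\ell$ and every node outside $[m]$ outputs $b$ as well; hence $\hat\ell$ is a fixed point all of whose outputs are $b$, and the output observation applied to $\hat\ell$ forces $f(x,y')=b$. The mirror-image gluing --- taking $\ell^{(x',y')}$ inside $[m]$ and $\ell^{(x,y)}$ outside --- produces a stable labeling for $(x',y)$ whose outputs are all $b$, forcing $f(x',y)=b$. But the fooling-set property requires $f(x,y')\ne b$ or $f(x',y)\ne b$, a contradiction. Thus $\Phi$ is injective on $S$, and the counting bound stated above completes the proof.

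I expect the main obstacle to be the edge-by-edge verification that $\hat\ell$ really is a fixed point for the mixed input: one must check each of the four edge classes (internal to $[m]$, internal to the complement, $C$, and $D$) and confirm that the emitting endpoint, running $\delta$ on its glued incoming labels and its own half's input, outputs precisely the label $\hat\ell$ assigns --- and it is exactly in the two cut classes that the hypotheses pinning the boundary inputs of $S$ are needed. A secondary but essential point is the clean justification of the ``outputs equal $f$ at any stable labeling'' observation, since the entire contradiction is leveraged through it.
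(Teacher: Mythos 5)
Your proposal is correct and follows essentially the same cut-and-paste argument as the paper's proof: glue the two stable labelings along the cut $C\cup D$, verify node by node that the glued labeling is a fixed point for the mixed input, contradict the fooling-set property, and then count cut labelings to get $|S|\leq|\Sigma|^{|C|+|D|}$. The only cosmetic differences are that you carry out both gluings rather than a without-loss-of-generality choice of $f(x,y')\neq b$, and that you isolate the ``outputs at a stable labeling equal $f$'' observation as an explicit lemma, whereas the paper folds it into the convergence argument.
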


\begin{proof}[Proof of Theorem~\ref{thm:lowerbound}]
	Assume that $m$, $f$, $G$, $C$, $D$, and $S$ are as in the theorem statement. Let $A_n=(\Sigma,\delta)$ be a label-stabilizing protocol that computes $f$ on $G$.

	Let $(x,y),(x',y')\in S$ be distinct inputs, and assume without loss of generality that $f(x,y')\neq b$. Let $\ell^1, \ell^2\in E^{\Sigma}$ be any two (possibly identical) labelings. Let $\hat{\ell}^1$ and $\hat{\ell}^2$ be the eventual labelings when $A_n$ is run on input $(x,y)$ and initial labeling $\ell^1$, or input $(x',y')$ and initial labeling $\ell^2$, respectively. 
	
	Now assume for contradiction that, for every $(i,j)\in C\cup D$, we have $\hat{\ell}^1_{(i,j)}=\hat{\ell}^2_{(i,j)}$. Run $A_n$ on input $(x,y')$ with initial labeling $\ell^0$ given by
	\[\ell^0_{(i,j)}=\left\{\begin{array}{ll}\hat{\ell}^1_{(i,j)}&\ \text{if}\ i\in[m]\\\hat{\ell}^2_{(i,j)}&\ \text{otherwise}\,\end{array}\right.\]
	Consider any node $j\in[m]$. Every incoming edge $(i,j)$ is labeled by $\hat{\ell}^1_{(i,j)}$ if $i\in[m]$ and by $\hat{\ell}^2_{(i,j)}$ otherwise. But if $i\not\in[m]$, then $(i,j)\in D$, so $\hat{\ell}^2_{(i,j)}=\hat{\ell}^1_{(i,j)}$. Thus, the inputs to $j$'s reaction function $\delta_j$ are exactly what they would be if the global input and labeling were $(x,y)$ and $\hat{\ell}^1$, which means that $\delta_j$ is at a fixed point and will output $f(x,y)$. Similarly, each node $j'\in [n]\setminus[m]$ receives the same incoming labels and input as if the global input and labeling were $(x',y')$ and $\hat{\ell}^2$, and is therefore at a fixed point and outputting $f(x',y')$. It follows that the protocol is at a global fixed point in which some nodes output $f(x,y)\neq f(x,y')$, contradicting the assumption that $A_n$ computes $f$.
	
	We conclude that there is some $(i,j)\in C\cup D$ such that $\hat{\ell}^1_{(i,j)} \neq \hat{\ell}^2_{(i,j)}$, and this must be true for every choice of $(x,y),(x',y')\in S$ and $\ell^1,\ell^2$. It follows that there is a one-to-one mapping from inputs in $S$ to labelings of the edges in $C\cup D$, i.e., $|S|\leq |\Sigma|^{|C\cup D|}$, so $\log(|\Sigma|)\geq \log(|S|)/(|C|+|D|)$.

\end{proof}

\begin{corollary}
Every label-stabilizing protocol that computes $\textsc{EQ}_n$ on the bidirectional $n$-ring has label complexity at least $\frac{n-2}{8}$.
\end{corollary}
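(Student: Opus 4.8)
The plan is to invoke the fooling-set lower bound of \thmref{thm:lowerbound}, applied to $\textsc{Eq}_n$ with the cut that separates the two halves being compared. Concretely, I would take $m=n/2$, so that $[m]=\{1,\ldots,n/2\}$ carries the first half $(x_1,\ldots,x_{n/2})$ and its complement carries the second half $(x_{n/2+1},\ldots,x_n)$, which $\textsc{Eq}_n$ tests for equality. On the bidirectional $n$-ring this contiguous arc is separated from the rest by exactly two ``gaps'', namely the adjacent pairs $\{n/2,n/2+1\}$ and $\{n,1\}$; each gap contributes one directed edge in each direction, so $C=\{(n/2,n/2+1),(1,n)\}$ and $D=\{(n/2+1,n/2),(n,1)\}$, giving $|C|=|D|=2$ and $|C|+|D|=4$.

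Next I would build the fooling set. Since every input on the diagonal $\{(w,w):w\in\{0,1\}^{n/2}\}$ satisfies $\textsc{Eq}_n=1$, and any two distinct diagonal inputs $(w,w),(w',w')$ satisfy $\textsc{Eq}_n(w,w')=0$, every subset of the diagonal is automatically a fooling set for the value $b=1$. The theorem additionally requires that the source (tail) coordinate of each cut edge be constant across the set; these sources are exactly the four boundary coordinates $x_1,x_{n/2},x_{n/2+1},x_n$. On the diagonal one has $x_{n/2+1}=x_1$ and $x_n=x_{n/2}$, so the four constraints collapse to fixing only the two pattern bits $w_1$ and $w_{n/2}$. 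Pinning these (say to $0$) leaves the remaining $n/2-2$ bits free, yielding a valid fooling set $S$ with $|S|=2^{n/2-2}$ that meets the boundary condition of \thmref{thm:lowerbound}.

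Finally, \thmref{thm:lowerbound} gives label complexity at least $\frac{\log_2|S|}{|C|+|D|}=\frac{n/2-2}{4}=\frac{n-4}{8}$, which is the stated linear lower bound (matching $\frac{n-2}{8}$ up to the exact number of boundary bits that must be held fixed). I do not anticipate a serious obstacle: the only step needing genuine care is the bookkeeping at the cut, namely verifying that a contiguous arc of the ring is crossed by precisely four directed edges, correctly identifying their source coordinates with the tail endpoints, and checking that the equality relations on the diagonal render two of the four constancy constraints redundant so that an exponentially large admissible fooling set survives. Everything else is direct substitution into \thmref{thm:lowerbound}.
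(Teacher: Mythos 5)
Your approach is the same as the paper's: apply \thmref{thm:lowerbound} with $m=n/2$, the four directed edges crossing the two boundaries of the arc $\{1,\ldots,n/2\}$, and a diagonal fooling set for $\textsc{Eq}_n$ with value $b=1$. Your identification of the cut is in fact more careful than the paper's own (the paper writes $C=\{(n,1),(n/2+1,n/2)\}$ and $D=\{(1,n),(n/2,n/2+1)\}$, which swaps the roles of $C$ and $D$ relative to the theorem's definition), and your observation that on the diagonal the four tail-constancy constraints collapse to fixing the two bits $w_1$ and $w_{n/2}$ is exactly right.

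The one discrepancy is the constant: you obtain $(n-4)/8$, while the corollary claims $(n-2)/8$, and this shortfall is unavoidable under your (faithful) reading of the theorem, since enforcing all four constancy hypotheses caps a diagonal fooling set at size $2^{n/2-2}$. The paper instead takes $S=\{(x,x):x_1=1\}$ of size $2^{n/2-1}$, i.e., it fixes only $x_1$ (which on the diagonal also pins node $n/2+1$) and leaves the coordinates of nodes $n/2$ and $n$ free --- so the paper's own fooling set violates two of the stated hypotheses of \thmref{thm:lowerbound}. The resolution is that those hypotheses are never actually used: in the proof of \thmref{thm:lowerbound}, the gluing/fixed-point argument needs only that the two stable labelings $\hat{\ell}^1$ and $\hat{\ell}^2$ agree on $C\cup D$; the private inputs of the boundary nodes never enter, because each node $j\in[m]$ has input $x_j$ both in the hybrid run on $(x,y')$ and in the run on $(x,y)$, and each node outside $[m]$ has input $y'_j$ both in the hybrid run and in the run on $(x',y')$. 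Hence any fooling set is admissible, and you may take the full diagonal $\{(w,w):w\in\{0,1\}^{n/2}\}$ of size $2^{n/2}$, which gives the bound $n/8\geq (n-2)/8$ and establishes the stated corollary with room to spare. So your proof is sound and essentially the paper's, but as written it lands two bits short of the advertised constant; either drop the superfluous constancy constraints (justified as above) or state the marginally weaker bound $(n-4)/8$.
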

\begin{proof}[Proof of Corollary \ref{cor:eq}]
Use $m=n/2$, $C=\{(n,1), (n/2+1, n/2)\}$, and $D=\{(1,n)(n/2,n/2+1)\}$. Define a fooling set of size $2^{n/2-1}$ by $S = \{ (x,x) : x\in \{0, 1\}^{n/2} \text{  and  }
x_1=1\}$.  
\end{proof}\\

\begin{corollary}
Every label-stabilizing protocol that computes $\textsc{Maj}_{n}$ on the bidirectional $n$-ring has label complexity at least $\log(\lfloor n/2\rfloor)/4$.
\end{corollary}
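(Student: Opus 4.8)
The plan is to apply the fooling-set lower bound of \thmref{thm:lowerbound} to the bidirectional $n$-ring with the node set split into a contiguous arc $[m]=\{1,\ldots,m\}$ and its complement $\{m+1,\ldots,n\}$, for a value of $m$ close to $n/2$ to be fixed later. With this split, the only edges crossing the cut are the two reciprocal pairs at the boundary between $m$ and $m+1$ and at the boundary between $n$ and $1$, so $C=\{(m,m+1),(1,n)\}$ and $D=\{(m+1,m),(n,1)\}$, giving $|C|=|D|=2$ and hence $|C|+|D|=4$. This is exactly the denominator appearing in the target bound. The constraints imposed by \thmref{thm:lowerbound} then force every element of the fooling set to agree on the four input coordinates sitting at the two cuts, namely $x_1,x_m$ (from the edges in $C$) and $y_{m+1},y_n$ (from the edges in $D$); all remaining $n-4$ interior coordinates are free to vary.

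Next I would construct a fooling set for $\textsc{Maj}_n$ based on Hamming weight, with target value $b=1$. Write $N=\lceil n/2\rceil$ for the majority threshold, so that $\textsc{Maj}_n=1$ exactly when the input has weight at least $N$. I fix the four boundary coordinates to prescribed constants and take $S$ to consist of inputs of total weight \emph{exactly} $N$, parametrized by how much of that weight lies in the arc $[m]$: for each admissible value $a$, the set $S$ contains one input whose restriction to $[m]$ has interior weight $a$ and whose restriction to the complement carries the complementary weight. Every such input has weight exactly $N$, so $\textsc{Maj}_n=1$ on all of $S$. The fooling property is immediate from the threshold: given two elements of $S$ with arc-weights $a<a'$, the crossed input that takes the arc part from the $a$-element and the complement part from the $a'$-element has total weight $a+(N-a')=N-(a'-a)<N$, so its majority value is $0\neq b$. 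Thus for any two distinct elements at least one crossed pair falls below threshold, as the definition of a fooling set requires, and by construction all elements agree on $x_1,x_m,y_{m+1},y_n$.

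Finally I would count the admissible parameters $a$. The interior of the arc has $m-2$ free coordinates and the interior of the complement has $n-m-2$, and fixing the four boundary coordinates contributes a constant $s$ to the weight, so $a$ ranges over the integers with $0\leq a\leq m-2$ and $0\leq (N-s)-a\leq n-m-2$. Choosing $m$ near $n/2$ and the boundary values so that $s$ is a small constant makes this range as large as possible and yields $|S|\geq\lfloor n/2\rfloor$, whence \thmref{thm:lowerbound} gives label complexity at least $\log(|S|)/4\geq\log(\lfloor n/2\rfloor)/4$. I expect the only delicate point to be this last bookkeeping: the four coordinates frozen by the $C$/$D$ constraints eat into the weight range, so $m$ and the boundary assignment must be chosen carefully to keep the achievable arc-weights spanning $\lfloor n/2\rfloor$ values while still pinning every total weight exactly at the threshold $N$. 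Everything else — the edge count across the cut and the weight-drop verification of the fooling property — is routine.
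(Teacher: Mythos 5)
Your proposal follows essentially the same route as the paper's proof: both apply \thmref{thm:lowerbound} with the ring cut into two arcs of length roughly $n/2$ (so that $|C|+|D|=4$), and both use a fooling set of inputs whose total weight sits exactly at the majority threshold $N$, parametrized by how that weight splits across the cut, with the identical crossing argument (crossing a lighter arc with a heavier one drops the total weight strictly below $N$). The paper's set $S=\{(x,\bar x):x=(1,1^k0^{m-1-k}),\ k=0,\ldots,m-1\}$ (with an extra fixed $1$ appended for odd $n$) is just an explicit instantiation of your weight-parametrized family. You also read the agreement hypothesis of \thmref{thm:lowerbound} correctly: it is the \emph{sources} of the cut edges, i.e.\ nodes $1,m,m+1,n$, whose input bits must be frozen.

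The delicate point you flag at the end is, however, a real off-by-one, and it cuts against both you and the paper. Once the four boundary bits are frozen with total weight $s$, distinct fooling-set elements must have \emph{distinct} arc weights: if two elements had equal arc weight, both crossings would again have total weight exactly $N$ and evaluate to $1=b$, violating the fooling property. The arc weight ranges over at most the $m-1$ values $s_x,\ldots,s_x+(m-2)$, and optimizing over $m$ and the boundary assignment (e.g.\ $m=\lfloor n/2\rfloor$, $s=2$ for even $n$) yields exactly $|S|=\lfloor n/2\rfloor-1$ admissible values, never $\lfloor n/2\rfloor$. So your construction proves a lower bound of $\log(\lfloor n/2\rfloor-1)/4$, a hair short of the stated constant. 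You are in good company: the paper's own set attains size $\lfloor n/2\rfloor$ only because its element $k=m-1$ (the all-ones $x$) disagrees with all the others on coordinate $x_m$, the source of a cut edge, so it violates the agreement hypothesis of \thmref{thm:lowerbound}; deleting it leaves precisely the $\lfloor n/2\rfloor-1$ elements your counting produces. Since the loss is inside the logarithm, this is a cosmetic slippage shared by both arguments rather than a defect of your approach.
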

\begin{proof}[Proof of Corollary \ref{cor:maj}]
Use $m=\lfloor n/2\rfloor$, $C=\{(n,1), (n/2+1, n/2)\}$, and $D=\{(1,n)(n/2,n/2+1)\}$. 

Define: 
\[Q = \{ (1, 1^{k}0^{m-1-k})\}_{k=0}^{m-1}\,.\] 
The fooling set for odd sized rings is,  
\[S =\{ (x, \bar{x}, 1) | x\in Q \}\,.\] 
And for even sized rings,
\[S =\{ (x, \bar{x}) | x\in Q \}\,,\] 
where $\bar{x}_i = 1-x_i$ for every $i$. 
The size of the fooling set is $|S| = m = \lfloor n/2\rfloor$.
	 
\end{proof}

\end{document}